\numberwithin{equation}{section}
\theoremstyle{plain} 
\newtheorem{theorem}{Theorem}[section]
\newtheorem{lemma}[theorem]{Lemma}
\newtheorem{proposition}[theorem]{Proposition}
\newtheorem{corollary}[theorem]{Corollary}
\theoremstyle{remark}
\newtheorem{definition}[theorem]{Definition}
\newtheorem{example}[theorem]{Example}
\newtheorem{remark}[theorem]{Remark}
\newcommand{\bthe}{\begin{theorem}}
\newcommand{\ethe}{\end{theorem}}
\newcommand{\ben}{\begin{enumerate}}
\newcommand{\een}{\end{enumerate}}
\newcommand{\bit}{\begin{itemize}}
\newcommand{\eit}{\end{itemize}}
\newcommand{\beq}{\begin{equation}}
\newcommand{\eeq}{\end{equation}}
\newcommand{\ble}{\begin{lemma}}
\newcommand{\ele}{\end{lemma}}
\newcommand{\bde}{\begin{definition}\rm}
\newcommand{\ede}{\halmos\end{definition}}
\newcommand{\bco}{\begin{corollary}}
\newcommand{\eco}{\end{corollary}}
\newcommand{\bpr}{\begin{proposition}}
\newcommand{\epr}{\end{proposition}}
\newcommand{\brem}{\begin{remark}\rm}
\newcommand{\erem}{\end{remark}}
\newcommand{\bproof}{\begin{proof}}
\newcommand{\eproof}{\end{proof}}
\newcommand{\bexam}{\begin{example}\rm}
\newcommand{\eexam}{\end{example}}
\newcommand{\bfi}{\begin{fig}}
\newcommand{\efi}{\end{fig}}
\newcommand{\btab}{\begin{tab}}
\newcommand{\etab}{\end{tab}}
\newcommand{\beao}{\begin{eqnarray*}}
\newcommand{\eeao}{\end{eqnarray*}\noindent}
\newcommand{\balo}{\begin{align*}}
\newcommand{\ealo}{\end{align*}}
\newcommand{\balm}{\begin{align}}
\newcommand{\ealm}{\end{align}\noindent}
\newcommand{\beam}{\begin{eqnarray}}
\newcommand{\eeam}{\end{eqnarray}\noindent}
\newcommand{\barr}{\begin{array}}
\newcommand{\earr}{\end{array}}
\renewcommand\P{\mathbb{P}}
\newcommand{\R}{\mathbb{R}}
\def\RV{\mathcal{RV}}
\def\binfty{\boldsymbol \infty}
\def\bzero{\boldsymbol 0}
\def\bone{\boldsymbol 1}
\def\bX{\boldsymbol X}
\def\bY{\boldsymbol Y}
\def\bZ{\boldsymbol Z}
\def\bc{\boldsymbol c}
\def\bh{\boldsymbol h}
\def\bl{\boldsymbol l}
\def\bq{\boldsymbol q}
\def\bw{\boldsymbol w}
\def\bx{\boldsymbol x}
\def\by{\boldsymbol y}
\def\bz{\boldsymbol z}
\def\hg{\hat{\gamma}}
\def\balpha{\boldsymbol \alpha}
\def\balpha{\boldsymbol \alpha}
\def\btheta{\boldsymbol \theta}
\def\bkappa{\boldsymbol \kappa}
\newcommand{\stp}{\stackrel{P}{\rightarrow}}
\newcommand{\std}{\stackrel{d}{\rightarrow}}
\newcommand{\RVGC}{\text{RVGC}}
\newcommand{\PGC}{\text{P-GC}}
\DeclareMathOperator*{\argmax}{arg\,max}
\newcommand{\ov}{\overline}
\newcommand{\vague}{\stackrel{\lower0.2ex\hbox{$\scriptscriptstyle
                    \it{v} $}}{\rightarrow}}
\newcommand{\weak}{\stackrel{\lower0.2ex\hbox{$\scriptscriptstyle
                    \it{w} $}}{\rightarrow}}
\newcommand{\what}{\stackrel{\lower0.2ex\hbox{$\scriptscriptstyle
                    \it{\hat{w}} $}}{\rightarrow}}
\newcommand{\eqdis}{\stackrel{\lower0.2ex\hbox{$\scriptscriptstyle
                    \mathrm{d}$}}{=}}
\newcommand{\distr}{\stackrel{\lower0.2ex\hbox{$\scriptscriptstyle
                    \it{d} $}}{\rightarrow}}
\definecolor{darkgreen}{RGB}{0,139,0}
\begin{document}

\begin{frontmatter}

\title{Inference for heavy-tailed data with Gaussian dependence}
\runtitle{Heavy-tailed data with Gaussian dependence}

\begin{aug}
  \author[A]{\fnms{Bikramjit} \snm{Das}\ead[label=e1]{bikram@sutd.edu.sg}\orcid{0000-0002-6172-8228}}
 \address[A]{Engineering Systems and Design, Singapore University of Technology and Design  \printead[presep={,\ }]{e1}}



  \runauthor{B. Das}
\end{aug}

\begin{abstract}
We consider a model for multivariate data with heavy-tailed marginal distributions and a Gaussian dependence structure. The different marginals in the model are allowed to have non-identical tail behavior in contrast to most popular modeling paradigms for multivariate heavy-tail analysis. \linebreak Despite being a practical choice, results on parameter estimation and inference under such models remain limited.
In this article, consistent estimates for both  marginal tail indices and the Gaussian correlation parameters for such models are provided and asymptotic normality of these estimators are established. The efficacy of the estimation methods are exhibited using  \linebreak extensive simulations and then they are applied to real data sets from insurance claims, internet traffic, and, online networks.
\end{abstract}

\begin{keyword}[class=AMS]
\kwd[Primary ]{62H12} 
\kwd{62G32} 
\kwd[; Secondary ]{60G70}
\kwd{62H05}
\end{keyword}

\begin{keyword}
\kwd{asymptotic normality}
\kwd{consistency}
\kwd{Gaussian copula}
\kwd{heavy-tails}
\kwd{tail index estimation}
\end{keyword}


\end{frontmatter}


\section{Introduction}\label{sec:intro} 

Empirical evidence indicates that the behavior of tail probability distribution of variables in many  applications are  roughly power-law-like, or subexponential in nature; this has been observed in  finance (\citet{mandelbrot:1963, ibragimov:prokhorov:2017,smith:2003}), insurance (\citet{embrechts:kluppelberg:mikosch:1997}), hydrology (\citet{anderson:meerschaert:1998}),  social networks (\citet{hofstad:2016,resnick:samorodnitsky:towsley:davis:willis:wan:2016}); see  \citet{clauset:shalizi:newman:2009, gabaix:2009} for a variety of other examples where heavy-tailed data appear. Considering such multivariate data, even if all marginals exhibit tail distributions heavier than normal or exponential, there is no reason to believe that they will all have the same or equivalent tail behavior. Classical approaches to modeling such data is built around assuming either equivalent tail behavior or a marginal transformation to equivalent tail behavior (\citet{resnickbook:2007,beirlant:goegebeur:teugels:segers:2004}); hence a modeling approach allowing univariate heavy-tailed variables to have different behavior across margins may be quite useful. On the matter of modeling dependence between margins, the Gaussian structure has a universal appeal due to its parsimonious and easily interpretable parameter set, as well as its generalizability to any dimension. In this article, a model for multivariate heavy-tailed data is proposed which allows the univariate tail distributions to be non-identical across different margins and the well-known Gaussian copula is used to characterize dependence.  Although Gaussian copula is widely known to have no tail dependence, using tools from extreme value theory we are able to  estimate not only all tail index parameters but also the Gaussian correlation parameters of this  model, attesting for the viability of Gaussian dependence as a modeling choice for multivariate heavy-tailed data with arbitrarily different marginal distributions.

Gaussian dependence with any choice of marginal distributions is described conveniently using \emph{copulas}. For a random vector $\bX=(X^1,\ldots, X^d)\sim F$ with  continuous marginal distributions $F_1,\ldots, F_d$, the associated copula $C:[0,1]^d \to [0,1]$ and survival copula $\widehat{C}:[0,1]^d \to [0,1]$ are defined as {functions} such that $F(\bx)=\P(\bX \le \bx) =C(F_1(x_1), \ldots, F_d(x_d)),$ and $\overline{F}(\bx)=\P(\bX > \bx) =\widehat{C}(\overline{F}_1(x_1), \ldots, \overline{F}_d(x_d)),$ respectively,  for $\bx= (x_1,\ldots,x_d)\in \R^d,$
where $\overline{F}_j=1-F_j\; \forall\, j\in\{1,\ldots,d\}.$ If $\Phi_\Sigma$ denotes a $d$-variate  normal distribution with all marginal means zero, variances one and positive semi-definite correlation matrix $\Sigma\in\R^{d\times d}$, and $\Phi$ denotes a standard normal distribution function, then $$C_{\Sigma}(u_1,\ldots,u_d)= \Phi_\Sigma(\Phi^{-1}(u_1),\ldots, \Phi^{-1}(u_d)), \quad \quad 0< u_1,\ldots, u_d < 1,$$
defines a Gaussian copula with correlation matrix $\Sigma$ with corresponding survival copula $\widehat{C}_{\Sigma}$. 

\subsection{The model}\label{subsec:model}
First we define the particular model of interest which incorporates  heavy-tailed marginals along with a Gaussian copula dependence structure.

\begin{definition}\label{def:rvgc}
An $\R^d$-valued random vector $\bX=(X^1,\ldots,X^d) \sim F$ follows a \emph{Pareto-tailed Gaussian copula distribution} with tail index parameters $\balpha=(\alpha_1,\ldots,\alpha_d) \in (0,\infty)^d$, scaling constant vector $\btheta=(\theta_1,\ldots,\theta_d)\in (0,\infty)^d$, and positive semi-definite correlation matrix $\Sigma$, if the following holds:
\begin{enumerate}[(i)]
    \item For all $j\in\mathbb{I}=\{1,\ldots, d\}$, marginal random variables $X^j\sim F_j$  where $F_j$ is continuous, strictly increasing and satisfies $\overline{F}_j(t):=1-F_j(t)\sim \theta_jt^{-\alpha_j}$ as $t\to\infty$ where $\alpha_j>0$, $\theta_j>0$.
     \item The joint distribution function $F$ of $\bX$ is given by  $$F(\bx) = C_{\Sigma}(F_1(x_1),\ldots,F_d(x_d)), \quad \bx=(x_1,\ldots,x_d)\in\R^d,$$
      where $C_{\Sigma}$ denotes the Gaussian copula with correlation matrix $\Sigma\in\R^{d\times d}$.
\end{enumerate}
We write in brief $\bX \in \PGC(\balpha, \btheta, \Sigma)$ or $F \in \PGC(\balpha, \btheta, \Sigma)$ where some parameters may be ignored for convenience. Alternatively, we call $\bX\in \PGC$ a \emph{Power-law-tailed Gaussian copula} random variable as well.
\end{definition}
For a bivariate $\PGC$ random vector $\bX$,  we  write $\bX\in \PGC(\alpha_1,\alpha_2,\theta_1,\theta_2,\rho)$ or $\bX\in\PGC(\alpha,\alpha_2,\rho)$, since $\Sigma$ is determined by one correlation parameter $\rho$. In this paper we often state the general result in $d$-dimensions but focus on 2-dimensional results for exposition purposes.

\begin{figure}[h]
    \centering
 \includegraphics[width=\linewidth]{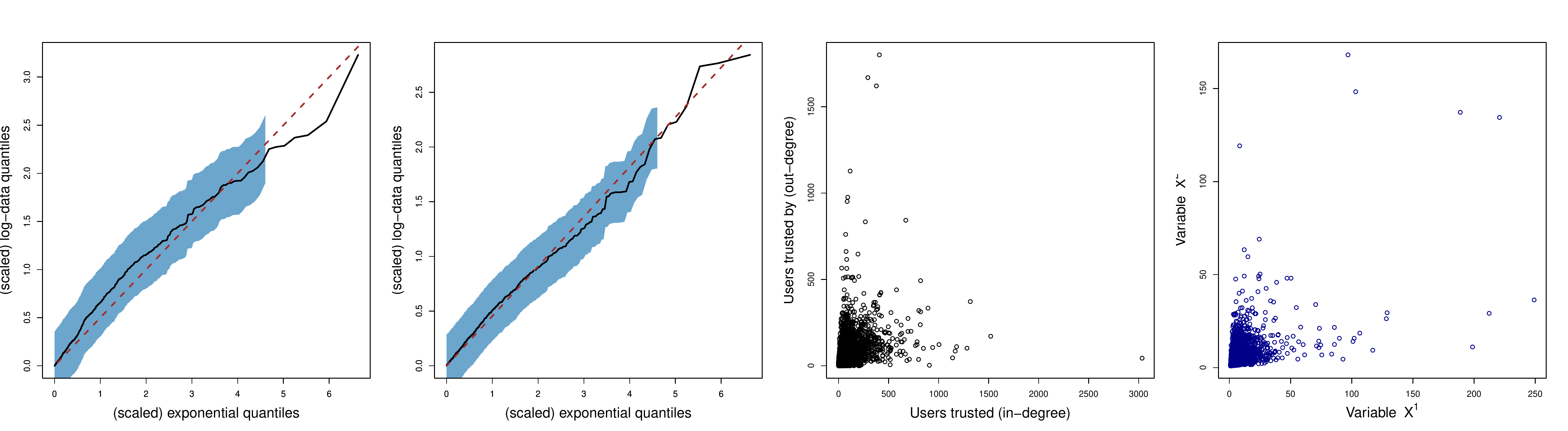} 
 \caption{Epinions user `trust' graph. Left two plots: exponential QQ plot of \underline{users trusted} (in-degree) and \underline{users trusted by} (out-degree) counts respectively; third plot: scatter plot of \underline{users trusted} vs \underline{users trusted by}; fourth plot: scatter plot of $n=75,879$ data points generated from $\PGC(\alpha_1=2, \alpha_2=2.2, \rho=0.8)$.} 
  \label{fig:epinions}
\end{figure}
To illustrate the merit of such a modeling paradigm we consider an online social network data from a  consumer reviews website \emph{Epinions.com}. The data is obtained from \href{https://snap.stanford.edu/}{https://snap.stanford.edu/} and  consists of a network where the nodes represent unique users, and, a directed edge between two nodes (users) indicates that the former user `trusts' the latter user (in the context of consumer reviews). The network has 75,879 nodes and  508,837 edges, and,  we form in-degrees and out-degrees for each node to investigate the global relationship between the two degree distributions. Now consider the bivariate data of in-degree and out-degree for each node. For data exploration purposes, we perform an exponential quantile-quantile plot (QQ plot) of both marginal distributions (the two left plots in \Cref{fig:epinions}). In an exponential QQ plot, quantiles of standard exponential distribution are plotted against logarithm of the sorted data (empirical quantiles); in this case it is done for top 1\% of the data (759 data points) in both plots.
The linearity of such exponential QQ plots, as observed in \Cref{fig:epinions} provides evidence of heavy-tailed behavior for both in- and out- degree distributions; additionally we have constructed 95\% confidence bands around them as well which contains the target line with appropriate slope; see \cite{das:resnick:2008, das:ghosh:2013} for details.  The third plot in \Cref{fig:epinions} is a scatter plot of the count of \emph{Users trusted} vs \emph{Users trusted by} for all the $n=75,879$ nodes (users), while the fourth plot is a scatter plot of $n=75,879$ data points generated randomly from $\PGC(\alpha_1=2, \alpha_2=2.2, \rho=0.8)$, where the parameter values are estimated from the \emph{Epinions} dataset using  techniques developed in this paper.  It is apparent from \Cref{fig:epinions} that the two scatter plots may be generated from the same or closely related multivariate distributions, and along with the two exponential QQ plots, this provides a reasonable premise for modeling such data using the Pareto-tailed Gaussian copula distributions.

A modest generalization of asymptotically Pareto-like or power-law functions, are \emph{regularly varying functions} which have been a useful paradigm for modeling heavy-tailed distributions. A measurable function $f: \R_{+} \to \R_{+}$ is \emph{regularly varying} (at $+\infty$) with some fixed $\beta\in \R$ if $$\lim_{t\to\infty} f(tx)/f(t) =x^{\beta}, \qquad \forall\,x>0.$$
    We write $f\in\RV_{\beta}$ and if $\beta=0$, we call $f$ \emph{slowly varying}. A real-valued random variable $X\sim F$  is regularly varying (at $+\infty$) if the tail 
  $\overline{F}:= 1-F \in \RV_{-\alpha}$ for some $\alpha> 0$.   
Popular examples of such distributions are Pareto, Fr\'ech\'et, Student's $t$; see \citet{embrechts:kluppelberg:mikosch:1997,resnickbook:2007} for details of theory and applications using regular variation.

\begin{remark} On the choice of Pareto-like or power-law tails as marginal distributions in the $\PGC$ model, a couple of remarks are due.
\begin{enumerate}[(a)]
   \item Although regularly varying distribution tails are a natural choice for modeling heavy-tails, for  various popular parametric distributions, like Pareto, Burr, Student's $t$, Fr\'echet, log-gamma, the tail distributions actually behave like a power-law, i.e., $\ov{F}(t) \sim \theta t^{-\alpha}$ for some $\alpha>0, \theta>0$, cf. \cite{embrechts:kluppelberg:mikosch:1997}. In \Cref{table:RVdist}, a list of such distributions is given, each of which also exhibit the so-called second order regular variation property (See \Cref{def:2rv}).
    \item The $\PGC$ distribution is also closely related to the  \emph{regularly varying tailed Gaussian copula distribution}, denoted $\RVGC(\alpha,b,\Sigma)$ as defined in \cite{das:fasen:2023a}. For $\bX\in\RVGC(\alpha,b,\Sigma)$, all marginal distributions are tail equivalent (having the same tail index parameter $\alpha$), which is a restriction from $\PGC$ distributions; on the other hand, the marginal distributions of RVGC random vectors are allowed to be regularly varying, which is a generalization from $\PGC$ distributions. 
\end{enumerate}
\end{remark}

Given an i.i.d sample from $F\in \PGC(\balpha,\btheta,\Sigma)$ our goal is to estimate the model parameters. Towards this end, first we provide asymptotic approximation for extreme tail probabilities of the form
\begin{align}\label{prob:tailset}
    \P(\bX\in tA) = \P(X^i > tx_i, i \in \mathbb{I}), \quad x_i>0, i \in \mathbb{I}
\end{align}
 where $\bX\sim F\in \PGC(\balpha,\btheta,\Sigma)$; this is established using tail asymptotics similar to the one used in \citep{das:fasen:2023a} for $\RVGC$  models. Now using the structure of the tail behavior, we propose consistent estimators for $\balpha$ and $\Sigma$, and finally prove asymptotic normality of the said estimators assuming a further second order regular variation condition on relevant univariate distribution tails.
In the next section, we discuss this particular model choice along with the dependence structure assumption  and briefly reflect on why estimation and inference for such a seemingly natural model has remained relatively unexplored.

\subsection{Motivation and related literature}\label{subsec:motivation}

Classically, heavy-tailed data have often been analyzed under the semi-parametric modeling paradigm of regular variation (\citet{resnickbook:2007}); here the marginal distributions are modeled separately as univariate heavy-tailed distributions whereas the dependence is captured using various parametric and non-parametric methods depending on the user's flavor, see  \cite{dehaan:ferreira:2006, beirlant:goegebeur:teugels:segers:2004,joe:1997} for various modeling paradigms. A traditional and widely adapted modeling choice in such cases is to transform all marginal distributions to have the same (or at least asymptotically equivalent) tail behavior first and then assess (tail) dependence separately \cite{jessen:mikosch:2006,resnickbook:2007,cai:einmahl:dehaan:2011}. Alternatively, one may also employ non-standard regular variation (\citet[Chapter 6.5.6]{resnickbook:2007}), or sparse regular variation (\citet{meyer:wintenberger:2021}), but  modeling and estimation pursuits along these lines are till date quite limited. In the $\PGC$ distribution framework, we refrain from standardizing the marginal tails; note that transforming marginal tails may have a discernible effect on joint tail dependence behavior as well, cf. \cite{das:fasen:2023a,furman:kuznetsov:su:zitikis:2016}.

For characterizing tail dependence, a variety of tools exist including spectral measures (\citet{einmahl:dehaan:sinha:1997}), stable tail dependence functions (\citet{beirlant:goegebeur:teugels:segers:2004}), coefficients of tail dependence (\citet{ledford:tawn:1996}), tail copula (\citet{joe:1997}), etc. Although, under reasonable assumptions there exist estimates for tail risk probabilities using the aforementioned tools, but certain tail probabilities like \eqref{prob:tailset} are rendered negligible by the presence of a property called \emph{asymptotic tail independence}, which holds for a  random vector $\bY=(Y^1, Y^2)$ with identical marginals if $\P(Y^1>t|Y^2>t)\to 0$. This  means both variables $Y^1$ and $Y^2$ do not take high values together and it holds for the bivariate Gaussian copula \cite{sibuya:1960,ledford:tawn:1997}; see \cite{das:fasen:2023a} for further discussion on $d$-dimensional Gaussian copula. Incidentally, if we consider a bivariate power-law tailed model with unequal tail indices, classical regular variation produces an asymptotic limit model which has mass concentrated only on the co-ordinate axis of the heavier-tailed variable of the two, resulting again in a null estimate for joint tail sets (\cite{resnickbook:2007}). Hence such modeling is not helpful for parameter estimation, especially for estimating the correlation parameters.
Referring to the scatter plots in \Cref{fig:epinions}, notice that although high values of both variables do not often occur together validating the said tail independence property, there are a few  observations where we find both variables taking relatively high values, although on a scale smaller than the individual high values. The techniques proposed in this paper are designed to estimate the correlation parameters from such behavior in the data, moreover this can be achieved in any general $d$-dimensions.

From an applications  point of view, Gaussian copula has been used extensively in financial risk management starting from its broad introduction to the financial world by \citet{li:2000} to its notoriety as noted in \cite{salmon:2009}; see \cite{das:embrechts:fasen:2013,Donnelly:Embrechts} for further discussions. Gaussian copula also remains popular in  a wide variety of disciplines, including gene expression studies \cite{liu:fang:lafferty:wasserma:2012}, hydrology \cite{renard:2007}, economic forecasting \cite{wallach:2019}, just to mention a few. 
To summarize, our interest in addressing this problem are the following:
\begin{enumerate}[(i)]
\item There is enough empirical evidence for multivariate heavy-tailed data and Gaussian copula is perhaps the most popular dependence structure; we also observe real data examples where such modeling seem appropriate (see \Cref{sec:realdata}).
    \item Although non-homogeneous heavy-tails seems to be a practical choice, joint estimation of unequal tail indices and tail dependence structures are not often addressed in the literature; in particular since most modeling paradigms assume marginal tail equivalence or a transformation to marginal tail equivalence.
     \item If data is limited, empirical estimation for tail probabilities are quite unreliable; hence a parametric model with appropriate tail parameter estimation may be helpful.
\end{enumerate}

The paper is structured as follows.
We finish \Cref{sec:intro} with a table of notations which is useful throughout the paper. In \Cref{sec:probext} we provide explicit asymptotic expression for probabilities of  tail sets as defined in \eqref{prob:tailset}. The limit results from \Cref{sec:probext} allow us to propose estimators for the various parameters in the \emph{Pareto-tailed Gaussian copula model}  using ideas from extreme value theory. We estimate the tail indices and the correlation parameter in \Cref{sec:parmeterest} and show consistency of the proposed estimators. Asymptotic normality of the estimators are established in \Cref{sec:asynorm}. The efficacy of the estimation procedure is supported by an extensive simulation study in \Cref{sec:simdata}, here we generate data using a variety of heavy-tailed marginal distributions and Gaussian copula with different correlation parameters. Heavy-tailed data are abundant in various disciplines and in \Cref{sec:realdata} we fit the $\PGC$ model  on real data from online networks, insurance claims and internet traffic. Finally we conclude in \Cref{sec:concl} with discussion of the $\PGC$ model, its applicability and possible extensions. Some proofs and auxiliary results are relegated to Appendix \ref{sec:proofs} and \ref{sec:aux}.
\subsection{Notations}\label{subsec:notations}

A summary of relevant notations and concepts used in this paper are given here with references provided as required.
$$ 
\begin{array}{llll}
\RV(\beta) \text{ or } \RV_{\beta} & \text{Regularly varying functions with index $\beta\in\R$.}\\[2mm]
2\RV(\beta,\rho,A) & \text{Second order regularly varying functions with indices $\beta\in\R, \rho\le 0$ and}\\[2mm]
        & \text{scaling function $A$, cf. \Cref{def:2rv}. }\\[2mm]
\mathbb{I} & \text{The index set } \{1,\ldots,d\}.\\[2mm]
\bx\le \by & x_j\le y_j  \text{ for all  $j\in \mathbb{I}$  for vectors } \bx=(x_1,\ldots,x_d)^{\top}, \by=(y_1,\ldots,y_d)^{\top}.\\[2mm]
|S| & \text{Cardinality of the set } S.\\[2mm]
 \bzero, \bone, \binfty & \bzero=(0,\ldots,0)^\top, \bone=(1,\ldots,1)^\top, \binfty=(\infty,\ldots,\infty)^\top. \\[2mm]
  e_i & (0,\ldots,1,\ldots,0)^{\top}, i\in \mathbb{I} \text{ where $e_i$ has only one non-zero entry 1 at}\\[2mm] 
   & \text{the $j$-th co-ordinate.}\\[2mm]
\Sigma_{IJ} & \text{Appropriate sub-matrix of a given matrix } \Sigma\in\R^{d\times d} \text{ for non-empty } \\[2mm] 
            & I,J\subset \mathbb{I}; \text { also }\Sigma_I:=\Sigma_{II}. \quad \text{A similar notation is used for vectors} \\[2mm]
            & \text{like $\bone_I$ and $\bx_I$ as well for subsets } I\subset \mathbb{I}. \\[2mm]        
 z_{(1)} \geq\ldots\geq z_{(n)} \quad   &  \text{The decreasing order statistics of observations } z_1,\ldots,z_n\in \R.\\[2mm]  
 tA & \{t\bx: \bx\in A\} \quad \text{for a constant } t>0 \text{ and } A\subset\R^d_+.\\[2mm]
 f(\bx) & \text{For any $f:\R\to\R$, $\bx\in\R^d$, define } f(\bx)=\left(f(x_1),\ldots, f(x_d)\right),  \\[2mm]
  & \text{ e.g., } \log{\bx}=(\log {x_1},\ldots,\log {x_d}),\,\,  \sqrt{\bx^{-1}}=\left(\frac1{\sqrt{x_1}},\ldots,\frac1{\sqrt{x_d}}\right).\\[2mm]
\end{array}
$$

\section{Probability of extreme tail sets}\label{sec:probext}
First we compute the probability of joint threshold crossings as defined in \eqref{prob:tailset} under a $\PGC$ distribution. This allows us to use tools from extreme value theory to estimate the model parameters. For multivariate normal random vectors, such joint threshold crossings have been studied in detail \citep{dai:mukherjea:2001,elnaggar:mukherjea:1999,hashorva:husler:2003,hashorva:2005,hashorva:2019}. In the context of heavy-tailed random vectors with Gaussian copula dependence, \cite{das:fasen:2023a} computes probabilities like \eqref{prob:tailset} and more general sets assuming the marginal distributions to be regularly varying and tail equivalent; moreover \cite{asmussen:rojas-nandayapa:2008} computes threshold crossings for sum of marginals which are lognormal. The result in this section partially extends \cite[Theorem 3.1]{das:fasen:2023a} to unequal tail indices.  The tail probability approximation given next requires the solution to the quadratic program $\mathcal{P}_{\Sigma^{-1},\sqrt{\balpha}}:  \min_{\{\bz\ge \sqrt{\balpha}\}} \bz^\top \Sigma^{-1}\bz$ for a positive definite matrix $\Sigma$ and $\balpha>\bzero$, which is given in \Cref{lem:qp}; see \Cref{subsec:proofprobext}.

\begin{theorem}\label{thm:mainrect}
    Let  an $\R^d$-valued random vector $\bX\in \PGC(\balpha, \btheta, \Sigma)$ where $\Sigma\in \R^{d\times d}$ is positive definite. Let $\gamma:=\gamma(\Sigma, \sqrt{\balpha})$, $I:=I(\Sigma,\sqrt{\balpha})$, $\bkappa:=\bkappa(\Sigma,\sqrt{\balpha})$, and $h_{i}:=h_i(\Sigma,\sqrt{\balpha}), i\in I$, be defined as  in the solution to the quadratic program $\mathcal{P}_{\Sigma^{-1},\sqrt{\balpha}}$, cf. \Cref{lem:qp}. Then for $\bx>\bzero$, as $t\to\infty$,
    \begin{align}\label{eq:mainlimitmrvA}
       \P(X^i>tx_i, \forall i\in \mathbb{I}) = (1+o(1))\Psi t^{-\gamma} (\log t)^{\frac{\Delta-|I|}{2}}\prod_{i\in I} x_i^{-\sqrt{\alpha_i}h_i},
    \end{align}
    where 
    \begin{align}
    \Delta & = {\sqrt{\balpha_{I}^{-1}}}^{\top}\Sigma_{I}^{-1}\sqrt{\balpha_{I}}, \label{def:Delta}\\
    \Psi  & =(4\pi)^{\frac{\Delta-|I|}{2}} |\Sigma_{I}|^{-1/2}\prod_{i\in I} \left[(\theta_i\sqrt{\alpha_i})^{\frac{h_i}{\sqrt{\alpha_i}}} h_i^{-1}\right] \times \P(\bY_J>\boldsymbol{l}_{\binfty}) \label{def:Psis}
    \end{align}
    with $\bl_{\infty}:=\lim_{t\to\infty}t(\sqrt{\balpha_J}-\bkappa_J)$, $J=\mathbb{I}\setminus I$ and $\bY_J\sim \mathcal{N}(\bzero_J, \Sigma_J-\Sigma_{JI}\Sigma_{I}^{-1}\Sigma_{IJ})$ if $J\neq \emptyset$ and $\P(\bY_J>\bl_{\binfty}):=1$ if $J=\emptyset$.
\end{theorem}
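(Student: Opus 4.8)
The plan is to use the Gaussian copula representation of $\bX$ to recast \eqref{eq:mainlimitmrvA} as a question about an orthant‑crossing probability for a nondegenerate Gaussian vector with a slowly growing, mildly perturbed threshold, and then to read off the sharp asymptotics of the latter from the quadratic program $\mathcal P_{\Sigma^{-1},\sqrt{\balpha}}$ of \Cref{lem:qp}. Since $F\in\PGC(\balpha,\btheta,\Sigma)$ has continuous, strictly increasing margins and Gaussian copula, $\bX\eqd\bigl(F_1^{-1}(\Phi(Z^1)),\ldots,F_d^{-1}(\Phi(Z^d))\bigr)$ with $\bZ\sim\mathcal N(\bzero,\Sigma)$, so for $\bx>\bzero$,
\[
\P(X^i>tx_i,\ \forall i\in\mathbb I)=\P\bigl(Z^i>a_i(t),\ \forall i\in\mathbb I\bigr),\qquad a_i(t):=\Phi^{-1}\bigl(F_i(tx_i)\bigr).
\]
Using $\overline{F}_i(tx_i)\sim\theta_i(tx_i)^{-\alpha_i}$ together with the Mills‑ratio relation $\overline\Phi(z)\sim\phi(z)/z$ and its inversion, one obtains the expansion $a_i(t)^2=2\alpha_i\log t+2\alpha_i\log x_i-2\log\theta_i-\log\alpha_i-\log\log t-\log(4\pi)+o(1)$, i.e.\ $a_i(t)=\sqrt{2\log t}\,b_i(t)$ with $b_i(t)=\sqrt{\alpha_i}+c_i(t)$, $c_i(t)\to0$, and $c_i(t)=\bigl(2\alpha_i\log x_i-2\log\theta_i-\log\alpha_i-\log\log t-\log(4\pi)+o(1)\bigr)/(4\sqrt{\alpha_i}\log t)=O(\log\log t/\log t)$. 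Thus everything reduces to the asymptotics of $\P(\bZ>u\,\bb(t))$ as $u=\sqrt{2\log t}\to\infty$ with $\bb(t)\to\sqrt{\balpha}$.

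Next I would establish the Gaussian crossing asymptotics. Since $\Sigma$ is positive definite, \Cref{lem:qp} gives a unique minimizer $\bkappa$ of $\mathcal I(\bb):=\min_{\bz\ge\bb}\bz^\top\Sigma^{-1}\bz$ at $\bb=\sqrt{\balpha}$, with (strongly) active set $I$, multipliers $h_i>0$ for $i\in I$ satisfying $\bh_I=\Sigma_I^{-1}\sqrt{\balpha_I}$, and optimal value $\gamma=\sum_{i\in I}\sqrt{\alpha_i}\,h_i$; by continuity and strict complementarity of the solution in the right‑hand side, $\mathcal P_{\Sigma^{-1},\bb(t)}$ has the same active set $I$ for all large $t$. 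Conditioning on $\bZ_I$, so that $\bZ_J\mid\bZ_I=\bz_I\sim\mathcal N\bigl(\Sigma_{JI}\Sigma_I^{-1}\bz_I,\ \Sigma_J-\Sigma_{JI}\Sigma_I^{-1}\Sigma_{IJ}\bigr)$ (a nondegenerate Schur complement), and performing a Laplace‑type analysis of $\int_{\bz_I>u\,\bb_I(t)}\P(\bZ_J>u\,\bb_J(t)\mid\bZ_I=\bz_I)\,\phi_{\Sigma_I}(\bz_I)\,d\bz_I$ with $\phi_{\Sigma_I}$ the density of $\mathcal N(\bzero_I,\Sigma_I)$, the mass concentrates near the boundary point $\bz_I\approx u\,\bb_I(t)$: the curvature and Jacobian of this integral yield the factor $(2\pi)^{-|I|/2}|\Sigma_I|^{-1/2}\prod_{i\in I}h_i^{-1}\,u^{-|I|}$, the Gaussian weight at the boundary yields $e^{-\frac{u^2}{2}\mathcal I(\bb(t))}$, and the conditional $J$‑probability converges to $\P(\bY_J>\bl_{\binfty})$ because the conditional mean $u\,\Sigma_{JI}\Sigma_I^{-1}\bb_I(t)$ and the threshold $u\,\bb_J(t)$ differ by a vector whose limit is $\bl_{\binfty}=\lim_t t(\sqrt{\balpha_J}-\bkappa_J)$ (equal to $-\binfty$, hence probability $1$, in the generic case $\bkappa_J>\sqrt{\balpha_J}$). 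This is precisely the mechanism of \cite[Theorem~3.1]{das:fasen:2023a} in the tail‑equivalent case; the genuinely new ingredient here is carrying the perturbation $\bc(t)\ne\bzero$ through the estimate.

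It remains to assemble. By the sensitivity (envelope) theorem for the parametric quadratic program, $\partial\mathcal I/\partial b_i=2h_i$ on $I$ and $0$ on $J$, so $\mathcal I(\bb(t))=\gamma+2\sum_{i\in I}h_i c_i(t)+O(\|\bc(t)\|^2)$, and since $\log t\cdot O(\|\bc(t)\|^2)=O\bigl((\log\log t)^2/\log t\bigr)\to0$,
\[
e^{-\frac{u^2}{2}\mathcal I(\bb(t))}=t^{-\gamma}\exp\!\Bigl(-2\log t\sum_{i\in I}h_i c_i(t)\Bigr)\,(1+o(1)).
\]
Substituting the expansion of $c_i(t)$: the $\log x_i$ terms give $\prod_{i\in I}x_i^{-\sqrt{\alpha_i}h_i}$; the $\log\theta_i$ and $\log\alpha_i$ terms combine into $\prod_{i\in I}(\theta_i\sqrt{\alpha_i})^{h_i/\sqrt{\alpha_i}}$; the $\log\log t$ terms give $(\log t)^{\Delta/2}$ with $\Delta=\sum_{i\in I}h_i/\sqrt{\alpha_i}={\sqrt{\balpha_I^{-1}}}^\top\Sigma_I^{-1}\sqrt{\balpha_I}$; and the $\log(4\pi)$ terms give $(4\pi)^{\Delta/2}$. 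Multiplying by $u^{-|I|}=(2\log t)^{-|I|/2}$ and by $(2\pi)^{-|I|/2}|\Sigma_I|^{-1/2}\prod_{i\in I}h_i^{-1}$ from the previous step and by $\P(\bY_J>\bl_{\binfty})$, the powers of $\log t$ combine to $(\log t)^{(\Delta-|I|)/2}$ and the constants to $\Psi=(4\pi)^{(\Delta-|I|)/2}|\Sigma_I|^{-1/2}\prod_{i\in I}\bigl[(\theta_i\sqrt{\alpha_i})^{h_i/\sqrt{\alpha_i}}h_i^{-1}\bigr]\,\P(\bY_J>\bl_{\binfty})$, which is exactly \eqref{eq:mainlimitmrvA}. (As a check, for $d=1$ one has $I=\{1\}$, $h_1=\sqrt{\alpha_1}$, $\gamma=\Delta=1$, $J=\emptyset$, and the formula collapses to $\P(X^1>tx_1)\sim\theta_1(tx_1)^{-\alpha_1}$, as it must.)

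The main obstacle is the second step and its coupling to the third: one must prove the Gaussian orthant asymptotics with enough uniformity in the threshold direction to legitimately replace the fixed $\sqrt{\balpha}$ by the moving $\bb(t)$, verify that the active set $I$ is frozen for large $t$ (uniqueness and strict complementarity of the QP solution, from \Cref{lem:qp}), identify the limiting $J$‑block offset $\bl_{\binfty}$ from the conditional normal law, and — crucially — control the remainder $O(\|\bc(t)\|^2)=O\bigl((\log\log t/\log t)^2\bigr)$ in the optimal‑value expansion so that it is annihilated after multiplication by $\log t$; equivalently, every $o(1)$ term must be tracked through the exponential so that the accumulated error is $t^{o(1)}\to1$. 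The conditioning/Laplace argument above is meant to handle the nondegenerate ($J=\emptyset$) and degenerate ($J\ne\emptyset$) cases uniformly.
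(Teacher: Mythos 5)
Your proposal is correct and follows essentially the same route as the paper: reduce to a Gaussian orthant probability via the copula, expand the Gaussian quantiles of the power-law margins (the paper's \Cref{cor:paretotogaussquantile}), and apply a sharp Hashorva-type orthant asymptotic governed by the quadratic program $\mathcal{P}_{\Sigma^{-1},\sqrt{\balpha}}$ (the paper's \Cref{prop:gausstail}, which is cited from Hashorva (2005) with a $t$-dependent additive perturbation rather than re-derived by your conditioning/Laplace argument). Your envelope-theorem expansion of the optimal value is the same linearity $\bx^{\top}\Sigma^{-1}\bkappa=\bx_I^{\top}\bh_I$ that the paper exploits, and your bookkeeping of the $\log x_i$, $\log\theta_i$, $\log\alpha_i$, $\log\log t$ and $\log(4\pi)$ contributions reproduces $\Psi$, $\Delta$ and the exponents exactly.
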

The proof of \Cref{thm:mainrect} is given in \Cref{subsec:proofprobext}.
\begin{remark}{\color{white}{space}}\\
\begin{enumerate}[(i)]
\item In this result we obtain tail asymptotics with unequal tail indices. In general, if we use standard multivariate regular variation, such a computation will provide trivial answers unless all marginals have equivalent tails.
    \item    \Cref{thm:mainrect} matches \cite[Theorem 3.1]{das:fasen:2023a} if  $\alpha_i$'s are same for all $i\in \mathbb{I}$ and $\theta_i$'s are the same for $i\in \mathbb{I}$. On the flip side, we are not able to show multivariate regular variation in cones of $\R_+^d$.
    \item If  we denote $Y=\min_{i\in \mathbb{I}} X^i$, then from \eqref{eq:mainlimitmrvA} we can check that as $t\to\infty$,
\[\P(Y>t) = \P(X^i> t, \forall i\in \mathbb{I}) \in \RV_{-\gamma}\]
where $\gamma=\gamma(\Sigma, \sqrt{\balpha})$. If all $\alpha=\alpha_i, \forall i \in \mathbb{I}$, then from \Cref{lem:qp}, we have \linebreak $\gamma=\alpha\bone^{\top}_I\Sigma_I^{-1}\bone_I$.
\end{enumerate}
 
\end{remark}

  It is instructive to formulate \Cref{thm:mainrect} in two dimensions where we can express the solution to the quadratic program $\mathcal{P}_{\Sigma^{-1},\sqrt{\balpha}}$ explicitly. This also allows us to observe the different asymptotic rates when the underlying parameters change. Note that in two dimensions positive definiteness of $\Sigma$ with one parameter $\rho$ is equivalent to $-1<\rho<1$.

\begin{corollary}\label{cor:2dim}
Let 
$\bX\in\PGC(\alpha_1,\alpha_2,\theta_1,\theta_2,\rho)$  and $-1<\rho<1$. Then for any $x_1>0, x_2>0$, the following holds:
    \begin{enumerate}[(i)]
        \item If $\rho < \min(\sqrt{\alpha_2/\alpha_1},\sqrt{\alpha_1/\alpha_2})$, then
       \begin{align}
             &\P(X^1>tx_1, X^2>tx_2) \nonumber\\
               & \qquad\qquad = (1+o(1))\Psi t^{-\gamma} (\log t)^{-\frac{\rho}{2\sqrt{\alpha_1\alpha_2}}\gamma} x_1^{-\frac{\alpha_1-\rho\sqrt{\alpha_1\alpha_2}}{1-\rho^2}}x_2^{-\frac{\alpha_2-\rho\sqrt{\alpha_1\alpha_2}}{1-\rho^2}}, \quad t\to\infty \label{eq:cor:2dimmain}
       \end{align}
       where 
    \begin{align}
    \gamma & :=\frac{\alpha_1+\alpha_2-2\rho\sqrt{\alpha_1\alpha_2}}{(1-\rho^2)}, \label{def:pgcgamma}\\
    \Psi & :=\frac{{(4\pi)^{-\frac{\rho}{2\sqrt{\alpha_1\alpha_2}}\gamma}}{({1-\rho^2})^{3/2}}(\theta_1\sqrt{\alpha_1})^{\frac{\sqrt{\alpha_1}-\rho\sqrt{\alpha_2}}{1-\rho^2}}(\theta_2\sqrt{\alpha_2})^{\frac{\sqrt{\alpha_2}-\rho\sqrt{\alpha_1}}{1-\rho^2}}}{1-\rho(\alpha_1+\alpha_2)+\rho^2\sqrt{\alpha_1\alpha_2}}. \label{def:pgcpsi}
    \end{align}
    \item If $\rho = \min(\sqrt{\alpha_2/\alpha_1},\sqrt{\alpha_1/\alpha_2})$, then with $i^*:=\argmax_i \alpha_i,\, \gamma=\max(\alpha_1,\alpha_2):=\alpha_{i^*}$,
     \begin{align}
             \P(X^1>tx_1, X^2>tx_2) = (1+o(1)) \frac{\theta_{i^*}}{2} t^{-\gamma}x_{i^*}^{-\gamma}, \quad t\to \infty.
       \end{align}
         \item If $\rho > \min(\sqrt{\alpha_2/\alpha_1},\sqrt{\alpha_1/\alpha_2})$, then with $i^*:=\argmax_i \alpha_i,\, \gamma:=\max(\alpha_1,\alpha_2)=\alpha_{i^*}$,
     \begin{align}
             \P(X^1>tx_1, X^2>tx_2) = (1+o(1)) {\theta_{i^*}} t^{-\gamma}x_{i^*}^{-\gamma}, \quad t\to \infty.
       \end{align}
    \end{enumerate}
\end{corollary}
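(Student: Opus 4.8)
The plan is to obtain \Cref{cor:2dim} as the explicit two-dimensional instance of \Cref{thm:mainrect}. Since a $2\times2$ correlation matrix with off-diagonal entry $\rho$ is positive definite exactly when $-1<\rho<1$, \Cref{thm:mainrect} applies under the stated hypothesis, and the entire content of the corollary is to (a) solve the quadratic program $\mathcal{P}_{\Sigma^{-1},\sqrt{\balpha}}$ of \Cref{lem:qp} in closed form in two dimensions --- this is the source of the trichotomy --- and (b) substitute the resulting $\gamma$, active index set $I$, vector $\bkappa$, coefficients $h_i$ and the quantity $\Delta$ of \eqref{def:Delta} into \eqref{eq:mainlimitmrvA}--\eqref{def:Psis} and simplify.

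For step (a) I would solve the program via the Karush--Kuhn--Tucker conditions. The objective $\bz^{\top}\Sigma^{-1}\bz$ is positive definite, so its minimizer over the shifted orthant $\{\bz\ge\sqrt{\balpha}\}$ lies on a unique face, identified by a sign check on the components of $\Sigma^{-1}\bz$; with the explicit $2\times2$ inverse of $\Sigma$ this is immediate. Assume without loss of generality $\alpha_1\le\alpha_2$, so $i^*=2$ and $\min(\sqrt{\alpha_2/\alpha_1},\sqrt{\alpha_1/\alpha_2})=\sqrt{\alpha_1/\alpha_2}$; the controlling quantity is the sign of $(\Sigma^{-1}\sqrt{\balpha})_1$, i.e.\ of $\sqrt{\alpha_1}-\rho\sqrt{\alpha_2}$. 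If $\rho<\sqrt{\alpha_1/\alpha_2}$ then $\Sigma^{-1}\sqrt{\balpha}>\bzero$ componentwise, so the optimum sits at the corner $\bz=\sqrt{\balpha}$: $I=\{1,2\}$, $\bkappa=\sqrt{\balpha}$, $J=\emptyset$, and $\gamma={\sqrt{\balpha}}^{\top}\Sigma^{-1}\sqrt{\balpha}$, which is precisely \eqref{def:pgcgamma}. If instead $\rho>\sqrt{\alpha_1/\alpha_2}$, fixing $z_2=\sqrt{\alpha_2}$ and minimizing in $z_1$ gives the unconstrained value $z_1=\rho\sqrt{\alpha_2}>\sqrt{\alpha_1}$, so only the second constraint binds: $I=\{2\}$, $\bkappa=(\rho\sqrt{\alpha_2},\sqrt{\alpha_2})^{\top}$, $J=\{1\}$, $\gamma=\alpha_2=\alpha_{i^*}$. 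The threshold $\rho=\sqrt{\alpha_1/\alpha_2}$ is the degenerate case $\rho\sqrt{\alpha_2}=\sqrt{\alpha_1}$, where the first constraint is weakly active with vanishing multiplier: it still belongs to $J$ --- assigning it to $I$ would force $h_1=0$ and make \eqref{def:Psis} ill-defined --- but now $\kappa_1=\sqrt{\alpha_1}$, so the shift satisfies $\bl_{\binfty}=\lim_{t\to\infty}t(\sqrt{\alpha_1}-\kappa_1)=0$ rather than $-\infty$. (If $\alpha_1=\alpha_2$ then $\min(\cdots)=1>\rho$ and only the first regime can occur.)

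For step (b), each case is a substitution into \Cref{thm:mainrect} followed by algebra. In case~(i), $J=\emptyset$ gives $\P(\bY_J>\bl_{\binfty})=1$; a short computation with \eqref{def:Delta} shows $\tfrac{\Delta-|I|}{2}=-\tfrac{\rho}{2\sqrt{\alpha_1\alpha_2}}\gamma$, giving the stated power of $\log t$, and the $x_i$-exponents in \eqref{eq:mainlimitmrvA} together with the prefactor \eqref{def:Psis} --- using $|\Sigma_I|^{-1/2}=(1-\rho^2)^{-1/2}$ and the explicit $h_i$'s --- collapse to \eqref{eq:cor:2dimmain}--\eqref{def:pgcpsi}. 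In cases~(ii) and (iii), $I=\{i^*\}$ forces $\Delta=|I|=1$, so the $\log t$ factor disappears; $\gamma=\alpha_{i^*}$; $\Sigma_I$ reduces to the scalar $1$, hence $|\Sigma_I|^{-1/2}=1$; the lone $i^*$-factor in \eqref{def:Psis} combined with $x_{i^*}^{-\sqrt{\alpha_{i^*}}h_{i^*}}$ reduces to $\theta_{i^*}\,x_{i^*}^{-\alpha_{i^*}}$; and $\bY_J$ is a one-dimensional centered Gaussian with variance $\Sigma_J-\Sigma_{JI}\Sigma_I^{-1}\Sigma_{IJ}=1-\rho^2>0$. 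Thus the prefactor equals $\theta_{i^*}\,\P(\bY_J>\bl_{\binfty})$, which is $\theta_{i^*}/2$ when $\bl_{\binfty}=0$ (case~(ii)) and $\theta_{i^*}$ when $\bl_{\binfty}=-\infty$ (case~(iii)).

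I expect the main obstacle to be the borderline case~(ii): one must justify that at $\rho=\min(\sqrt{\alpha_2/\alpha_1},\sqrt{\alpha_1/\alpha_2})$ the weakly active index belongs to $J$ with $\kappa_i=\sqrt{\alpha_i}$, so that $\bl_{\binfty}$ is finite and equal to $\bzero$ rather than being absorbed into $I$; the factor $\tfrac12$ is then simply the mass a centered univariate Gaussian places on $(0,\infty)$. This also requires confirming that the statements of \Cref{lem:qp} and \Cref{thm:mainrect} are phrased so as to cover this degenerate regime. Everything else is routine bookkeeping with the $2\times2$ formulas.
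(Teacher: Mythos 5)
Your proposal is correct and follows essentially the same route as the paper: apply \Cref{thm:mainrect}, solve $\mathcal{P}_{\Sigma^{-1},\sqrt{\balpha}}$ explicitly in two dimensions to get the trichotomy, and substitute the resulting $I$, $\bkappa$, $h_i$, $\Delta$ into \eqref{eq:mainlimitmrvA}--\eqref{def:Psis}. Your KKT/sign-check derivation of the active set and your handling of the borderline case (the weakly active index going to $J$ with $\kappa_i=\sqrt{\alpha_i}$, hence $\bl_{\binfty}=0$ and the factor $\tfrac12$) match exactly what the paper asserts, only with more justification than the paper itself provides.
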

\begin{proof} 
The proof follows from \Cref{thm:mainrect} by appropriately computing the values from the solution of $\mathcal{P}_{\Sigma^{-1},\sqrt{\balpha}}$ in \Cref{lem:qp} where  $\balpha=(\alpha_1,\alpha_2)$. 
\begin{enumerate}[(i)]
    \item  If $\rho < \min(\sqrt{\alpha_2/\alpha_1},\sqrt{\alpha_1/\alpha_2})$, then the solution to $\mathcal{P}_{\Sigma^{-1},\sqrt{\balpha}}$ is given by
    \[\gamma= \balpha^{\top}\Sigma^{-1}\balpha= \frac{\alpha_1+\alpha_2-2\rho\sqrt{\alpha_1\alpha_2}}{(1-\rho^2)}.\]
    Moreover $I=\{1,2\}, J=\emptyset, h_1 = \frac{\sqrt{\alpha_1}-\rho\sqrt{\alpha_2}}{1-\rho^2}, h_2 = \frac{\sqrt{\alpha_2}-\rho\sqrt{\alpha_1}}{1-\rho^2},  \Delta= \frac{2\sqrt{\alpha_1\alpha_2}-\rho(\alpha_1+\alpha_2)}{(1-\rho^2)\sqrt{\alpha_1\alpha_2}}$, and $|\Sigma|=1-\rho^2$. Using these values in \eqref{eq:mainlimitmrvA} lead to the result.

    \item Let $\rho = \min(\sqrt{\alpha_2/\alpha_1},\sqrt{\alpha_1/\alpha_2})$. W.l.o.g. assume $\alpha_1>\alpha_2$, hence $\rho=\sqrt{\alpha_2/\alpha_1}$.  Then $I=\{1\},\,J=\{2\}$. Thus $\gamma=\alpha_1, h_1= \sqrt{\alpha_1}, \Delta=1, |\Sigma_I|=1$.  Also $\kappa_2=\rho\sqrt{\alpha_1}=\sqrt{\alpha_2}$ and hence  $l_{\infty}=0$ implying $\P(Y_2>l_{\infty}) = \P(Y_2>0)=\frac12$ where $Y_2\sim \mathcal{N}(0,1-\rho^2)$. Hence the result follows.
    \item  Again, w.l.o.g. assume $\alpha_1>\alpha_2$. Then the case for $\rho >  \min(\sqrt{\alpha_2/\alpha_1},\sqrt{\alpha_1/\alpha_2}) = \sqrt{\alpha_2/\alpha_1}$ is the same as the previous one except that $\kappa_2=\rho\sqrt{\alpha_1}>\sqrt{\alpha_2}$ and hence  $l_{\infty}=-\infty$ implying $\P(Y_2>l_{\infty}) = \P(Y_2>-\infty)=1$, leading to the result.
\end{enumerate}
\end{proof}

\begin{remark}\label{rem:2dimprobext}
    \Cref{cor:2dim} leads to a few insights on the behavior of probabilities of extreme tail sets. 
    \begin{enumerate}[(a)]
        \item If $\rho< \min(\sqrt{\alpha_2/\alpha_1},\sqrt{\alpha_1/\alpha_2})$, then $Y=\min(X^1,X^2)\in \RV_{-\gamma}$ where $\gamma$ is defined in \eqref{def:pgcgamma}. Hence, if we can estimate $\alpha_1,\alpha_2, \gamma$ using sample data, we may estimate $\rho$ as a solution to a quadratic equation in $\rho$ defined by  \eqref{def:pgcgamma}. Furthermore, since this can be done for any 2-dimensional subset of a $d$-dimensional $\bX\in \PGC(\balpha,\Sigma)$ where $\Sigma=((\rho_{ij}))_{i,j\in \mathbb{I}}$, we can similarly estimate $\rho_{ij}$ for any $i\neq j \in \mathbb{I}$ if $\rho_{ij}<\min(\sqrt{\alpha_j/\alpha_i},\sqrt{\alpha_i/\alpha_j})$. We employ this idea for parameter estimation in \Cref{sec:parmeterest}.
        \item Depending on the value of the correlation parameter $\rho\in (-1,1)$, the value of $\gamma$ has some restrictions. We have
\begin{align*}
    \bullet\; \; \gamma & > \alpha_1+\alpha_2,  && \text{if } \rho<0,\\
    \bullet\; \; \gamma & =\alpha_1 +\alpha_2, && \text{if } \rho=0,\\
    \bullet\; \; \gamma & \in \left(\max(\alpha_1,\alpha_2),\alpha_1+\alpha_2\right), & &\text{if } \rho \in \Big(0, \min\Big(\sqrt{\alpha_2/\alpha_1},\sqrt{\alpha_1/\alpha_2}\Big)\Big),\\
    \bullet\; \; \gamma &=\max(\alpha_1,\alpha_2),  && \text{if } \rho \ge \min\Big(\sqrt{\alpha_2/\alpha_1},\sqrt{\alpha_1/\alpha_2}\Big).
\end{align*}
  These can be used to check appropriateness of a $\PGC$ assumption for sample data as well.       
       
    \end{enumerate}
\end{remark}

\section{Parameter estimation}\label{sec:parmeterest}
Suppose $\bX_1,\ldots, \bX_n$ are i.i.d. from $F\in \PGC(\balpha,\btheta,\Sigma)$ and we are interested in estimating the model parameters, specifically $\balpha$ and $\Sigma$. In \Cref{sec:parmeterest} we found asymptotic probabilities of tail sets for the $\PGC$ model and the results indicated regular variation behavior for the tail distribution of a few random elements, which we make use of to estimate the parameters. For estimating the tail index of a regularly varying tailed distribution we use the celebrated  estimator of \citet{hill:1975}. If $Z_1,\ldots, Z_n$ are i.i.d. $F$ where $\ov{F}\in \RV_{-\alpha}, \alpha>0$, then the Hill estimator is defined as
\[ H_{k,n} = \frac1k \sum_{i=1}^k \log\left(\frac{Z_{(i)}}{Z_{(k+1)}}\right), \quad 1\le k <n \]
where $Z_{(1)}\ge \ldots Z_{(n)}$ denotes the decreasing order statistics of $Z_1,\ldots, Z_n$. It is well known $H_{k,n} \stp \alpha^{-1}$ as $n\to\infty, k=k(n)\to \infty$ with $k(n)/n\to 0$ \citep{hill:1975,mason:1982}. In fact Mason's result \cite[Theorem 2]{mason:1982} indicates that  the convergence of Hill's estimator is equivalent to regular variation of the underlying distribution tail making this estimator an ideal choice for testing the regular variation property as well. Here we note that for our next result any consistent estimator of the tail index parameter suffices, e.g., Pickand's estimator \citep{pickands:1975}, Dekkers-Einmahl-de Haan estimator \citep{dekkers:einmahl:dehaan:1989}, etc. We state the result in 2-dimensions; an extension to $d$-dimensions is evident and is implied by the 2-dimensional result, see \Cref{thm:multidimconv} for an explicit formulation.

\begin{theorem}\label{thm:2dimconv}
    Suppose $\bX_1,\ldots,\bX_n$ are i.i.d. bivariate random vectors where $\bX_i=(X^1_{i},X^2_{i})\in \PGC(\alpha_1,\alpha_2, \theta_1, \theta_2, \rho)$ for some $\alpha_1,\alpha_2,\theta_1,\theta_2>0 $ and $\rho\in (-1,1)$. 
   Let $Y_i=\min(X^{1}_i,X^{2}_i) \sim G, i=1,\ldots n$ . Define the following (Hill) estimators for $k_1=k_1(n), k_2=k_2(n), k_{12}=k_{12}(n) $ where $1 \le  k_1,k_2,k_{12} < n$:
   \begin{align*}
       \hat{\alpha}_j^{-1} &:= H^{\alpha_j}_{k_j} = \frac{1}{k_j} \sum_{i=1}^{k_j} \log\left(\frac{X^j_{(i)}}{X^j_{(k_j+1)}}\right),j=1,2, \\
       \hat{\gamma}^{-1} &:= H^{\gamma}_{k_{12}} = \frac{1}{k_{12}} \sum_{i=1}^{k_{12}} \log\left(\frac{Y_{(i)}}{Y_{(k_{12}+1)}}\right). 
   \end{align*}
 Then the following holds:
    \begin{enumerate}[(i)]
       \item We have
       \[ \ov{G} \in \RV_{-\gamma} \quad \text{where} \quad \gamma = \begin{cases}
           \frac{\alpha_1+\alpha_2-2\rho\sqrt{\alpha_1\alpha_2}}{1-\rho^2} & \text{ if } \rho<\min\big(\sqrt{\alpha_2/\alpha_1},\sqrt{\alpha_1/\alpha_2}\big),\\
           \max(\alpha_1,\alpha_2) & \text{otherwise}.
       \end{cases}\]
       \item As $n\to \infty, k_j \to \infty$ with $k_j/n\to 0$ for $j=1,2,12$, we have 
       \begin{align}\label{eq:cinpwithgamma}
     \left(H^{\alpha_1}_{k_1},H^{\alpha_2}_{k_2}, H^{\gamma}_{k_{12}}\right) \stp  \left(\frac{1}{\alpha_1},\frac{1}{\alpha_2}, \frac{1}{\gamma}\right).
       \end{align}
       \item Define \begin{align}\label{def:rhohat}
           \hat{\rho} := \frac{\sqrt{\hat{\alpha}_1\hat{\alpha}_2}-\sqrt{\hat{\alpha}_1\hat{\alpha}_2+\hat{\gamma}^2-\hat{\gamma}(\hat{\alpha}_1+\hat{\alpha}_2)}}{\hat{\gamma}}.
       \end{align} If $\rho<\min\big(\sqrt{\alpha_2/\alpha_1},\sqrt{\alpha_1/\alpha_2}\big)$ then  as $n\to \infty, k_j \to \infty$ with $k_j/n\to 0$ for $j=1,2,12$, we have
       \begin{align}\label{eq:cinpwithrho}
       \left(\hat{\alpha}_1,\hat{\alpha}_2,\hat{\rho}\right) \stp (\alpha_1, \alpha_2, \rho).
       \end{align}
   \end{enumerate}
\end{theorem}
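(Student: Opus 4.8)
The plan is to establish the three parts in order, using the tail asymptotics from \Cref{cor:2dim} together with the consistency of the Hill estimator, and then a continuous-mapping argument for $\widehat\rho$.

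\emph{Part (i).} I would simply read off the regular variation index of $\ov{G}$ from \Cref{cor:2dim}. Since $Y=\min(X^1,X^2)$, we have $\ov G(t)=\P(Y>t)=\P(X^1>t,X^2>t)$, which is exactly the tail set probability in \eqref{prob:tailset} with $x_1=x_2=1$. Applying \Cref{cor:2dim}(i)--(iii): in case (i) the probability is $(1+o(1))\Psi t^{-\gamma}(\log t)^{-\rho\gamma/(2\sqrt{\alpha_1\alpha_2})}$, which is regularly varying with index $-\gamma$ for $\gamma$ as in \eqref{def:pgcgamma} (the slowly varying factor $(\log t)^{c}$ does not change the index); in cases (ii) and (iii) the probability is $(1+o(1))\,c\,t^{-\max(\alpha_1,\alpha_2)}$ for a positive constant $c$, hence $\ov G\in\RV_{-\max(\alpha_1,\alpha_2)}$. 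This gives the stated dichotomy.

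\emph{Part (ii).} Here I would invoke the consistency of the Hill estimator. For the marginal estimators $H^{\alpha_j}_{k_j}$: by Definition \ref{def:rvgc}(i), $\ov F_j(t)\sim\theta_j t^{-\alpha_j}$, so $\ov F_j\in\RV_{-\alpha_j}$, and by \citet{hill:1975,mason:1982}, $H^{\alpha_j}_{k_j}\stp \alpha_j^{-1}$ as $n\to\infty$, $k_j\to\infty$, $k_j/n\to0$. For $H^\gamma_{k_{12}}$: by Part (i), $\ov G\in\RV_{-\gamma}$, so the same Hill consistency result gives $H^\gamma_{k_{12}}\stp\gamma^{-1}$. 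Since the three convergences in probability hold jointly (each is a statement about a single sequence of statistics converging to a constant, so joint convergence is automatic), \eqref{eq:cinpwithgamma} follows.

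\emph{Part (iii).} Under the assumption $\rho<\min(\sqrt{\alpha_2/\alpha_1},\sqrt{\alpha_1/\alpha_2})$, Part (i) gives $\gamma=(\alpha_1+\alpha_2-2\rho\sqrt{\alpha_1\alpha_2})/(1-\rho^2)$. I would verify that $\widehat\rho$ in \eqref{def:rhohat} is the correct inverse of this relation: treating \eqref{def:pgcgamma} as a quadratic in $\rho$, namely $\gamma\rho^2 - 2\sqrt{\alpha_1\alpha_2}\,\rho + (\alpha_1+\alpha_2-\gamma)=0$, the quadratic formula gives $\rho=(\sqrt{\alpha_1\alpha_2}\pm\sqrt{\alpha_1\alpha_2+\gamma^2-\gamma(\alpha_1+\alpha_2)})/\gamma$, and one checks that the minus-sign root is the one lying in the admissible range $(-1,\min(\sqrt{\alpha_2/\alpha_1},\sqrt{\alpha_1/\alpha_2}))$ (the plus-sign root exceeds the threshold), so $\rho$ equals the expression in \eqref{def:rhohat} evaluated at the true $(\alpha_1,\alpha_2,\gamma)$. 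Now $\widehat\rho=g(\widehat\alpha_1,\widehat\alpha_2,\widehat\gamma)$ for the function $g(a_1,a_2,c)=(\sqrt{a_1a_2}-\sqrt{a_1a_2+c^2-c(a_1+a_2)})/c$, which is continuous in a neighborhood of $(\alpha_1,\alpha_2,\gamma)$ provided the argument of the inner square root is positive there and $\gamma>0$. Since $\widehat\alpha_j=1/H^{\alpha_j}_{k_j}\stp\alpha_j$ and $\widehat\gamma=1/H^\gamma_{k_{12}}\stp\gamma$ by Part (ii) and the continuous mapping theorem, one more application of the continuous mapping theorem yields $(\widehat\alpha_1,\widehat\alpha_2,\widehat\rho)\stp(\alpha_1,\alpha_2,\rho)$, which is \eqref{eq:cinpwithrho}.

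\emph{Main obstacle.} The only genuinely delicate point is in Part (iii): one must confirm that the inner radicand $\alpha_1\alpha_2+\gamma^2-\gamma(\alpha_1+\alpha_2)$ is strictly positive at the true parameters (so that $g$ is continuous there and $\widehat\rho$ is real with high probability), and that the minus-sign branch is the correct one throughout the strict-inequality regime $\rho<\min(\sqrt{\alpha_2/\alpha_1},\sqrt{\alpha_1/\alpha_2})$. Both reduce to elementary algebra on the quadratic \eqref{def:pgcgamma} — substituting $\gamma$ from \eqref{def:pgcgamma}, the radicand simplifies to $(\alpha_1\alpha_2(1-\rho^2)\cdot(\text{something})^2)$-type positive expression — but it is the step that requires care rather than a black-box citation.
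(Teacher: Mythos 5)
Your proposal is correct and follows essentially the same route as the paper: part (i) reads the index off \Cref{cor:2dim}, part (ii) invokes Hill/Mason consistency plus automatic joint convergence to constants, and part (iii) inverts the quadratic \eqref{def:pgcgamma} and applies the continuous mapping theorem. The branch-selection and positive-radicand check you flag as the delicate step is exactly what the paper isolates in its auxiliary \Cref{lem:solverho}, and your sketch of that algebra matches it.
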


\begin{proof} {\color{white}{yes}}
    \begin{enumerate}[(i)]
        \item From \Cref{cor:2dim} (i), for $\rho<\min\big(\sqrt{\alpha_2/\alpha_1},\sqrt{\alpha_1/\alpha_2}\big)$, using $x_1=x_2=1$ in \eqref{eq:cor:2dimmain}, we get 
        \[\P(Y_1>t) =\P(X_1^1>t,X_1^2>t) = (1+o(1)) \Psi t^{-\gamma} (\log t)^{-\frac{\rho}{\sqrt{\alpha_1\alpha_2}}\gamma}\]
 where $\Psi>0$ and $\gamma=\frac{\alpha_1+\alpha_2-2\rho\sqrt{\alpha_1\alpha_2}}{1-\rho^2}$. Hence $\ov{G} \in \RV_{-\gamma}$. Similarly from \Cref{cor:2dim} (ii) and (iii)  for $\rho\ge\min\big(\sqrt{\alpha_2/\alpha_1},\sqrt{\alpha_1/\alpha_2}\big)$, we have $\ov{G} \in \RV_{-\gamma}$ where $\gamma=\max(\alpha_1,\alpha_2)$.
\item Let $X_1^j\sim F_j, j=1, 2$. Then by definition, $\ov{F}_j\in \RV_{-\alpha_j}, j=1,2$, moreover, in part (i) we have shown that $G\in\RV_{-\gamma}$. Clearly, $\hat{\alpha}_j^{-1}, \hat{\gamma}^{-1}$ are respectively Hill estimators for $\alpha_j^{-1}, j=1,2$ and $\gamma$, cf. \cite{hill:1975}. Hence by consistency of the Hill estimator (\cite{hill:1975,mason:1982}), we have consistency (in probability) of the individual estimators. Since all random variables are defined on the same probability space and the estimators converge to fixed parameter values, the joint consistency follows (using for example \cite[Proposition 3.1]{resnickbook:2007}).

\item If $\rho<\min\big(\sqrt{\alpha_2/\alpha_1},\sqrt{\alpha_1/\alpha_2}\big)$, then we have consistent estimates for $\alpha_1, \alpha_2$ and \linebreak $\gamma=\frac{\alpha_1+\alpha_2-2\rho\sqrt{\alpha_1\alpha_2}}{1-\rho^2}$. Hence to obtain an estimate of $\rho$, we solve the quadratic equation:
\[ g(\rho)= \hg\rho^2-2\sqrt{\hat{\alpha}_1\hat{\alpha}_2}\rho+(\hat{\alpha}_1+\hat{\alpha}_2-\hg)=0.\]
By \Cref{lem:solverho}, the unique feasible solution to $g(\rho)=0$ is given by 
\[\hat{\rho} = \frac{\sqrt{\hat{\alpha}_1\hat{\alpha}_2}-\sqrt{\hat{\alpha}_1\hat{\alpha}_2+\hat{\gamma}^2-\hat{\gamma}(\hat{\alpha}_1+\hat{\alpha}_2)}}{\hat{\gamma}}. \]
Now \eqref{eq:cinpwithrho} follows by using a continuous mapping theorem (\cite{billingsley:1968}) on \eqref{eq:cinpwithgamma} for \linebreak  $\rho<\min\big(\sqrt{\alpha_2/\alpha_1},\sqrt{\alpha_1/\alpha_2}\big)$.
       \end{enumerate}
\end{proof}

\subsection{Parameter estimation in higher dimensions}\label{sec:highdim}

The consistency of the estimators as exhibited in \Cref{thm:2dimconv} can be easily extended from two dimensions to higher dimensions. Since the correlations are computed from the two relevant margins, this is a natural extension from the bivariate result.

\begin{theorem}[Consistency]\label{thm:multidimconv}
    Suppose $\bX_1,\ldots,\bX_n$ are i.i.d. random vectors in $\R^d$ where $\bX_i=(X^1_{i},\ldots, X^d_{i})\in \PGC(\balpha, \btheta, \Sigma)$ for some $\balpha>0, \btheta>0$ and positive definite $\Sigma$. Let $\Sigma = ((\rho_{jl}))$ where $\rho_{jl}\in (-1,1)$ if $j\neq l$ and $\rho_{jj}=1, j\in \mathbb{I}$.
 Define $Y^{jl}_i=\min(X_i^j,X_i^l), i=1,\ldots n$ for $j,l\in \mathbb{I}, j\neq l$ and 
   $$\gamma_{jl}=\begin{cases} \frac{\alpha_j+\alpha_l-2\rho_{jl}\sqrt{\alpha_j\alpha_l}}{1-\rho_{jl}^2} & \text{if } \rho<\min \big(\sqrt{\alpha_l/\alpha_j},\sqrt{\alpha_j/\alpha_l}\big),\\
   \max (\alpha_j, \alpha_l) & \text{otherwise}. \end{cases}$$ Also define the following (Hill) estimators for $k=k(n)$  with $1 < k < n$:
   \begin{align*}
         \hat{\alpha}_j^{-1} &:= H^{\alpha_j}_{k,n} = \frac{1}{k} \sum_{1=1}^{k} \log\left(\frac{X^j_{(i)}}{X^j_{(k+1)}}\right),j\in \mathbb{I},\\
        \hat{\gamma}_{jl}^{-1} &:= H^{\gamma_{jl}}_{k,n} =  \frac{1}{k} \sum_{1=1}^{k} \log\left(\frac{Y^{jl}_{(i)}}{Y^{jl}_{(k+1)}}\right), \quad j,l\in \mathbb{I}, j\neq l.
   \end{align*}

    Then the following holds:
    \begin{enumerate}[(i)]
       \item  For $Y_1^{jl}\sim G_{jl}$, we have $\ov{G}_{jl}\in \RV_{-\gamma_{jl}}$.
       \item As $n\to \infty, k_n\to \infty$ with $k_n/n\to 0$, we have $\hat{\alpha}_j^{(k_n)}\stp\alpha_j, j=1,\ldots,n$ and $\hat{\gamma}^{(k_n)}_{jl} \stp \gamma_{jl}$.
       \item Define $\hat{\rho}_{jl}^{(k_n)} = \frac{\sqrt{\hat{\alpha}_j\hat{\alpha}_l}-\sqrt{\hat{\alpha}_j\hat{\alpha}_l+\hat{\gamma}_{jl}^2-\hat{\gamma}(\hat{\alpha}_j+\hat{\alpha}_l)}}{\hat{\gamma}_{jl}}, j\neq l$ and $\hat{\rho}_{jj}^{(k_n)}=1$. Let $\hat{\Sigma} =((\hat{\rho}_{jl}))$.  Then if $\rho_{jl}<\min\big(\sqrt{{\alpha_l}/{\alpha_j}},\sqrt{{\alpha_j}/{\alpha_l}}\big)$ for all $j\neq l$ as $n\to \infty, k_n\to \infty$ with $k_n/n\to 0$, we have
       \[ \left(\hat{\balpha},\hat{\Sigma}\right) \stp (\balpha, \Sigma).\]
   \end{enumerate}
\end{theorem}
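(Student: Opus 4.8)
The plan is to reduce the entire statement to $\binom{d}{2}$ applications of the bivariate results, since parts (i)--(iii) each involve only a single coordinate $j$ (for the $\hat{\alpha}_j$) or a single pair $\{j,l\}$ (for the $\hat{\gamma}_{jl}$ and $\hat{\rho}_{jl}$). The one structural fact that makes this possible is that the $\PGC$ class is closed under taking two-dimensional marginals: if $\bX\in\PGC(\balpha,\btheta,\Sigma)$ and $j\neq l$, then $(X^j,X^l)\in\PGC(\alpha_j,\alpha_l,\theta_j,\theta_l,\rho_{jl})$. First I would verify this by sending all arguments of $C_\Sigma$ other than the $j$-th and $l$-th to $1$ in $F$: since $F_m(x_m)\to1$ as $x_m\to\infty$ and $\Phi_\Sigma$ is multivariate normal, the resulting bivariate copula is $C_{\Sigma_{\{j,l\}}}$, the Gaussian copula with correlation matrix the $2\times2$ principal submatrix $\Sigma_{\{j,l\}}$; this submatrix, being a principal submatrix of a positive definite matrix, is itself positive definite, so $\rho_{jl}\in(-1,1)$. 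The marginal tail conditions $\overline{F}_j(t)\sim\theta_j t^{-\alpha_j}$ are inherited verbatim. Part (i) then follows immediately from \Cref{cor:2dim}: applying its cases (i)--(iii) to the pair $(X^1_1,X^2_1)$ with $x_1=x_2=1$ shows $\overline{G}_{jl}\in\RV_{-\gamma_{jl}}$ with $\gamma_{jl}$ exactly as defined in the statement.

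For part (ii), by part (i), $\hat{\alpha}_j^{-1}$ is the Hill estimator of $\alpha_j^{-1}$ formed from the i.i.d.\ sample $X^j_1,\dots,X^j_n$ whose common tail lies in $\RV_{-\alpha_j}$, and $\hat{\gamma}_{jl}^{-1}$ is the Hill estimator of $\gamma_{jl}^{-1}$ formed from the i.i.d.\ sample $Y^{jl}_1,\dots,Y^{jl}_n$ whose common tail lies in $\RV_{-\gamma_{jl}}$. Consistency of each individual Hill estimator \citep{hill:1975,mason:1982} gives $\hat{\alpha}_j\stp\alpha_j$ and $\hat{\gamma}_{jl}\stp\gamma_{jl}$; since all these estimators are defined on the same probability space and converge to deterministic limits, joint convergence of the full vector $\big(\hat{\alpha}_1,\dots,\hat{\alpha}_d,(\hat{\gamma}_{jl})_{j\neq l}\big)$ to the corresponding vector of constants follows, e.g.\ by \cite[Proposition 3.1]{resnickbook:2007}.

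For part (iii), under the hypothesis $\rho_{jl}<\min\big(\sqrt{\alpha_l/\alpha_j},\sqrt{\alpha_j/\alpha_l}\big)$ for every $j\neq l$, each $\gamma_{jl}$ is given by its first branch, so $\rho_{jl}$ is the feasible root of $g(\rho)=\gamma_{jl}\rho^2-2\sqrt{\alpha_j\alpha_l}\,\rho+(\alpha_j+\alpha_l-\gamma_{jl})=0$, which by \Cref{lem:solverho} is precisely the value produced by the displayed formula $\hat{\rho}_{jl}=\hat{\rho}_{jl}(\hat{\alpha}_j,\hat{\alpha}_l,\hat{\gamma}_{jl})$. That map is continuous on a neighbourhood of the true parameter point (the discriminant is strictly positive and $\gamma_{jl}>0$ there, since $\rho_{jl}\in(-1,1)$ strictly), so applying the continuous mapping theorem \citep{billingsley:1968} to the joint convergence from part (ii)---with the map leaving the $\hat{\alpha}_j$ fixed, sending each triple $(\hat{\alpha}_j,\hat{\alpha}_l,\hat{\gamma}_{jl})$ to $\hat{\rho}_{jl}$, and setting $\hat{\rho}_{jj}=1$---yields $(\hat{\balpha},\hat{\Sigma})\stp(\balpha,\Sigma)$.

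The argument contains nothing genuinely new beyond \Cref{thm:2dimconv}; the only point requiring a word of care is the closure of $\PGC$ under bivariate marginals, and within that the positive definiteness of the $2\times2$ principal submatrices of $\Sigma$ (ensuring $\rho_{jl}\in(-1,1)$ so that \Cref{cor:2dim} and \Cref{lem:solverho} apply). Accordingly I expect the main ``obstacle'', such as it is, to be purely organizational: checking that the feasibility and continuity conditions needed for each $\hat{\rho}_{jl}$ hold simultaneously across all pairs $j\neq l$, which they do under the stated hypothesis.
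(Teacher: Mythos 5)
Your proposal is correct and follows exactly the route the paper intends: the paper's proof of \Cref{thm:multidimconv} is simply the one-line reduction to \Cref{thm:2dimconv} applied pairwise, and you have supplied the (omitted) details — closure of $\PGC$ under bivariate marginals via the principal submatrix $\Sigma_{\{j,l\}}$, joint convergence of the estimators to deterministic limits, and the continuous mapping theorem for the $\hat{\rho}_{jl}$. Nothing in your argument deviates from or adds to what the paper does implicitly.
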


\begin{proof}
    The proof follows from \Cref{thm:2dimconv} and is omitted here.
\end{proof}


\begin{remark}[On estimating $\btheta$]
Let $\bX_1,\ldots, \bX_n$ be i.i.d. $\PGC(\balpha,\btheta,\Sigma)$. In this paper we concentrate on understanding tail behavior and dependence, and hence our focus is on estimating $\balpha$ and $\Sigma$. Nevertheless, a consistent estimator of $\btheta$ can easily be proposed.  For $i=1,\ldots, d$, if $\hat{\alpha}_i$ is a consistent estimator of $\alpha_i$, then
\begin{align}\label{eq:estC}
    \hat{\theta}_i = \frac{k}{n} (X^i_{(k)})^{\hat{\alpha}_i}
\end{align}
is a consistent estimator of $\theta_i$, i.e., $\hat{\theta}_i\stp \theta_i$ as $n\to\infty, k\to \infty, k/n \to 0$; cf. \citet{hill:1975} for the case where $\alpha_i$ is estimated by the Hill estimator.
\end{remark}

\section{Asymptotic normality of the estimators}\label{sec:asynorm}

In \Cref{sec:parmeterest} we used the Hill estimator for estimating the tail index parameters. In order to provide confidence bounds around the estimators we need a weak consistency property for the estimators. It is well-known that the Hill estimator is asymptotically normal with a non-random constant mean under a \emph{second order regularly varying} assumption on the relevant distributional tail  (\citep{dehaan:resnick:1996}). Interestingly, second order regular variation, with appropriate formulation turns out to be equivalent to asymptotic normality of Hill's estimator. We use \citet[Theorem 4]{stupfler:2019} to show asymptotic normality of the multivariate Hill estimator under the $\PGC$ model. In addition to  second order regular variation on the marginals, the result also requires an appropriate condition on the joint distribution tail which is satisfied by the $\PGC$ distribution,  cf. \Cref{prop:condR}.

\begin{definition}[Second order regular variation; cf. \cite{geluk:dehaan:resnick:starica:1997, dehaan:ferreira:2006}]\label{def:2rv}
    A function $U$ on $[0,\infty)$ is second-order regularly varying (at $+\infty$) with first-order parameter $\beta$ and second-order parameter $\rho\le 0$ if there exists a function $A(t)\to 0$ (of ultimately constant sign) such that:
    \begin{align*}
        \lim_{t\to\infty} \frac{\frac{U(tx)}{U(t)} - x^{\beta}}{A(t)} = x^{\beta}\frac{x^{\rho}-1}{\rho}.
    \end{align*}
    We write $U \in 2\RV(\beta,\rho, A)$.
\end{definition}
 Let $Z\sim F$ and define its quantile function $U(t)=\ov{F}^{\leftarrow}(1-1/t)$. For the asymptotic normality of multivariate Hill estimators, \citet{stupfler:2019} assumes each marginal quantile $U \in 2\RV$. 
Incidentally, for some $\alpha>0$ and $\rho\le 0$,  $\ov{F}\in 2\RV(-\alpha,\rho, A)$ is equivalent to $U\in 2\RV(1/\alpha, \rho/\alpha, A)$, cf. \cite[Theorem 2.3.9]{dehaan:ferreira:2006} or \cite[Lemma 3.2]{das:kratz:2020}. Hence we assume $2\RV$ on the relevant distribution tails for our result. In \Cref{table:RVdist} we provide a list of marginal distributions $F$ where $\ov{F}(x)\sim C x^{-\alpha}$ (as $x\to\infty$) and $\ov{F} \in 2\RV(-\alpha,\rho,A)$. Incidentally, although the standard Pareto distribution has exact power law tails, it fails to satisfy $2\RV$; nevertheless the Hill estimator still remains asymptotically normal in this case (\cite{hall:1982}).

We need a further result on bivariate tail dependence for obtaining the asymptotic normality of the estimators of the tail indices; this  is stated next. For any jointly distributed vector $(Z^1,Z^2) $ with $Z^i\sim F_{Z^i},\,i=1,2$, the tail dependence function $R_{Z^1,Z^2}$ on $[\bzero,\binfty]\setminus\{\binfty\}$ is defined as
\[R_{Z^1,Z^2}(z_1,z_2) :=\lim_{t\to\infty} t\, \P\left(\ov{F}_{Z^1}(Z^1)\le \frac {z_1}t, \ov{F}_{Z^2}(Z^2)\le \frac {z_2}t\right),\]
Incidentally, $R_{Z^1,Z^2}$ also uniquely characterizes the bivariate stable tail dependence function (\cite{drees:huang:1998}), see \cite{stupfler:2019,cai:einmahl:dehaan:zhou:2015}. The next result provides limits for this quantity for $\PGC$ distributions.

\begin{proposition}\label{prop:condR}
  Let $\bX=(X^1,X^2)\in \PGC(\alpha_1,\alpha_2,\theta_1,\theta_2,\rho)$ where $\alpha_1,\alpha_2,\theta_1,\theta_2>0$ and $\rho\in (-1,1)$. Also assume that $\rho<\min(\sqrt{\alpha_2/\alpha_1},\sqrt{\alpha_1/\alpha_2})$.
  Denote $X^i\sim F_i, i=1,2$ and $Y=\min(X^1,X^2)\sim G$. Then the following holds for $(w_1,w_2)\in [0,\infty)^2$:
  \begin{align}
    & R_{1,2}(w_1,w_2)  := R_{X^1,X^2} (w_1,w_2) =0 \label{eq:RX1X2},\\
     & R_{1,12}(w_1,w_2) := R_{X^1,Y} (w_1,w_2) =\begin{cases} 0, & \rho < \sqrt{\alpha_2/\gamma} \\
                                                   \frac{w_1}{2}, & \rho = \sqrt{\alpha_2/\gamma}\\
                                                   w_1, & \rho > \sqrt{\alpha_2/\gamma} \end{cases}, \label{eq:RX1Y}\\
     & R_{2,12}(w_1,w_2) :=R_{X^2,Y} (w_1,w_2) =\begin{cases} 0, & \rho < \sqrt{\alpha_1/\gamma} \\
                                                \frac{w_2}{2}, & \rho = \sqrt{\alpha_1/\gamma}\\
                                                 w_2, & \rho > \sqrt{\alpha_1/\gamma} \end{cases}. \label{eq:RX2Y}
  \end{align}  
\end{proposition}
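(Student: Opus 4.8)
The plan is to use the Gaussian copula structure to turn each of the three limits into a joint‑exceedance probability for a bivariate standard normal vector, and then to read off the regimes from the classical asymptotics for such probabilities, feeding in the regular variation of $\ov{G}$ supplied by \Cref{cor:2dim}(i). Concretely, realize $\bX$ on a common probability space as $X^j=F_j^{\leftarrow}(\Phi(W^j))$, $j=1,2$, with $(W^1,W^2)\sim\mathcal{N}(\bzero,\Sigma)$ of correlation $\rho$; since each $F_j$ is continuous and strictly increasing, $\ov{F}_j(X^j)=\ov\Phi(W^j)$ almost surely. Writing $u_i(t):=\Phi^{\leftarrow}(1-w_i/t)$ (so $\ov\Phi(u_i(t))=w_i/t$ \emph{exactly} and $u_i(t)\sim\sqrt{2\log t}$), the event $\{\ov{F}_i(X^i)\le w_i/t\}$ equals $\{W^i\ge u_i(t)\}$. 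This already handles \eqref{eq:RX1X2}: $R_{1,2}$ is the tail‑dependence function of the bivariate Gaussian copula (which coincides with its survival copula), and this vanishes for $\rho<1$ by asymptotic tail independence \cite{sibuya:1960} — concretely $t\,\P(W^1\ge u_1(t),W^2\ge u_2(t))$ has order $t^{1-2/(1+\rho)}$ up to a power of $\log t$, and $1-2/(1+\rho)<0$.

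For \eqref{eq:RX1Y}, set $Y=\min(X^1,X^2)\sim G$ and $b(t):=\ov{G}^{\leftarrow}(w_2/t)$, so $\{\ov{G}(Y)\le w_2/t\}=\{X^1\ge b(t),X^2\ge b(t)\}$. By \Cref{cor:2dim}(i), $\ov{G}\in\RV_{-\gamma}$ with $\gamma$ as in \eqref{def:pgcgamma}, hence $b(t)\in\RV_{1/\gamma}$; and since $\gamma-\alpha_1=(\sqrt{\alpha_2}-\rho\sqrt{\alpha_1})^2/(1-\rho^2)>0$ (using $\rho<\sqrt{\alpha_2/\alpha_1}$) and symmetrically $\gamma>\alpha_2$, one has $1/\gamma<1/\alpha_1$ and therefore $b(t)=o\big(a_1(t)\big)$ with $a_1(t):=\ov{F}_1^{\leftarrow}(w_1/t)\in\RV_{1/\alpha_1}$. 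Hence for all large $t$, $\{\ov{F}_1(X^1)\le w_1/t\}\cap\{\ov{G}(Y)\le w_2/t\}=\{X^1\ge a_1(t)\}\cap\{X^2\ge b(t)\}$, and translating to the normal scale,
\[R_{1,12}(w_1,w_2)=\lim_{t\to\infty}t\,\P\big(W^1\ge u_1(t),\,W^2\ge v(t)\big),\qquad v(t):=\Phi^{\leftarrow}(F_2(b(t))).\]
Since $\ov\Phi(v(t))=\ov{F}_2(b(t))\sim\theta_2 b(t)^{-\alpha_2}$ and $\log b(t)\sim\gamma^{-1}\log t$, this gives $v(t)\sim\sqrt{(2\alpha_2/\gamma)\log t}$, so $v(t)/u_1(t)\to\sqrt{\alpha_2/\gamma}$; the trichotomy in \eqref{eq:RX1Y} is exactly the comparison of $\rho$ with this limiting ratio.

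It then remains to evaluate $\lim_t t\,\P(W^1\ge u_1(t),W^2\ge v(t))$ in the three regimes, using the standard bivariate‑normal joint‑tail asymptotics (as in the threshold‑crossing references cited in \Cref{sec:probext}). If $\rho<\sqrt{\alpha_2/\gamma}$, both constraints are active and $\P(W^1\ge u_1,W^2\ge v)=t^{-\delta+o(1)}$ with $\delta=(1-2\rho\sqrt{\alpha_2/\gamma}+\alpha_2/\gamma)/(1-\rho^2)$; since $(\rho-\sqrt{\alpha_2/\gamma})^2>0$ gives $\delta>1$, $t\,\P\to0$. If $\rho>\sqrt{\alpha_2/\gamma}$, then $v(t)-\rho u_1(t)\to-\infty$, so in $\P(W^1\ge u_1,W^2\ge v)=\int_{u_1}^{\infty}\phi(s)\,\ov\Phi\!\big((v-\rho s)/\sqrt{1-\rho^2}\big)\,ds$ the inner factor is bounded by $1$ and converges to $1$ on $[u_1,u_1+1]$, where $\int_{u_1}^{u_1+1}\phi(s)\,ds\sim\ov\Phi(u_1)$; hence $\P(W^1\ge u_1,W^2\ge v)\sim\ov\Phi(u_1)=w_1/t$ and $R_{1,12}=w_1$. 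The boundary case $\rho=\sqrt{\alpha_2/\gamma}$ is the delicate one: I would push the quantile expansions to $u_1(t)^2=2\log t+O(\log\log t)$ and $v(t)^2=(2\alpha_2/\gamma)\log t+O(\log\log t)$ (the latter needing the explicit power‑times‑$\log$ form of $\ov{G}$ from \Cref{cor:2dim}(i) together with $\ov{F}_2(y)\sim\theta_2 y^{-\alpha_2}$), giving $v(t)^2-\rho^2u_1(t)^2=O(\log\log t)$ and hence $v(t)-\rho u_1(t)\to0$; then substituting $s=u_1(t)+x/u_1(t)$ in the integral, using $\phi(u_1+x/u_1)\sim\phi(u_1)e^{-x}$ and $\ov\Phi\!\big((v-\rho s)/\sqrt{1-\rho^2}\big)\to\ov\Phi(0)=\tfrac12$ for each fixed $x$, dominated convergence yields $\P(W^1\ge u_1,W^2\ge v)\sim\tfrac12\ov\Phi(u_1)=w_1/(2t)$, i.e.\ $R_{1,12}=w_1/2$. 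Finally \eqref{eq:RX2Y} follows by exchanging $X^1$ and $X^2$ (so $\alpha_2$ becomes $\alpha_1$ in the threshold ratio). The main obstacle is exactly this boundary case: it is decided by the $O(\log\log t)$ corrections to the normal quantiles, not by their leading order, so the expansions of $u_1(t),v(t)$ — hence of $b(t)$ and of the slowly varying part of $\ov{G}$ — must be carried to that accuracy.
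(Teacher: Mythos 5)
Your proof is correct and follows essentially the same route as the paper's: both reduce each $R$-limit to a joint exceedance probability for the underlying bivariate normal vector via the copula/quantile transform, identify $\sqrt{\alpha_2/\gamma}$ (resp.\ $\sqrt{\alpha_1/\gamma}$) as the limiting ratio of the two normal thresholds using $\gamma>\max(\alpha_1,\alpha_2)$, and split on the comparison of $\rho$ with that ratio, with the sub-unit exponent $(c-\rho)^2>0$ argument giving the null case. The only real difference is in how the normal tail asymptotics are executed: the paper funnels all three regimes through its general Gaussian exceedance result (\Cref{prop:gausstail}, i.e.\ the quadratic program of \Cref{lem:qp}) and is terse on the boundary and super-critical cases, whereas you compute those two cases directly from the conditioning integral $\int_{u_1}^{\infty}\phi(s)\,\ov\Phi\bigl((v-\rho s)/\sqrt{1-\rho^2}\bigr)\,ds$; your explicit verification that $v(t)-\rho u_1(t)\to 0$ at the boundary (via the $O(\log\log t)$ control on $v^2-\rho^2u_1^2$) makes the factor $\ov\Phi(0)=\tfrac12$ transparent, which the paper leaves implicit.
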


 The proof of \Cref{prop:condR} is given in \Cref{subsec:proofcondR}.

\begin{theorem}\label{thm:mainasynormbiv}
      Suppose $\bX_1,\ldots,\bX_n$ are i.i.d. bivariate random vectors with $\bX_i=(X^1_{i},X^2_{i})\sim F \in \PGC(\alpha_1,\alpha_2, \theta_1, \theta_2, \rho)$, with $X_i\sim F_i, i=1,2 $ and the following holds.
      \begin{enumerate}[(i)]
      \item Let $\rho< \min(\sqrt{\alpha_2/\alpha_1},\sqrt{\alpha_1/\alpha_2})$ and let $\gamma:= \frac{\alpha_1+\alpha_2-2\rho\sqrt{\alpha_1\alpha_2}}{1-\rho^2}$.
      \item Define $Y_i=\min(X^{1}_i,X^{2}_i)\sim G, i=1,\ldots n$, and define (Hill) estimators $ H^{\alpha_1}_{k_1}, H^{\alpha_2}_{k_2}, H^{\gamma}_{k_{12}}$ as in \Cref{thm:2dimconv}. 
      \item Let $\ov{F}_i\in 2\RV(-\alpha_i,\delta_i, A_i), i=1,2, \delta_1,\delta_2 \le 0$. From \Cref{thm:2dimconv} we have $\ov{G}\in\RV(-\gamma)$; additionally assume $\ov{G}\in 2\RV(-\gamma,\delta_{12}, A_{12})$ for $\delta_{12}\le 0$.
      \end{enumerate}
For $k_1=k_1(n), k_2=k_2(n), k_{12}=k_{12}(n) $ where $1 < k_1,k_2,k_{12} < n$, assume that as $n\to \infty, k_j\to \infty$ with $k_j/n\to 0, j=1,2,12$ we have
      \begin{itemize}
          \item $\sqrt{k_j} A_j(n/k_j) \to \lambda_j \in \R$,
          \item $k_1/k_j\to q_j\in (0,\infty)$.
      \end{itemize}
      Then as $n\to \infty$,
   \begin{align}\label{eq:2dimnorm}
       \sqrt{k_1} \left(\left(H^{\alpha_1}_{k_1},H^{\alpha_2}_{k_2}, H^{\gamma}_{k_{12}}\right) - \left(\frac{1}{\alpha_1},\frac{1}{\alpha_2}, \frac{1}{\gamma}\right)\right) \std \mathcal{N}(\mu(\bq), \Gamma(\bq))
   \end{align}
   where
   \[ \mu_j(\bq) = \sqrt{q_j}\frac{\lambda_j}{1-\delta_{j}}, \quad\quad \Gamma_{j,k}(\bq) = \begin{bmatrix} \frac{1}{\alpha_1^2} & \frac{R_{1,2}(q_1,q_{2})}{\alpha_1\alpha_2} & \frac{R_{1,12}(q_1,q_{12})}{\alpha_1\gamma}\\[1em]  \frac{R_{1,2}(q_1,q_{2})}{\alpha_1\alpha_2} &\frac{1}{\alpha_2^2} & \frac{R_{2,12}(q_2,q_{12})}{\alpha_2\gamma}\\[1em] \frac{R_{1,12}(q_1,q_{12})}{\alpha_1\gamma} & \frac{R_{2,12}(q_2,q_{12})}{\alpha_2\gamma} & \frac{1}{\gamma^2} \end{bmatrix}.\]
and $R_{1,2}, R_{1,12}, R_{2,12}$ are as defined in \Cref{prop:condR}.
  \end{theorem}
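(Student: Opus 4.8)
The plan is to deduce \eqref{eq:2dimnorm} as a direct application of the joint asymptotic normality theorem for vectors of marginal Hill estimators, \citet[Theorem 4]{stupfler:2019}, to the three-dimensional i.i.d. sample $\bZ_i=(Z^1_i,Z^2_i,Z^3_i):=(X^1_i,X^2_i,Y_i)$, where $Y_i=\min(X^1_i,X^2_i)$. Under this identification the estimators $H^{\alpha_1}_{k_1}$, $H^{\alpha_2}_{k_2}$, $H^{\gamma}_{k_{12}}$ are exactly the coordinatewise Hill estimators of $\bZ$ with the possibly distinct intermediate sequences $k_1,k_2,k_{12}$, the centring vector $(1/\alpha_1,1/\alpha_2,1/\gamma)$ is the vector of extreme-value indices of the three coordinates (using \Cref{thm:2dimconv}(i) for the third one), and the deterministic coupling $Y=\min(X^1,X^2)$ affects only the \emph{dependence} among the coordinates, which is precisely what is controlled by the joint-tail hypothesis of \citet{stupfler:2019}.

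First I would check the marginal hypotheses of \citet[Theorem 4]{stupfler:2019}, namely that each coordinate of $\bZ$ has a second-order regularly varying tail: $\ov{F}_1\in2\RV(-\alpha_1,\delta_1,A_1)$ and $\ov{F}_2\in2\RV(-\alpha_2,\delta_2,A_2)$ are assumed in (iii), and $\ov{G}\in\RV_{-\gamma}$ holds by \Cref{thm:2dimconv}(i) and is strengthened to $\ov{G}\in2\RV(-\gamma,\delta_{12},A_{12})$ by the extra postulate in (iii); the translation to the quantile-function formulation required there is the equivalence $\ov{F}\in2\RV(-\alpha,\delta,A)\iff U\in2\RV(1/\alpha,\delta/\alpha,A)$ recalled before \Cref{prop:condR} (cf. \cite[Theorem 2.3.9]{dehaan:ferreira:2006}, \cite[Lemma 3.2]{das:kratz:2020}).

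Next I would verify the joint tail-dependence condition of \citet{stupfler:2019}, which asks that for each pair of coordinates of $\bZ$ the tail dependence function exist on $[\bzero,\binfty]\setminus\{\binfty\}$ and be continuous. These three functions are $R_{1,2}=R_{X^1,X^2}$, $R_{1,12}=R_{X^1,Y}$, $R_{2,12}=R_{X^2,Y}$, and they are computed explicitly in \Cref{prop:condR}: under the standing assumption $\rho<\min(\sqrt{\alpha_2/\alpha_1},\sqrt{\alpha_1/\alpha_2})$ one has $R_{1,2}\equiv0$, while $R_{1,12}$ and $R_{2,12}$ take the value $0$, $w_i/2$ or $w_i$ according to the sign of $\rho-\sqrt{\alpha_{3-i}/\gamma}$; each of these is continuous (and automatically homogeneous of degree one), so the condition is met. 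The remaining hypotheses are the sequence conditions $k_j\to\infty$, $k_j/n\to0$, $\sqrt{k_j}A_j(n/k_j)\to\lambda_j$ and $k_1/k_j\to q_j\in(0,\infty)$ (so $q_1=1$), which are exactly the displayed assumptions of the theorem.

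With all hypotheses verified, \citet[Theorem 4]{stupfler:2019} yields \eqref{eq:2dimnorm}, and the limiting mean $\mu(\bq)$ and covariance $\Gamma(\bq)$ follow by specialising the general expressions there: the diagonal entries of $\Gamma(\bq)$ are the classical single-sequence Hill variances $1/\alpha_1^2$, $1/\alpha_2^2$, $1/\gamma^2$; the $(i,j)$ off-diagonal entry is the relevant pairwise tail dependence function from \Cref{prop:condR} evaluated at $(q_i,q_j)$, divided by the product of the two associated tail indices; and the $j$-th component of $\mu(\bq)$ is the second-order Hill bias $\sqrt{q_j}\,\lambda_j/(1-\delta_j)$, the factor $\sqrt{q_j}$ arising from renormalising everything to the common scale $\sqrt{k_1}$ rather than $\sqrt{k_j}$. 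Since \Cref{prop:condR} already supplies the pairwise tail dependence functions, the genuinely substantive points are (a) confirming that the joint-tail condition demanded by \citet{stupfler:2019} is precisely the one delivered by \Cref{prop:condR} — existence and continuity of each pairwise $R$ — and (b) threading the three distinct intermediate sequences $k_1,k_2,k_{12}$ through the general formula so that the $\sqrt{q_j}$ prefactors in $\mu(\bq)$ and the mixed arguments $(q_i,q_j)$ in the off-diagonal terms of $\Gamma(\bq)$ come out correctly; I expect (b), the multi-sequence bookkeeping, to be the only step requiring real care, the rest being verification of hypotheses and substitution.
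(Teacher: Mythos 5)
Your proposal is correct and follows exactly the same route as the paper, which simply invokes \citet[Theorem 4]{stupfler:2019} after noting that all its hypotheses (marginal second-order regular variation, the pairwise tail dependence functions from \Cref{prop:condR}, and the intermediate-sequence conditions) are satisfied. Your write-up merely makes explicit the verification that the paper leaves implicit, including the multi-sequence bookkeeping for $\sqrt{q_j}$ and the arguments $(q_i,q_j)$.
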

 \begin{proof}
      The proof is a direct application of \cite[Theorem 4]{stupfler:2019}, since all the necessary conditions are satisfied.
  \end{proof}
In \Cref{thm:mainasynormbiv}, the bias term $\mu(\bq)$ in the limit is not necessarily zero (component-wise) and depends on the second order regularly varying parameter of the relevant tail distribution (cf. \cite{stupfler:2019}); nevertheless, with an appropriate choice of $k_j$, it is possible to have $\lambda_j=0, j=1,2,12$ making $\mu(\bq)=\bzero$. In real data examples, such choices need to be made depending on data diagnostics. The appropriate choice of the intermediate sequence $k_j$ still remains relatively unresolved even in univariate tail estimation problems, see \cite{nguyen:samorodnitsky:2012,clauset:shalizi:newman:2009} for some proposed options.

  \begin{corollary}\label{cor:onrho}
   Let the assumptions and notations of \Cref{thm:mainasynormbiv} hold. Then, as $n\to \infty, k_j\to \infty$ with $k_j/n\to 0, j=1,2,12$, so that $\lambda_1=\lambda_2=\lambda_{12}=0$, we have
  \begin{align}\label{eq:rho}
       \sqrt{k}_1\left(\hat{\rho} -\rho\right) \std \mathcal{N}(0, \nu)
   \end{align}
   where $\nu= \left(\nabla h(\alpha_1,\alpha_2,\gamma)\right)^{\top} \Gamma(\bzero) (\nabla h(\alpha_1,\alpha_2,\gamma))$ with $\Gamma(\bzero)=\text{Diag}\left(1/\alpha_1^2,1/\alpha_2^2,1/\gamma^2\right)$  and
   $$\rho=h(\alpha_1,\alpha_2,\gamma) = \frac{\sqrt{{\alpha}_1{\alpha}_2}-\sqrt{{\alpha}_1{\alpha}_2+{\gamma}^2-{\gamma}({\alpha}_1+{\alpha}_2)}}{{\gamma}}.$$   
  \end{corollary}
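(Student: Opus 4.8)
The plan is to derive \eqref{eq:rho} from the joint asymptotic normality in \eqref{eq:2dimnorm} of Theorem~\ref{thm:mainasynormbiv} by applying the multivariate delta method to the map that recovers $\rho$ from the three tail indices. First I would record that, by Theorem~\ref{thm:2dimconv}(iii) and the definition of $\hat{\rho}$ in \eqref{def:rhohat}, we have $\hat{\rho} = h(\hat{\alpha}_1,\hat{\alpha}_2,\hat{\gamma})$ where $h(a_1,a_2,g) = \big(\sqrt{a_1 a_2} - \sqrt{a_1 a_2 + g^2 - g(a_1+a_2)}\big)/g$, and that $\hat{\alpha}_j = 1/H^{\alpha_j}_{k_j}$, $\hat{\gamma} = 1/H^{\gamma}_{k_{12}}$. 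So $\hat{\rho}$ is a smooth function of the Hill estimators $(H^{\alpha_1}_{k_1}, H^{\alpha_2}_{k_2}, H^{\gamma}_{k_{12}})$, say $\hat{\rho} = \phi(H^{\alpha_1}_{k_1}, H^{\alpha_2}_{k_2}, H^{\gamma}_{k_{12}})$ with $\phi(u_1,u_2,u_3) = h(1/u_1, 1/u_2, 1/u_3)$, and this map is continuously differentiable in a neighborhood of $(1/\alpha_1, 1/\alpha_2, 1/\gamma)$ provided the quantity under the inner square root is strictly positive there; this positivity holds precisely because the assumption $\rho < \min(\sqrt{\alpha_2/\alpha_1},\sqrt{\alpha_1/\alpha_2})$ (equivalently $\gamma$ given by the first branch in Theorem~\ref{thm:2dimconv}(i)) makes $\alpha_1\alpha_2 + \gamma^2 - \gamma(\alpha_1+\alpha_2) = \gamma^2\rho^2 \ge 0$, with the relevant branch of the square root being the feasible root of Lemma~\ref{lem:solverho}.

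Next I would invoke \eqref{eq:2dimnorm} with the specialization $\lambda_1 = \lambda_2 = \lambda_{12} = 0$, so that the bias term $\mu(\bq) = \bzero$; under this choice the limiting covariance in \eqref{eq:2dimnorm} is $\Gamma(\bq)$. I would then note that when $\lambda_j=0$ the off-diagonal entries of $\Gamma(\bq)$ also vanish: by Proposition~\ref{prop:condR}, $R_{1,2} \equiv 0$ on all of $[0,\infty)^2$, and since we are in the regime $\rho < \min(\sqrt{\alpha_2/\alpha_1},\sqrt{\alpha_1/\alpha_2})$ one checks $\rho < \sqrt{\alpha_2/\gamma}$ and $\rho < \sqrt{\alpha_1/\gamma}$ (these inequalities follow from $\gamma > \max(\alpha_1,\alpha_2)$ in this regime, cf.\ Remark~\ref{rem:2dimprobext}(b)), so that $R_{1,12} \equiv 0$ and $R_{2,12} \equiv 0$ as well. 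Hence $\Gamma(\bzero) = \mathrm{Diag}(1/\alpha_1^2, 1/\alpha_2^2, 1/\gamma^2)$, which is exactly the matrix appearing in the statement. Applying the delta method to $\phi$ then gives
\[
    \sqrt{k_1}\,(\hat\rho - \rho) = \sqrt{k_1}\,\big(\phi(H^{\alpha_1}_{k_1},H^{\alpha_2}_{k_2},H^{\gamma}_{k_{12}}) - \phi(1/\alpha_1,1/\alpha_2,1/\gamma)\big) \std \mathcal{N}\!\big(0,\, (\nabla\phi)^\top \Gamma(\bzero)\, \nabla\phi\big),
\]
evaluated at $(1/\alpha_1,1/\alpha_2,1/\gamma)$. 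The last step is a chain-rule bookkeeping computation: by the substitution $u_j = 1/a_j$ one has $\nabla_u \phi = -\,\mathrm{Diag}(\alpha_1^2,\alpha_2^2,\gamma^2)\,\nabla h(\alpha_1,\alpha_2,\gamma)$, so $(\nabla_u\phi)^\top \Gamma(\bzero)\,\nabla_u\phi = (\nabla h)^\top \mathrm{Diag}(\alpha_1^2,\alpha_2^2,\gamma^2)\,\mathrm{Diag}(1/\alpha_1^2,1/\alpha_2^2,1/\gamma^2)\,\mathrm{Diag}(\alpha_1^2,\alpha_2^2,\gamma^2)\,\nabla h = (\nabla h)^\top \mathrm{Diag}(\alpha_1^2,\alpha_2^2,\gamma^2)\,\nabla h$. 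At this point I would reconcile this with the claimed $\nu = (\nabla h)^\top \Gamma(\bzero)(\nabla h)$: the two forms differ, and I would either re-express the answer in terms of $\nabla\phi$ acting on the Hill estimators directly (the natural object, since $H$ is what is asymptotically normal), or absorb the Jacobian of $a\mapsto 1/a$ into the definition of $h$ as a function of the Hill estimators, so that $\nu$ as written in the corollary is understood with $\nabla h$ taken with respect to the $(1/\alpha_1,1/\alpha_2,1/\gamma)$-parametrization.

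The main obstacle I anticipate is not analytic but a matter of consistent parametrization: one must be careful that the gradient $\nabla h$ in $\nu$ is taken with respect to the same variables in which $\Gamma(\bzero)$ is the covariance — and since \eqref{eq:2dimnorm} is stated for the Hill estimators $H$ (which estimate reciprocals $1/\alpha_j, 1/\gamma$), the cleanest route is to phrase $\hat\rho$ as a function $\phi$ of the $H$'s and apply the delta method to $\phi$. Everything else — smoothness of $\phi$ near the limiting point, vanishing of the off-diagonal $R$-terms via Proposition~\ref{prop:condR}, vanishing of the bias via $\lambda_j=0$, and the continuous differentiability needed for the delta method — is immediate from the hypotheses and the results already established. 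A minor point worth stating explicitly is that $\nu > 0$ (non-degeneracy) since $\nabla h \neq \bzero$ and $\Gamma(\bzero)$ is positive definite, so the limit is a genuine normal distribution.
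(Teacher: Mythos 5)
Your proof is correct and follows essentially the same route as the paper: the paper's own argument is simply ``apply the Delta method to \eqref{eq:2dimnorm}, with $\Gamma(\bzero)$ read off from \Cref{prop:condR} and \Cref{thm:mainasynormbiv}.'' Where you add genuine value is in the parametrization bookkeeping. The limit law \eqref{eq:2dimnorm} is for the Hill estimators, which estimate the reciprocals $(1/\alpha_1,1/\alpha_2,1/\gamma)$ with covariance $\Gamma(\bzero)=\mathrm{Diag}(1/\alpha_1^2,1/\alpha_2^2,1/\gamma^2)$, whereas $h$ in the statement is written as a function of $(\alpha_1,\alpha_2,\gamma)$. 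Carrying the Jacobian of $a\mapsto 1/a$ through the chain rule, as you do, gives $\nu=(\nabla h)^{\top}\mathrm{Diag}(\alpha_1^2,\alpha_2^2,\gamma^2)(\nabla h)$ with $\nabla h$ taken in the $(\alpha_1,\alpha_2,\gamma)$ variables, which agrees with the displayed formula only if $\nabla h$ there is understood in the reciprocal parametrization; you are right to flag that the statement, read literally, mixes the two, and your resolution is the correct one. Two small remarks. First, your justification that $\rho<\sqrt{\alpha_2/\gamma}$ and $\rho<\sqrt{\alpha_1/\gamma}$ ``follow from $\gamma>\max(\alpha_1,\alpha_2)$'' is backwards as stated (a larger $\gamma$ makes these inequalities harder, not easier, to satisfy); the inequalities do hold under $\rho<\min\bigl(\sqrt{\alpha_2/\alpha_1},\sqrt{\alpha_1/\alpha_2}\bigr)$, but this requires a short direct computation with the explicit formula for $\gamma$ rather than the monotonicity you invoke. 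Second, establishing them is not strictly needed if one reads $\Gamma(\bzero)$ as $\Gamma(\bq)$ evaluated at $\bq=\bzero$, since $R_{i,j}(0,0)=0$ in every case of \Cref{prop:condR}; your route via the vanishing of the $R$-functions on all of $[0,\infty)^2$ is the more robust reading and is harmless. Neither point affects the validity of your argument.
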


  \begin{proof}
     The value of $\Gamma(\bzero)$ is clear from \Cref{prop:condR} and \Cref{thm:mainasynormbiv}. The result then follows by using the Delta method on \eqref{eq:2dimnorm}. 
  \end{proof}

\begin{table}[ht]
\begin{center}
\renewcommand{\arraystretch}{2}
    \begin{tabular}{|l|l|l|c|c|} \hline
       \sc{Distribution} & Parameters & CDF $F(x)$ or pdf $f(x)$ & $\alpha$ & $\rho$ \\\hline\hline
        Burr & $\beta, \sigma>0$ & $F(x)=1- (1+x^{\sigma})^{-\beta}, \, x>0$ & $\beta\sigma$ & $-\beta$ \\ [1ex]\hline 
        Fr\'echet & $\beta,\sigma>0, \mu\in \R$  & $F(x)=e^{-\left(\frac{x-\mu}{\sigma}\right)^{-\beta}}, \, x>\mu$ & $\beta$ & $-\beta$\\ [1ex]\hline 
        Generalized Pareto (GPD) & $\xi,\sigma>0, \mu\in \R$ & $F(x) = 1- \left(1+\frac{\xi(x-\mu)}{\sigma}\right)^{-\frac1{\xi}}, x\ge\mu$ & $\frac{1}{\xi}$ & $-1$\\ [1ex]\hline 
         Hall/Weiss & $\beta, \sigma>0$ & $F(x)=1- \frac12x^{-\beta}(1+x^{-\sigma}), \, x>0$ & $\beta$ & $-\sigma$ \\ [1ex]\hline 
        Inverse-gamma & $\beta, \sigma>0$ & $F(x)=1- \frac{\Gamma(\beta,\sigma/x)}{\Gamma{(\beta)}}, \, x>0$& $\beta$ & $-1$\\ [1ex]\hline 
        Log-gamma$^*$ & $\beta>0$ & $F(x)=1- x^{-\beta}(1+\log x), \, x>0$& $\beta$ & $0$\\ [1ex]\hline 
        Student's $t$ & $\nu >0$ & $f(x)=\frac{\Gamma((\nu+1)/2)}{\sqrt{\pi\nu}\Gamma(\nu/2)}\left(1+\frac{x^2}{\nu}\right)^{-\frac{\nu+1}{2}}, \, x\in \R$ & $\nu$ & $-2$\\ [1ex]\hline 
    \end{tabular} 
    \renewcommand{\arraystretch}{1}
\end{center}
    \caption{Univariate distribution functions $F$ where $\ov{F}(x)\sim C x^{-\alpha}$ (as $x\to\infty$) and $\ov{F} \in 2\RV(-\alpha,\rho,A)$. Here $\Gamma(s)$ and $\Gamma(s,y)$ denote the Gamma function and Incomplete Gamma integral respectively, cf. \cite{gradshteyn2014table}. The definition of Log-gamma distribution mentioned is  non-standard and hence appears as Log-gamma$^*$.} \label{table:RVdist}
\end{table}

\begin{remark}
  
A multivariate extension for the asymptotic normality as indicated in \Cref{thm:mainasynormbiv} and \Cref{cor:onrho} is possible, but requires careful bookkeeping of notations, especially while computing covariances. Individual asymptotic normality of parameters are evident from \Cref{thm:mainasynormbiv}.

\end{remark}

\section{Simulation study}\label{sec:simdata}
We exhibit weak consistency of the proposed estimators using a variety of simulated data. The codes used for computing the estimates and generating the plots are available at \href{https://github.com/bikram-jit-das/ParetoGaussCop}{https://github.com/bikram-jit-das/ParetoGaussCop}.
\begin{figure}[ht]
    \centering
    \includegraphics[width=\linewidth]{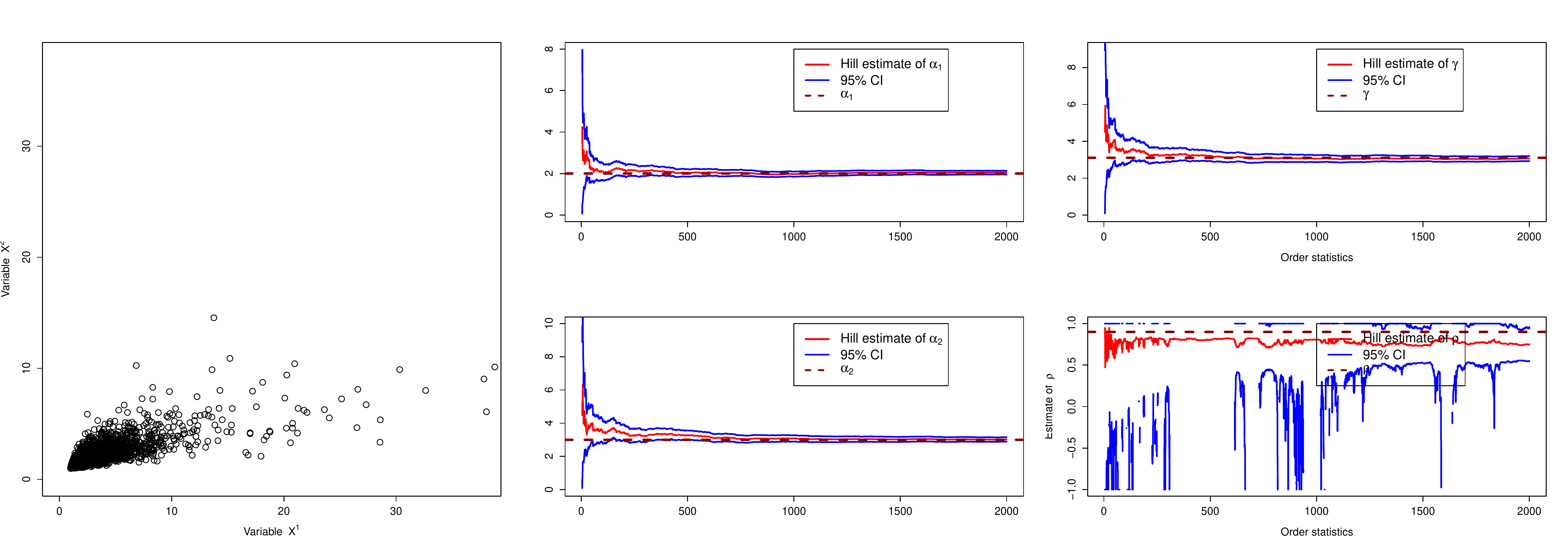} \\ 
 \includegraphics[width=\linewidth]{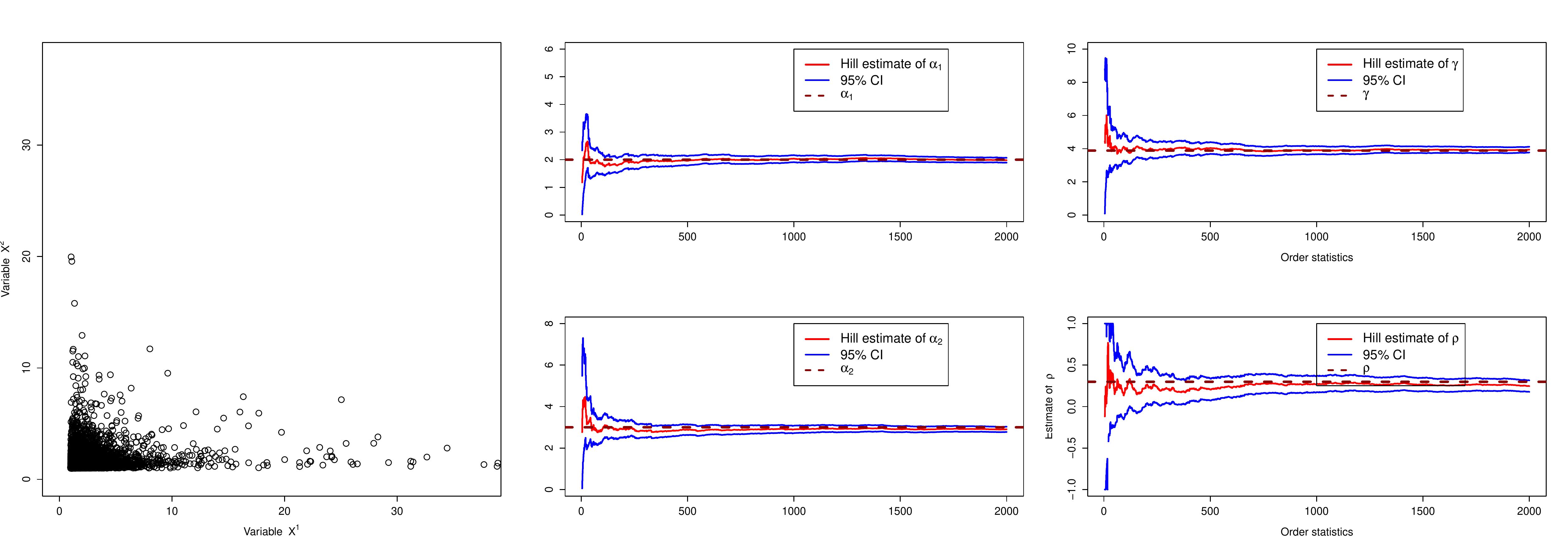} \\ 
  \includegraphics[width=\linewidth]{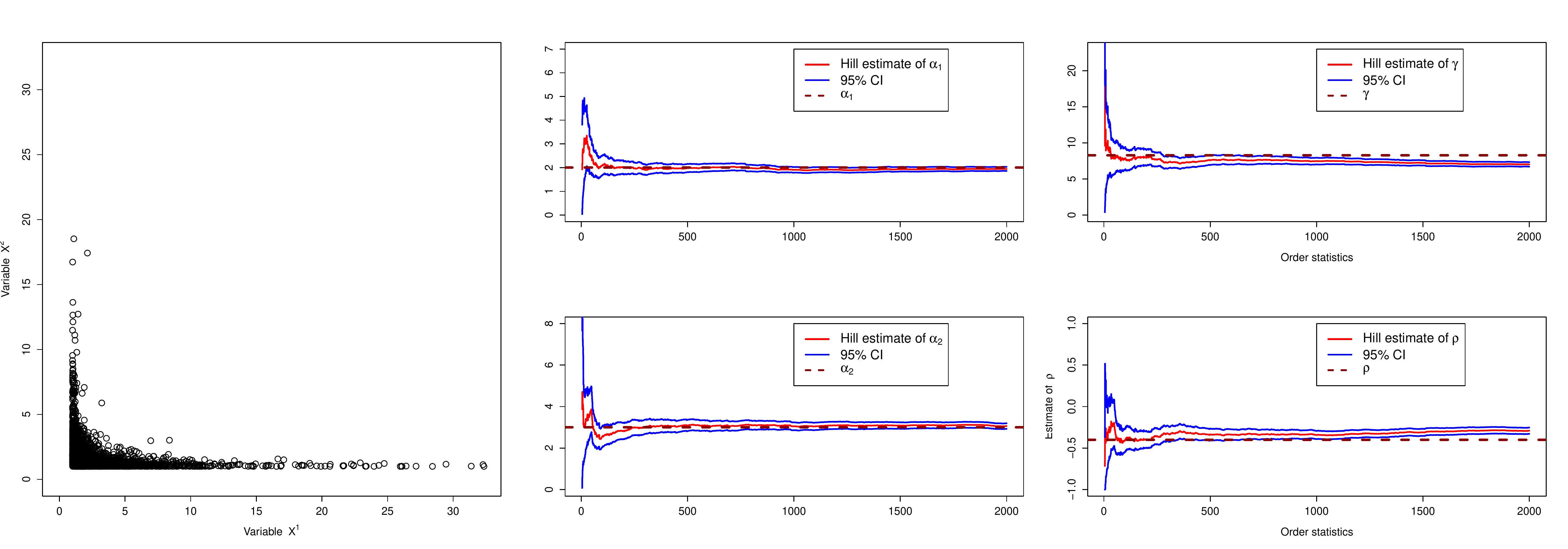}  
 \caption{Parameter estimates from $\PGC$ distributions with Pareto(2) and Pareto(3) marginals from $n=10000$ simulations. (1) Top row: $\rho = 0.9$, (2) middle row: $\rho =  0.3$, (3) bottom row: $\rho= - 0.4$. In each row, (a) left plot: scatter plot of data, (b) middle plot: Hill estimates of $\alpha_1$ and $\alpha_2$ for $20< k \le 2000$ with 95\% confidence bands, (c) right plot: estimates of $\gamma$ and $\rho$ for $20< k \le 2000$ with 95\% confidence bands.The dashed lines in the same color indicate the real parameter value.} 
  \label{fig:pareto}
\end{figure}

We generate $n=10000$ data points from bivariate  $F\in \PGC(\alpha_1,\alpha_2, \theta_1=1, \theta_2=1, \rho)$ with the following choices:
\begin{enumerate}[(1)]
    \item $F_i\sim \text{Pareto} (\alpha_i), i=1,2$ where $\alpha_1=2, \alpha_2=3$.  We take three choices for the correlation parameter $\rho= 0.9, 0.3, -0.4$. Note that since $F_i$'s are exactly Pareto, $\ov{F}_i \not\in 2\RV$. Although we cannot apply \Cref{thm:mainasynormbiv} to get confidence bounds, we know that Hill estimator for $\alpha_i$ here are asymptotically normal (\cite{hall:1982}). We still use the formula from \Cref{thm:mainasynormbiv} to create confidence bounds for $\hat{\gamma}$ and $\hat{\rho}$ here along with those of $\hat{\alpha}_1$ and $\hat{\alpha}_2$.
    \item $F_i\sim \text{Fr\'echet} (\alpha_i,\mu=0,\sigma=1), i=1,2$. Here $\ov{F}_i \in 2\RV(-\alpha_i,-1)$, hence we can and do produce confidence bounds for this case using \Cref{thm:mainasynormbiv}. We take three choices for the correlation parameter $\rho= 0.8, -0.3, -0.8$.

\end{enumerate}

\begin{figure}[ht]
    \centering
 \includegraphics[width=\linewidth]{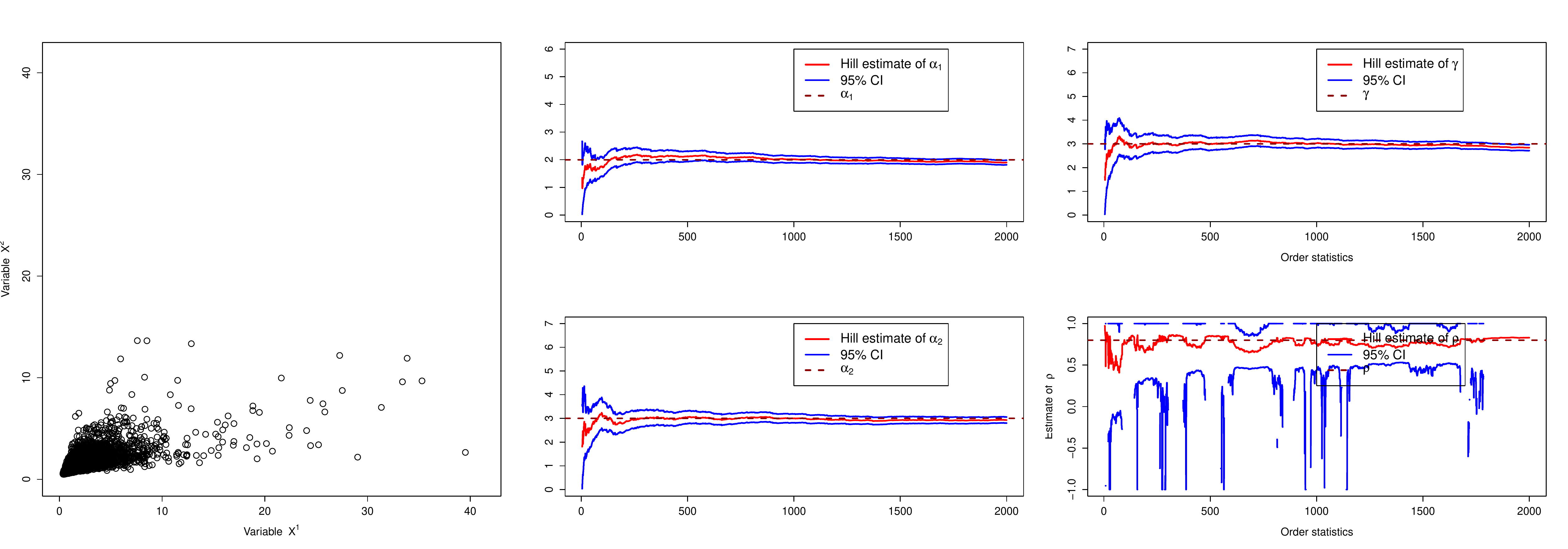} \\ 
  \includegraphics[width=\linewidth]{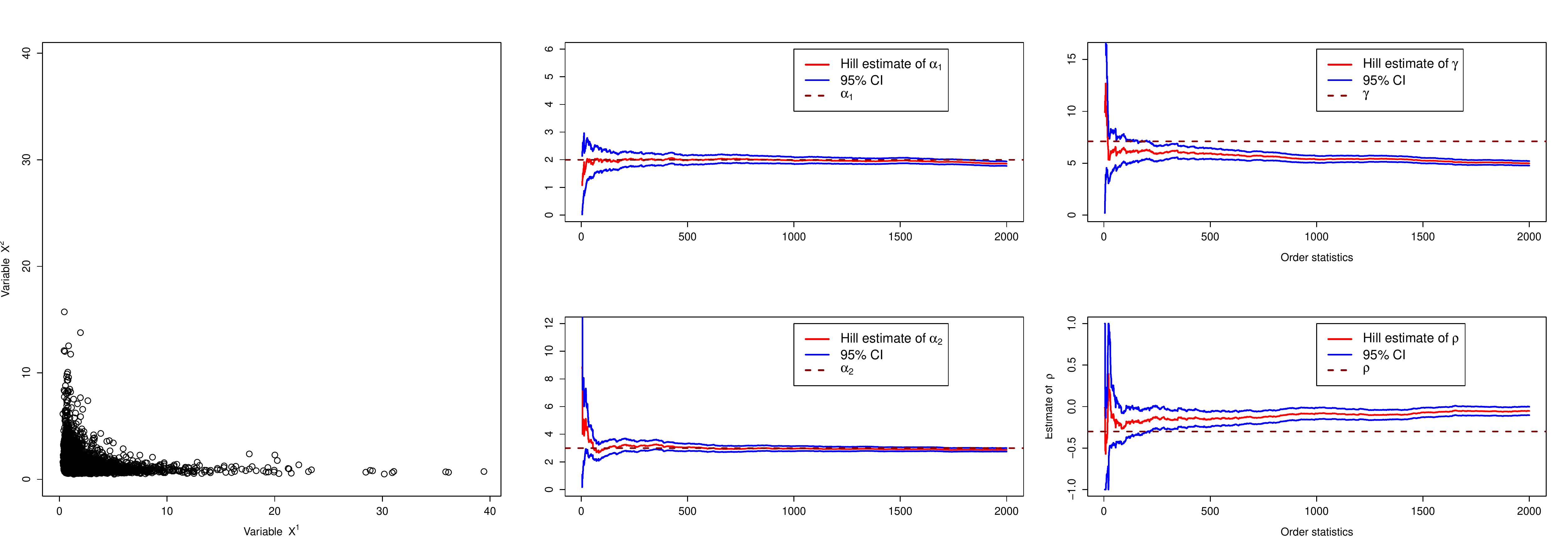}  \\
  \includegraphics[width=\linewidth]{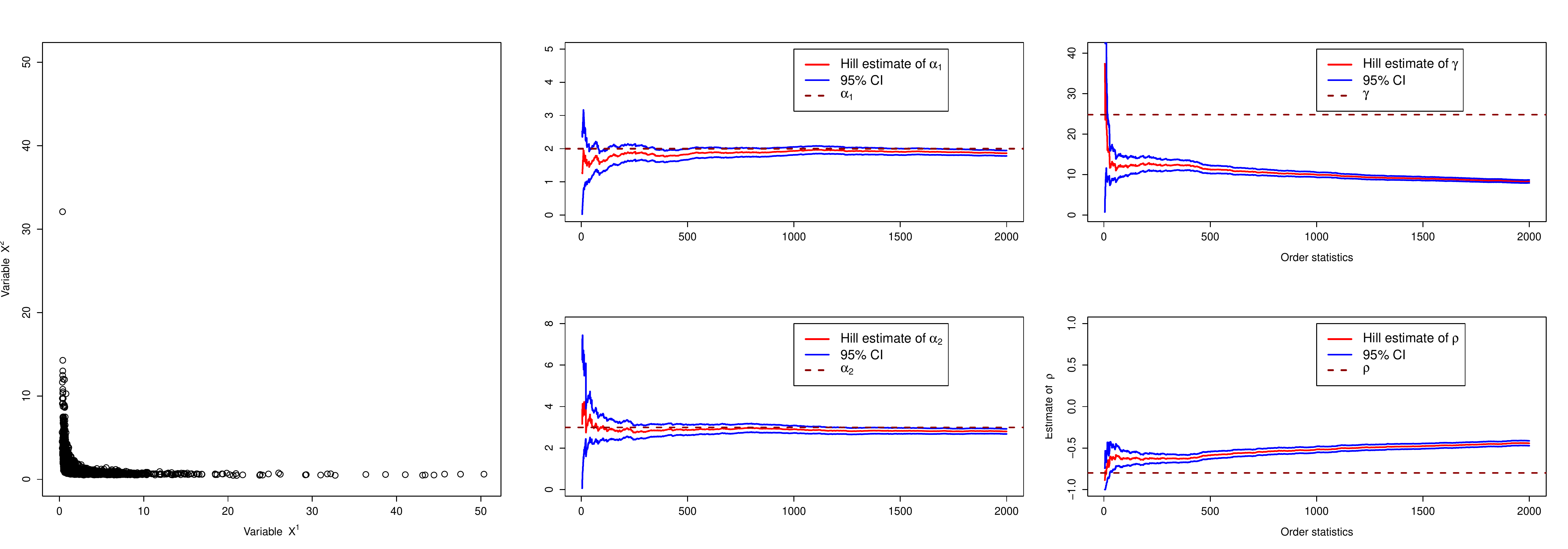} 
   \caption{Parameter estimates from $\PGC$ distributions with Fr\'echet (2,1) and Fr\'echet(3,1) marginals from $n=10000$ simulations. (1) Top row: $\rho = 0.8$, (2) middle row: $\rho =  - 0.3$, (3) bottom row: $\rho= - 0.8$. In each row, (a) left plot: scatter plot of data, (b) middle plot: Hill estimates of $\alpha_1$ and $\alpha_2$ for $20< k \le 2000$ with 95\% confidence bands, (c) right plot: estimates of $\gamma$ and $\rho$ for $20< k \le 2000$ with 95\% confidence bands.The dashed lines in the same color indicate the real parameter value.} 
  \label{fig:frechet}
\end{figure}

The Hill estimators defined in \Cref{thm:2dimconv} are used to estimate $\alpha_1, \alpha_2$ and $\gamma$.
In both \Cref{fig:pareto} and \Cref{fig:frechet}, we have plotted the scatter plot, Hill estimators for the parameters $\alpha_1, \alpha_2$ and $\gamma$, as well as the estimator for $\rho$ based on the Hill estimators of $\alpha_1, \alpha_2$ and $\gamma$. Note that the estimate for $\rho$ is valid if $\rho<\min(\sqrt{\alpha_2/\alpha_1},\sqrt{\alpha_1/\alpha_2}) = 0.82$ in both the Pareto and Fr\'echet cases described. Both the Hill estimates and estimates of $\rho$ are plotted for top $k=1, \ldots,2000$ order statistics and a stability in the plots indicates their consistency  \citep{resnickbook:2007}. 

From \Cref{fig:pareto}, it is clear that $\alpha_1, \alpha_2$ are well-estimated. The estimate of $\gamma$ works well for  positive $\rho$ values, but has a little bias for $\rho<0$. For the $\rho$ values, the estimates are reasonable, although for $\rho= 0.9$ there is a negative bias, moreover the confidence bounds are erratic. Note that since $\rho=0.9>\sqrt{\alpha_1/\alpha_2}=\sqrt{2/3}=0.82$, \Cref{thm:2dimconv} does not propose a consistent estimate for $\rho$ in this case and confidence bounds are even far-fetched; nevertheless our estimates are indicative of a high value for the correlation parameter.

Similarly in \Cref{fig:frechet}, the marginal parameters $\alpha_1, \alpha_2$ are well-estimated. For $\rho= 0.8$, the estimates for $\gamma$ and $\rho$ seem to perform reasonably well, although the confidence bounds for the estimate of $\rho$ behave a bit erratically possibly due to their closeness to the boundary of validity $\min(\sqrt{\alpha_2/\alpha_1},\sqrt{\alpha_1/\alpha_2}) = 0.82$.  For negative values of $\rho$, the estimates seem to exhibit a significant bias, but they do seem to identify the significant negative dependence.

The scatter plots of the data also reveal a little more about the structure of dependence in the data. In both Figures \ref{fig:pareto} and \ref{fig:frechet}, when $\rho>0$, the positive association is evident (although theoretically the model has asymptotic tail independence), on the other hand when $\rho<0$, the scatter plot hug the axes (which is classical evidence for asymptotic independence), but the amount of negative dependence is not quite clear. Nevertheless, the estimates of $\rho$ do identify the values to be positive or negative, albeit with some bias in case $\rho<0$.

\section{Data analysis}\label{sec:realdata}
In this section we investigate four different data sets where diagnostics (e.g., exponential QQ plot \cite{das:resnick:2008,kratz:resnick:1996}) indicate that the marginal distributions are heavy-tailed. Hence a $\PGC$ model is a valid choice and we fit this model. We use the Hill estimator to estimate the tail index parameters and estimate the Gaussian correlation parameter using the estimator for $\rho$ defined in \eqref{def:rhohat}. We plot each estimator for the top few quantiles and stability of the estimators indicate both justification for heavy-tail assumption (\cite{mason:1982}) and a choice of quantile for a point estimate. We also provide 95\% confidence intervals for each estimator using the parameter estimates to compute the variances according to \Cref{thm:mainasynormbiv}.

\begin{example}[Online network data]\label{ex:network}
We analyze two  data sets from the popular repository of network data \href{https://snap.stanford.edu/}{https://snap.stanford.edu/} maintained by Stanford Network Analysis Platform.
\begin{figure}[ht]
    \centering
    \includegraphics[width=\linewidth]{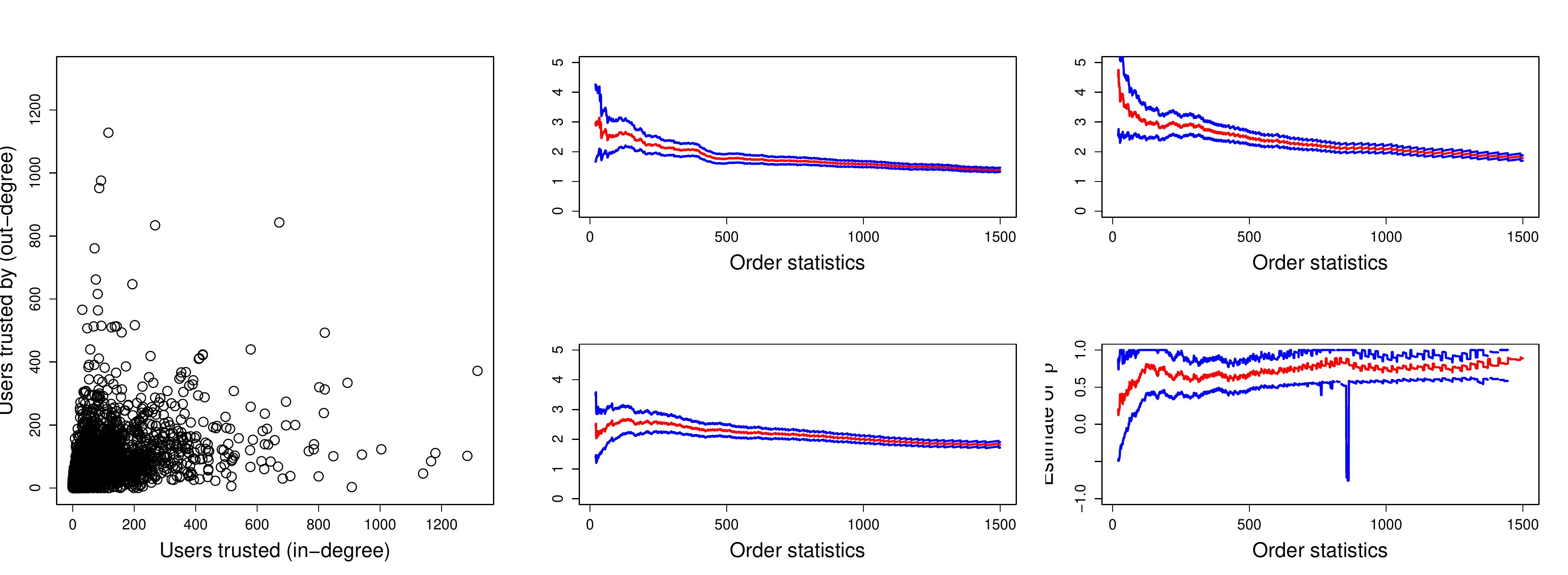} 
 \caption{Epinions online network data. (i) Left plot: scatter plot of \underline{users trusted} (in-degree) vs \underline{users trusted by} (out-degree), (ii) middle plot: Hill estimates of tail indices of \underline{users trusted} (top) and \underline{users trusted} by (bottom), (iii) right plot: Hill estimates of tail index of $\min$(in-degree, out-degree) at the top and estimate of $\rho$ (bottom). Plots of estimates are for top 1500 quantiles and with 95\% estimated confidence intervals.} 
  \label{fig:degsocep}
\end{figure}

\begin{enumerate}[(1)]
    \item The first dataset is from \emph{Epinions.com} which was described in \Cref{sec:intro} for the diagnostic plots in \Cref{fig:epinions}, where we could infer that the in-degrees (users trusted) and out-degrees (users trusted by) appear heavy-tailed using exponential QQ plots. In \Cref{fig:degsocep}, we plot the Hill estimates for the tail parameters $\alpha_1, \alpha_2$ and $\gamma $ for the data of in-degree, out-degree and $\min$(in-degree, out-degree) respectively; we also estimate the Gaussian correlation parameter $\rho$ using our estimate from \eqref{def:rhohat}.  Estimates for  ${\alpha}_1$ and $\alpha_2$ seem to be close to 2 whereas those for $\gamma$ seem close to 3. The estimates for $\rho$ appear to be between $(0.6,0.8)$ indicating a high positive association. Hence we may infer users who are trusted by more users also know more users they can trust. 

    \item The second data set is a citation graph from e-print \href{http://arxiv.org}{arXiv} fo high energy physics phenomenology  publications. It covers all the citations within a dataset of 34,546 papers with 421,578 edges (from January 1993 to April 2003). If a paper $A$ cites paper $B$, the graph contains a directed edge from $A$ to $B$; external citations are ignored, see \href{https://www.cs.cornell.edu/projects/kddcup/}{2003 KDD Cup}, where the data was first released. 
   
\begin{figure}[ht]
    \centering
    \includegraphics[width=\linewidth]{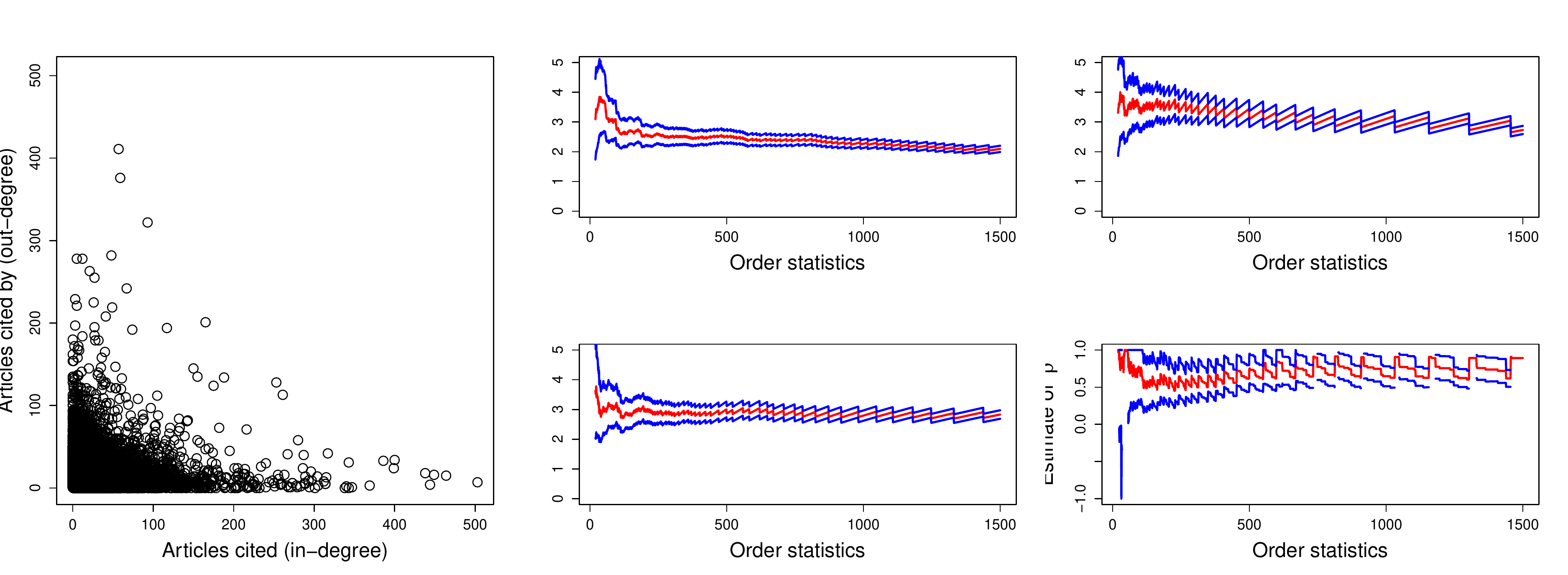} 
 \caption{High energy Physics citation data from \href{http://arxiv.org}{arXiv}. (i) Left plot: scatter plot of \underline{articles cited} (in-degree) vs \underline{articles cited by} (out-degree), (ii) middle plot: Hill estimates of tail indices of \underline{articles cited} (top) and \underline{articles cited by} (bottom), (iii) right plot: Hill estimates of tail index of $\min$(in-degree, out-degree) at the top and estimate of $\rho$ (bottom). Plots of estimates are for top 1500 quantiles and with 95\% estimated confidence intervals.} 
  \label{fig:CiteHp}
\end{figure} 
    Similar to the \emph{Epinions} network data we compute in-degree and out-degree for each node (paper) and then compute Hill estimates of the tail indices $\alpha_1$ (in-degree), $\alpha_2$ (out-degree) and $\gamma$ ($\min$(in-degree, out-degree)), and additionally $\rho$. They are all plotted in \Cref{fig:CiteHp}.  Here $\hat{\alpha}_1$ and $\hat{\alpha_2}$ seem to be between 2 and 3 whereas $\hat{\gamma}$ is between 3 and 4. The estimate of $\rho$ is above 0.5 indicating  again a high positive association. Hence we may infer high citation counts and higher citations in a paper are positively associated.

\end{enumerate}

Note here that for both network dataset, an assumption of independence among the node-indexed (in-degree, out-degree) counts  perhaps is not perfectly justifiable. Nevertheless, we may consider the data of (in-degree, out-degree) as a random sample from the notional underlying joint degree distribution.
\end{example}

\begin{example}[Danish fire insurance claims data]\label{ex:danish}
Consider the well-studied \emph{Danish Fire insurance} data set obtained from the R package \texttt{fitdistrplus} (\citet{delignette2015fitdistrplus}). The data contains 2167 fire insurance claims at Copenhagen Reinsurance for the period 1980 to 1990 in three different categories: \emph{building}, \emph{content} and \emph{profit} (measured in millions of Danish Krone and inflation adjusted to 1985 values).  
\begin{figure}[ht]
    \centering
    \includegraphics[width=\linewidth]{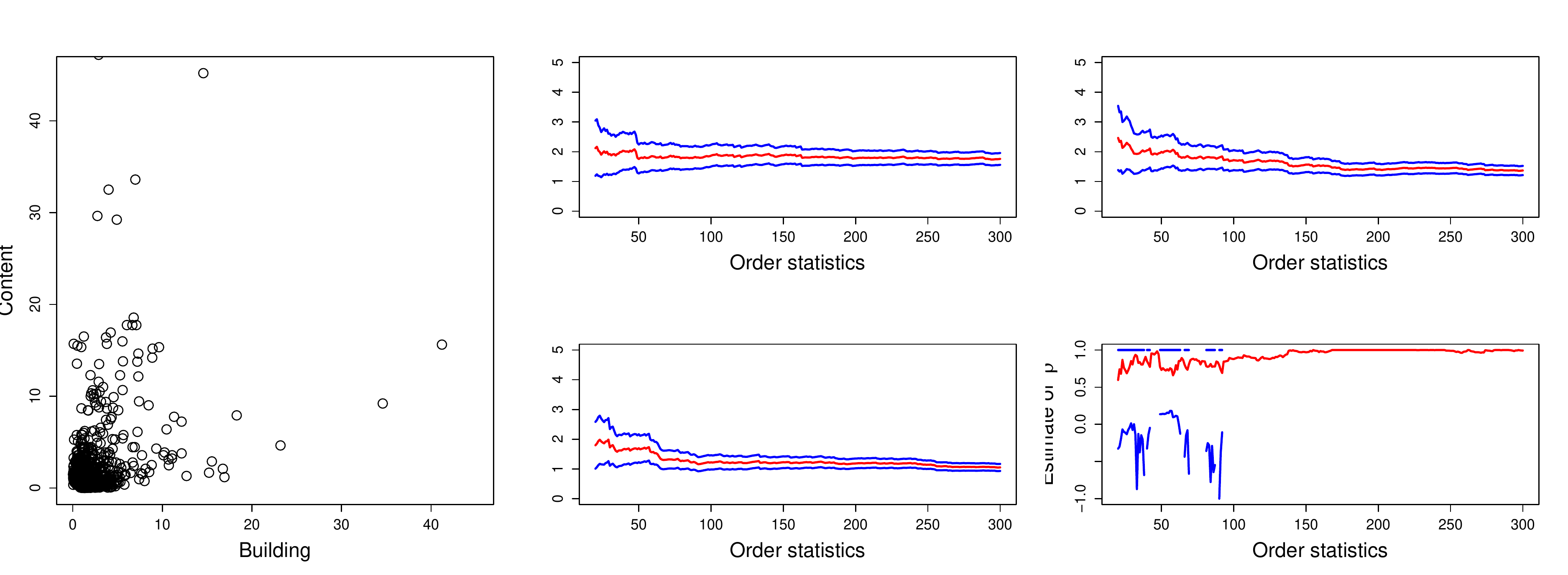} 
 \caption{Danish fire insurance data: (i) Left plot: scatter plot of claim sizes for \underline{building}  vs \underline{content}, (ii) middle plot: Hill estimates of tail indices of \underline{building} (top) and \underline{content} (bottom), (iii) right plot: Hill estimates of tail index of $\min$(building, content) at the top and estimate of $\rho$ (bottom). Plots of estimates are for top 300 quantiles and with 95\% estimated confidence intervals.} 
  \label{fig:danishestimates}
\end{figure}
Focusing on \emph{building} and \emph{content} claims (which were both simultaneously positive in 604 claims), data exploration (heavy-tailed QQ plot) suggests  both marginal distributions to be heavy-tailed, and so do their respective Hill estimates of tail indices $\alpha_1$ and $\alpha_2$ (the values are close to 2 or less). For the data of $\min$(\emph{building, content}), we  also estimate the tail index $\gamma$ to be 2 or slightly above, see the plots in \Cref{fig:danishestimates}. 

The scatter plot (leftmost plot in \Cref{fig:danishestimates}) suggests a reasonable high positive association between claims for \emph{building} and \emph{content}, and we also notice that the estimate for correlation parameter $\rho$ is quite high almost nearing 1 (the confidence intervals behave erratically, which we have observed previously for high correlation values close to the boundary of valid inference for $\rho$ in a simulated data as well, cf. \Cref{fig:frechet} when $\rho=0.8$).
\end{example}

\begin{figure}[ht]
    \centering
    \includegraphics[width=\linewidth]{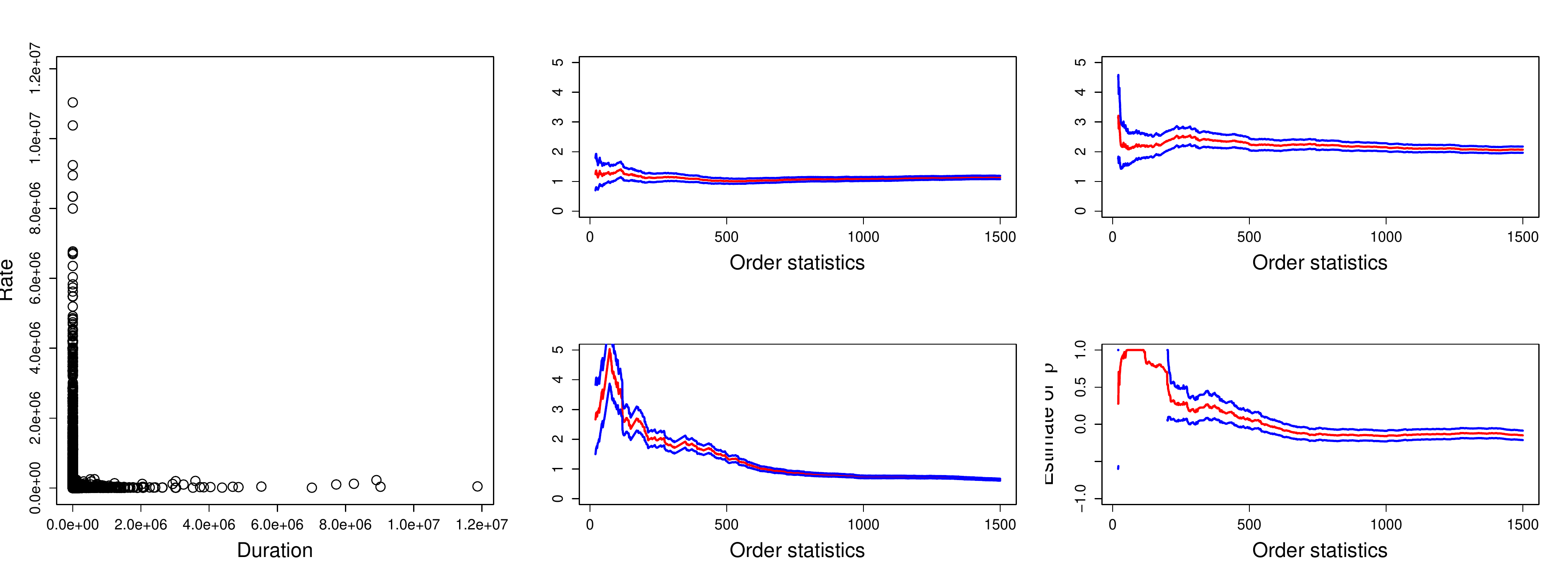} 
 \caption{Internet traffic data: (i) Left plot: scatter plot of \underline{duration} vs \underline{rate}, (ii) middle plot: Hill estimates of tail indices of \underline{duration} (top) and \underline{rate} (bottom), (iii) right plot: Hill estimates of tail index of $\min$(duration, rate) at the top and estimate of $\rho$ (bottom). Plots of estimates are for top 1500 quantiles and with 95\% estimated confidence intervals.} 
  \label{fig:durrate}
\end{figure}
\begin{example}[Internet traffic data]\label{ex:internet}
Heavy-tailed marginals have been observed in internet traffic data for variables such as file size, transmission duration and session length (\cite{maulik:resnick:2003, sarvotham:riedi:baranuik:2005}). We study a particular data set of GPS-synchronized traces that were recorded at
the University of Auckland \href{http://wand.cs.waikato.ac.nz/wits}. We consider traces corresponding exclusively to incoming TCP
traffic sent on December 8, 1999, between 3 and 4 p.m. The packets were clustered
into end-to-end (e2e) sessions which are clusters of packets with the same source and
destination IP address such that the delay between arrival of two successive packets
in a session is at most two seconds. We observe three variables $\{(S_i,D_i,R_i): 1\le i \le 54353\}$ where 
$S_i$ is the size or number of bytes transmitted in a session, $D_i$ is the duration or length of the session, and $R_i=S_i/D_i$ is the  average transfer rate associated with a session.

For this study we consider $D_i$ (duration) and $R_i$ (rate). Diagnostics suggest that both are heavy-tailed and the scatter plot in \Cref{fig:durrate} clearly indicates asymptotic independence between the two variables. Modeling using $\PGC$ we estimate the tail parameters $\alpha_1$ (duration) and $\alpha_2$ (rate); both are reasonable stable close to the value 1. The tail index for $\min$ (duration, rate) is closer to 2 and still quite stable supporting the presence of heavy-tails. For estimating $\rho$, we observe the value to be stable close to 0.

\end{example}

For the examples used in this section, we observe that the plots of Hill estimates do indicate heavy-tailed margins and the stability in estimation of $\rho$ provides support for modeling using a $\PGC$ distribution. The codes and relevant data used for generating the plots are available at \href{https://github.com/bikram-jit-das/ParetoGaussCop}{https://github.com/bikram-jit-das/ParetoGaussCop}.

\section{Conclusion}\label{sec:concl}

In light of the ubiquity of empirical evidence for heavy-tailed variables in various applications and popularity of Gaussian dependence, the proposed $\PGC$ distribution incorporates both for  modeling multivariate heavy-tailed data.
Here we provide a few observations on estimation under the $\PGC$ distribution.

\begin{enumerate}[(i)]
    \item We have used Hill estimators for tail index estimation, this is not particularly necessary. There are multiple well-established estimators which may do justice not only in estimating the tail parameters but the correlation parameter as well. The attraction of using Hill estimator stems from the observation that its consistency is equivalent to the regular variation of the tail distribution.
    \item A possible alternative to the $\PGC$  distribution is to use elliptical distributions with a heavy-tailed radial vector (\citet[Chapter 6]{mcneil:frey:embrechts:2015}). A difference from the $\PGC$ model here is that for elliptical distributions, all marginal tails have equivalent tail behavior; a recent work by \citet{derugminy:fermanian:2022} discusses estimation in such elliptical models. Incidentally, with all margins equal, the $\PGC$ distribution is also an elliptical distribution.
    \item If we assume the marginal tails to be the same, then we may use a joint estimator for the tail index, see \citet{dematteo:clemencon:2016}.
    \item Since the correlation parameter is estimated from the relevant margins, the 2-dimensional consistency results suffice for $\PGC$ model parameter estimation in higher dimensions, cf. \Cref{thm:multidimconv}. The key difference of the $d$-dimensional model from the 2-dimensional model is the computation of multiple covariance between various parameter estimates; we leave this for future research.
    \item We observed that a few real life examples support modeling using $\PGC$ distribution; an advantage for this model is also that it is relatively easy to simulate from, and thus helpful for stress-testing and scenario generation. 
    \item Instead of using Pareto-like margins, we may use any of the parametric models from \Cref{table:RVdist} and proceed with modeling and estimation to obtain similar results and insights. Naturally, the other relevant marginal parameters need to investigated for such an endeavor.
    \item Using standard regular variation for approximating tail probabilities like \eqref{prob:tailset} may already lead to degeneracy in the first step if the model has unequal tail indices; hence a subsequent estimation of correlation with such an approach seems infeasible.
\end{enumerate}
Overall, the $\PGC$ distribution provides a general framework for modeling heavy-tailed data and can be used for various risk management applications from modeling to generation and estimation.

\section*{Acknowledgements} BD thanks Vicky Fasen-Hartmann for discussions and detailed comments  on the paper which has significantly improved its exposition.

\bibliographystyle{imsart-nameyear}
\bibliography{bibfilenew}

\appendix

\section{Proofs}\label{sec:proofs}

\subsection{Proof of Theorem \ref{thm:mainrect} }\label{subsec:proofprobext}
We first collate results which are required for the proof of \Cref{thm:mainrect}. The proof is similar to that of \cite[Proposition 3.3]{das:fasen:2023a} with modifications to address the different structure of the marginal distributions.  The first lemma provides the asymptotic behavior of Gaussian quantiles in terms of their equivalent regularly varying quantiles (cf. \citet[Lemma 3.3]{das:fasen:2023a}) and as a result also  quantiles of power law-tailed/ Pareto-tailed distributions.

\begin{lemma}[Lemma 3.3, \citet{das:fasen:2023a}]\label{lem:rvtogaussquantile}
    Let $F_\alpha$ be a strictly increasing and continuous distribution function with distribution function with
    $\ov{F}_{\alpha}(t)=(t^{\alpha}\ell(t))^{-1}$ where $\ell\in \RV_0$ for some $\alpha>0$. Fix $x>0$ and define for any $t>0$  $$z_{x}(t)=\Phi^{-1}(F_\alpha(tx))=\overline{\Phi}^{-1}(\overline{F}_\alpha(tx)).$$ 
    Then as $t\to\infty$,
    \begin{align*}
     z_x(t) & =  \sqrt{2\alpha \log t}+\frac{1}{\sqrt{2\alpha\log t}}\log\left(\frac{x^{\alpha}}{2\sqrt{\pi\alpha}}\right)  +  \frac{ \log \left(\ell(t)/\sqrt{\log t}\right)}{\sqrt{2\alpha\log t}} + o\left(\frac1{\sqrt{\log t}}\right).
    \end{align*} 
  \end{lemma}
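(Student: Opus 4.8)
Since $z_x(t)=\Phi^{-1}(F_\alpha(tx))=\overline{\Phi}^{-1}\big(\overline{F}_\alpha(tx)\big)$, the statement is really an asymptotic inversion of the Gaussian tail evaluated at $y=\overline{F}_\alpha(tx)=\big((tx)^\alpha\ell(tx)\big)^{-1}\downarrow 0$. The plan is to (a) establish a general expansion of $\overline{\Phi}^{-1}(y)$ as $y\downarrow 0$, (b) substitute this value of $y$, and (c) simplify using slow variation of $\ell$.

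For (a) I would start from the Mills-ratio estimate $\overline{\Phi}(w)=\frac{\phi(w)}{w}\big(1+O(w^{-2})\big)$ as $w\to\infty$. Setting $w=\overline{\Phi}^{-1}(y)$ and $L:=\log(1/y)$ and taking logarithms yields $L=\tfrac12 w^2+\log w+\tfrac12\log(2\pi)+O(w^{-2})$. I would then bootstrap this relation: the leading balance forces $w\sim\sqrt{2L}$, so $\log w=\tfrac12\log(2L)+o(1)$, and feeding this back gives $w^2=2L-\log L-\log(4\pi)+o(1)$; a square root then produces
\[ \overline{\Phi}^{-1}(y)=\sqrt{2L}-\frac{\log L+\log(4\pi)}{2\sqrt{2L}}+o\!\left(\tfrac{1}{\sqrt L}\right),\qquad y\downarrow 0 . \]

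For (b)--(c), substitute $L=\alpha\log(tx)+\log\ell(tx)$. Because $\ell\in\RV_0$ we have $\ell(tx)/\ell(t)\to 1$, hence $\log\ell(tx)=\log\ell(t)+o(1)$, and also $\log\ell(t)=o(\log t)$; thus $L=\alpha\log t+\alpha\log x+\log\ell(t)+o(1)$ and $\log L=\log\log t+\log\alpha+o(1)$. Expanding $\sqrt{2L}=\sqrt{2\alpha\log t}\,\sqrt{1+\frac{\alpha\log x+\log\ell(t)+o(1)}{\alpha\log t}}$ to first order and inserting the expression for $\log L$ into the correction term gives
\[ z_x(t)=\sqrt{2\alpha\log t}+\frac{\alpha\log x+\log\ell(t)-\tfrac12\log\log t-\tfrac12\log(4\pi\alpha)}{\sqrt{2\alpha\log t}}+o\!\left(\tfrac{1}{\sqrt{\log t}}\right); \]
the elementary identities $\alpha\log x-\tfrac12\log(4\pi\alpha)=\log\frac{x^\alpha}{2\sqrt{\pi\alpha}}$ and $\log\ell(t)-\tfrac12\log\log t=\log\frac{\ell(t)}{\sqrt{\log t}}$ then recover the claimed form.

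The main obstacle will be the uniform bookkeeping of error terms at the scale $1/\sqrt{\log t}$: I must check that the quadratic term of the square-root expansion, of order $\sqrt{\log t}\,(\log\ell(t)/\log t)^2$, and the cross term appearing when the $(1+o(1))$ prefactor of $1/\sqrt{2L}$ multiplies $\log\log t$, are both genuinely $o(1/\sqrt{\log t})$. This requires only a little extra regularity on $\ell$ (for instance $\log\ell(t)=o(\sqrt{\log t})$), which in the $\PGC$ application is automatic since there $\overline{F}_j(t)\sim\theta_j t^{-\alpha_j}$ forces $\ell(t)\to\theta_j^{-1}$, a constant; the Mills-ratio remainder is harmless, contributing only at order $O((\log t)^{-3/2})$.
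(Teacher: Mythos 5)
Your argument is correct, and there is nothing in the paper to compare it against: the lemma is imported verbatim from \citet{das:fasen:2023a} and no proof is reproduced here. Your route is the standard one — the Mills-ratio inversion $\ov{\Phi}^{-1}(y)=\sqrt{2L}-\frac{\log L+\log(4\pi)}{2\sqrt{2L}}+o(L^{-1/2})$ with $L=\log(1/y)$, followed by substituting $L=\alpha\log(tx)+\log\ell(tx)$ and using $\ell(tx)/\ell(t)\to1$ — and your bookkeeping of the constants ($\alpha\log x-\tfrac12\log(4\pi\alpha)=\log\frac{x^\alpha}{2\sqrt{\pi\alpha}}$ and $\log\ell(t)-\tfrac12\log\log t=\log\frac{\ell(t)}{\sqrt{\log t}}$) is right, as is the observation that the Mills-ratio remainder only enters at order $(\log t)^{-3/2}$. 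The one substantive point is the one you flag yourself: the quadratic term of the square-root expansion is of order $(\log\ell(t))^2/(\log t)^{3/2}$, so a remainder of size $o(1/\sqrt{\log t})$ genuinely requires $\log\ell(t)=o(\sqrt{\log t})$, which does \emph{not} follow from $\ell\in\RV_0$ alone (e.g.\ $\ell(t)=\exp((\log t)^{2/3})$ violates it). As literally stated for arbitrary slowly varying $\ell$ the displayed error term is therefore too strong; this is a latent imprecision of the quoted lemma rather than a defect of your proof, and it is immaterial for the present paper, since \Cref{cor:paretotogaussquantile} only invokes the case $\ell(t)\to\theta^{-1}$, a constant, for which all of your error terms (including the cross term from multiplying the $(1+o(1))$ prefactor of $1/\sqrt{2L}$ against $\log\log t$) are comfortably $o(1/\sqrt{\log t})$.
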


\begin{corollary}[Corollary to \Cref{lem:rvtogaussquantile}]\label{cor:paretotogaussquantile}
     Let $F_\alpha$ be a strictly increasing and continuous distribution function with
    $\ov{F}_{\alpha}(t)\sim C t^{-\alpha}$ as $t\to\infty$ where $C>0, \alpha>0$. Fix $x>0$ and define for any $t>0$  $$z_{x}(t)=\Phi^{-1}(F_\alpha(tx))=\overline{\Phi}^{-1}(\overline{F}_\alpha(tx)).$$ 
    Then as $t\to\infty$,
    \begin{align*}
       z_x(t)  =  \sqrt{2\alpha \log t}+\frac{1}{\sqrt{2\alpha\log t}}\log\left(\frac{x^{\alpha}}{2C\sqrt{\pi\alpha}}\right)  -  \frac{ \log\log t}{2\sqrt{2\alpha\log t}} + o\left(\frac1{\sqrt{\log t}}\right).
       \end{align*}
\end{corollary}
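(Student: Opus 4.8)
The plan is to reduce the Corollary directly to \Cref{lem:rvtogaussquantile} by recognizing that the hypothesis $\ov{F}_\alpha(t)\sim Ct^{-\alpha}$ is a special case of the lemma's hypothesis with a particular slowly varying function. First I would write $\ov{F}_\alpha(t) = (t^\alpha \ell(t))^{-1}$, which forces $\ell(t) = 1/(t^\alpha \ov{F}_\alpha(t))$. The hypothesis $\ov{F}_\alpha(t)\sim Ct^{-\alpha}$ then says precisely that $\ell(t) \to 1/C$ as $t\to\infty$; in particular $\ell\in\RV_0$, so \Cref{lem:rvtogaussquantile} applies verbatim.

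Next I would substitute this $\ell$ into the expansion from \Cref{lem:rvtogaussquantile}. The term $\log(x^\alpha/(2\sqrt{\pi\alpha}))/\sqrt{2\alpha\log t}$ is unchanged. For the remaining term I would split
\[
\frac{\log\!\big(\ell(t)/\sqrt{\log t}\big)}{\sqrt{2\alpha\log t}} = \frac{\log \ell(t)}{\sqrt{2\alpha\log t}} - \frac{\log\sqrt{\log t}}{\sqrt{2\alpha\log t}} = \frac{\log \ell(t)}{\sqrt{2\alpha\log t}} - \frac{\log\log t}{2\sqrt{2\alpha\log t}}.
\]
Since $\ell(t)\to 1/C$, we have $\log\ell(t)\to -\log C$, so $\log\ell(t)/\sqrt{2\alpha\log t} = (-\log C + o(1))/\sqrt{2\alpha\log t}$. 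I would then absorb the $-\log C$ into the second term of the expansion by noting $\log(x^\alpha/(2\sqrt{\pi\alpha})) - \log C = \log(x^\alpha/(2C\sqrt{\pi\alpha}))$, while the $o(1)/\sqrt{2\alpha\log t}$ piece gets absorbed into the error term $o(1/\sqrt{\log t})$. Collecting everything yields exactly the claimed expansion with the $-\log\log t/(2\sqrt{2\alpha\log t})$ term displayed separately.

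There is essentially no hard step here — the Corollary is a bookkeeping specialization of the lemma, and the only thing to be careful about is tracking where the constant $C$ enters (it shifts the constant inside the logarithm and contributes a lower-order $o(1/\sqrt{\log t})$ remainder) and confirming that $\log\sqrt{\log t} = \tfrac12\log\log t$ so the $\log\log t$ term comes out with the stated coefficient $-\tfrac12$. If I wanted to be fully self-contained I could instead repeat the computation directly: write $z_x(t) = \ov\Phi^{-1}(\ov F_\alpha(tx))$, use the standard Gaussian tail asymptotic $\ov\Phi(z)\sim \varphi(z)/z$ to get $\ov\Phi^{-1}(u) = \sqrt{-2\log u}\,(1+o(1))$ with the next-order correction $-\tfrac12\log(-2\log u)/\sqrt{-2\log u} + \dots$, and plug in $u = \ov F_\alpha(tx)\sim C(tx)^{-\alpha}$ so that $-\log u = \alpha\log t + \alpha\log x - \log C + o(1)$; expanding $\sqrt{2(\alpha\log t + \alpha\log x - \log C + o(1))}$ and the correction term around $\sqrt{2\alpha\log t}$ recovers the same three-term expansion. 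Either route is routine; I would present the first (reduction to \Cref{lem:rvtogaussquantile}) since it is shortest.
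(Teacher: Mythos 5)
Your proposal is correct and follows essentially the same route as the paper, which likewise proves the corollary by specializing \Cref{lem:rvtogaussquantile} to a slowly varying function with a finite positive limit. Your bookkeeping is in fact slightly more careful than the paper's one-line proof: the correct identification is $\ell(t)\to 1/C$ (the paper loosely writes $\ell(t)\sim C$), and your version gets the constant inside the logarithm right.
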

\begin{proof}
The proof follows from \Cref{lem:rvtogaussquantile} by noticing $\ell(t)\sim C$ ($t\to\infty$) in the tail behavior of $F$.
\end{proof}

The following lemma provides the solution to a quadratic program essential for defining and proving \Cref{thm:mainrect}, see \cite{hashorva:husler:2003,hashorva:2005,hashorva:2019} for details and proof.
  \begin{lemma}[Lemma 1, \citet{hashorva:2019}]\label{lem:qp}
Let $\Sigma\in \R^{d\times d}$ be a positive definite correlation matrix and let $\bc\in(-\binfty,\bzero]^c$. Then the quadratic programming problem 
\begin{align}\label{eq:quadprog}
  \mathcal{P}_{\Sigma^{-1},\bc}: \min_{\{\bz\ge \bc\}} \bz^\top \Sigma^{-1}\bz  
\end{align}
  has a unique solution $\bkappa\in\R^d$ such that
    \begin{align}
    \gamma_{\bc}:= \gamma(\Sigma,\bc):= \min_{\{\bz\ge \bc\}} \bz^\top \Sigma^{-1} \bz=\bkappa^{\top} \Sigma^{-1}\bkappa>0. 
\end{align}
Moreover, the following holds:
\begin{enumerate}[(i)]
    \item There exists a unique non-empty index set  $I:=I(\Sigma,\bc)\subseteq \{1,\ldots, d\}=:\mathbb{I}$ with $J:=J(\Sigma,\bc):=\mathbb{I}\setminus I$ such that  the unique solution $\bkappa$ is given by
$$\bkappa_I=\bc_I, \qquad \text{ and } \qquad\bkappa_J=-[\Sigma^{-1}]_{JJ}^{-1}[\Sigma^{-1}]_{JI}\bc_I=\Sigma_{JI}(\Sigma_I)^{-1}\bc_I\ge \bc_J,$$
and, $ \bc_{I}\Sigma_{I}^{-1}\bc_I=\bkappa^{\top} \Sigma^{-1}\bkappa=\gamma_{\bc}.$
 \item Define $\bh_I:=(h_i)_{i\in I}$ where $h_i=h_i(\Sigma,\bc):=e_i^\top \Sigma^{-1}_{I}\bc_I>0, i\in I$. Then for any $\bz\in\R^d$, $$\bz^{\top}\Sigma^{-1}\bkappa=\bz_I^{\top}\Sigma_{I}^{-1}\bc_I = \bz^{\top}\bh_{I}.$$
\end{enumerate}
\end{lemma}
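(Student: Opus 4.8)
The plan is to treat $\mathcal{P}_{\Sigma^{-1},\bc}$ as a strictly convex quadratic program and to read off the claimed structure from its Karush--Kuhn--Tucker (KKT) optimality system. First I would settle existence and uniqueness of the minimizer: the feasible set $\{\bz\ge\bc\}$ is nonempty (it contains $\bc$), closed, and convex, while $\bz\mapsto\bz^\top\Sigma^{-1}\bz$ is continuous and strictly convex with compact sublevel sets since $\Sigma^{-1}$ is positive definite; hence there is a unique minimizer $\bkappa$. Because $\bc\in(-\binfty,\bzero]^c$ the point $\bzero$ is infeasible, so $\bkappa\neq\bzero$ and therefore $\gamma_{\bc}=\bkappa^\top\Sigma^{-1}\bkappa>0$.

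Next I would invoke the KKT conditions, which are necessary and sufficient here since the program is convex with affine constraints $z_i\ge c_i$: there is $\boldsymbol\lambda\ge\bzero$ with $2\Sigma^{-1}\bkappa=\boldsymbol\lambda$ (equivalently $\Sigma^{-1}\bkappa\ge\bzero$) and complementary slackness $\lambda_i(\kappa_i-c_i)=0$ for each $i$. I would then \emph{define} $I:=\{i\in\mathbb{I}:(\Sigma^{-1}\bkappa)_i>0\}$ and $J:=\mathbb{I}\setminus I=\{i:(\Sigma^{-1}\bkappa)_i=0\}$. Complementary slackness forces $\bkappa_I=\bc_I$, feasibility of $\bkappa$ gives $\bkappa_J\ge\bc_J$, and $I\neq\emptyset$ because $I=\emptyset$ would give $\Sigma^{-1}\bkappa=\bzero$, hence $\bkappa=\bzero$, contradicting $\gamma_\bc>0$.

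Then I would compute $\bkappa$ explicitly. Writing $\bw:=\Sigma^{-1}\bkappa$, so $\bw_J=\bzero$, the identity $\bkappa=\Sigma\bw$ together with $\bkappa_I=\bc_I$ gives $\bc_I=\Sigma_I\bw_I$, i.e. $\bw_I=\Sigma_I^{-1}\bc_I$, and then $\bkappa_J=\Sigma_{JI}\bw_I=\Sigma_{JI}\Sigma_I^{-1}\bc_I$; equivalently, solving $(\Sigma^{-1}\bkappa)_J=[\Sigma^{-1}]_{JI}\bc_I+[\Sigma^{-1}]_{JJ}\bkappa_J=\bzero$ directly gives $\bkappa_J=-[\Sigma^{-1}]_{JJ}^{-1}[\Sigma^{-1}]_{JI}\bc_I$, the two forms agreeing by a block-inversion identity. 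In particular $h_i:=e_i^\top\Sigma_I^{-1}\bc_I=(\bw)_i>0$ for $i\in I$ by the definition of $I$, and $\gamma_\bc=\bkappa^\top\bw=\bkappa_I^\top\bw_I=\bc_I^\top\Sigma_I^{-1}\bc_I$. For part (ii), since $\bw_J=\bzero$, for every $\bz\in\R^d$ we get $\bz^\top\Sigma^{-1}\bkappa=\bz^\top\bw=\bz_I^\top\bw_I=\bz_I^\top\Sigma_I^{-1}\bc_I=\bz^\top\bh_I$. Uniqueness of $I$ then follows by a consistency argument: any $I'$ with the listed properties yields a feasible $\bkappa'$ (given by the same formulas on $I'$ and $J'$) satisfying KKT with multipliers $2(\Sigma_{I'}^{-1}\bc_{I'})_i>0$ on $I'$ and $0$ off it, so $\bkappa'=\bkappa$ by uniqueness of the minimizer, whence $I'=\{i:(\Sigma^{-1}\bkappa)_i>0\}=I$.

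The main obstacle is organizational rather than conceptual: carrying the block-matrix bookkeeping between $\Sigma^{-1}$ and $\Sigma$ carefully, and---more delicately---defining $I$ through the support of the KKT multipliers rather than merely through the active constraints $\kappa_i=c_i$, so that the strict positivity $h_i>0$ and the uniqueness of $I$ survive degenerate configurations in which a constraint is active with zero multiplier.
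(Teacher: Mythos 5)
Your argument is correct, and it is worth noting that the paper itself offers no proof of this lemma: it is imported verbatim as Lemma~1 of Hashorva (2019), with the reader pointed to that reference (and to Hashorva--H\"usler 2003, Hashorva 2005) for the details. Your KKT derivation is essentially the standard argument behind the cited result: strict convexity and coercivity of $\bz\mapsto\bz^\top\Sigma^{-1}\bz$ give the unique minimizer $\bkappa$, infeasibility of $\bzero$ (since $\bc\notin(-\binfty,\bzero]$) gives $\gamma_{\bc}>0$, and the multiplier identity $\boldsymbol\lambda=2\Sigma^{-1}\bkappa\ge\bzero$ with complementary slackness yields the block formulas once you set $\bw=\Sigma^{-1}\bkappa$ and use $\bw_J=\bzero$, $\bkappa=\Sigma\bw$. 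Two points in your write-up deserve emphasis as genuinely careful: (a) defining $I$ as the support $\{i:(\Sigma^{-1}\bkappa)_i>0\}$ of the multipliers rather than as the active set $\{i:\kappa_i=c_i\}$ is exactly what makes $h_i>0$ and the uniqueness of $I$ hold in degenerate configurations (e.g.\ $d=2$, $\bc=(1,\rho)$ with $0<\rho<1$, where the constraint on the second coordinate is active with zero multiplier and the active set would not single out $I=\{1\}$); and (b) deriving both expressions for $\bkappa_J$ from the single relation $\bw_J=\bzero$ sidesteps having to invoke the block-inversion identity separately. The only cosmetic caveat is that in part (ii) the expression $\bz^\top\bh_I$ should be read as $\bz_I^\top\bh_I$, which your chain of equalities already makes explicit.
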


Given the solution and notations for the quadratic program $\mathcal{P}_{\Sigma,\bc}$ we are able to compute tail probabilities of the form \eqref{prob:tailset} where $S=\mathbb{I}$; for any other non-empty $S\subsetneq \mathbb{I}$, we just consider the $|S|$-dimensional  marginal vector as our notional random vector to be analyzed.

The following result gives asymptotic probability of tail exceedance for a multivariate normal distribution; the tail exceedance has a particular structure amenable for the proof of  \Cref{thm:mainrect} as well as \Cref{prop:condR}. Here we use a particular convention: for sequences $\bx_t=(x_{t,1},\ldots,x_{t,d}),\by_t=(y_{t,1},\ldots,y_{t,d})\in \R^d$, we write  $\by_t=\bx_t+o(t)$ if $y_{t,j}-x_{t,j}=o(t)$ as   $t\to\infty$  for all  $j\in \mathbb{I}$.

\begin{proposition}\label{prop:gausstail}
    Let $\bZ\sim \Phi_{\Sigma}$ be a normal random vector in $\R^d$, $d\ge 2$, with positive definite correlation matrix $\Sigma\in\R^{d\times d}$ and $\bc >\bzero$.  Define the following quantities:
    \begin{enumerate}[(a)]
    \item The parameters $\gamma=\gamma(\Sigma,\bc), I=I(\Sigma, \bc)$, $\bkappa=\bkappa(\Sigma,\bc)$ and $h_i=h_i(\Sigma, \bc)$, $i\in I$ are defined with respect to the solution of  $\mathcal{P}_{\Sigma^{-1}, \bc}$ as in \Cref{lem:qp}.   
    \item Let  $J:=\mathbb{I}\setminus I$. Define
 $\bY_J\sim \mathcal{N}(\bzero_J, \Sigma_J-\Sigma_{JI}\Sigma_{I}^{-1}\Sigma_{IJ})$ if $J\not=\emptyset$ and $\bY_J=\bzero$ if $J=\emptyset$. 
 \item Define $\bl_{\infty}:=\lim_{t\to\infty}t(\bc_J-\bkappa_J)$, a vector in $\R^{|J|}$ with components either  $0$ or $-\infty$.
  \end{enumerate}
  Let $\bz, \bw \in \R^d$ and  $\mathcal{L},u:(0,\infty)\to(0,\infty)$ be measurable functions such that $u(t)\uparrow \infty$ and ${\log(\mathcal{L}(t))}/{u(t)}\to 0,$  as $t\to\infty$.
    Then as $t\to \infty$,
    \begin{align*}
         \P & \left(\bZ >  u(t)\bc + \frac{\bz}{u(t)}  + \frac{\log(\mathcal{L}(t))}{u(t)}\bw +o\left(\frac{1}{u(t)}\right)\right) \nonumber\\
          & \quad\quad\quad\quad\quad =(1+o(1))\Upsilon u(t)^{-|I|} (\mathcal{L}(t))^{-\bw_I^T\Sigma_I^{-1}\bc_I} \exp\left(-\gamma \frac{u(t)^2}2 - \bz_I^\top\Sigma_I^{-1}\bc_I\right),
    \end{align*}
where 
\begin{align*}
  \Upsilon:=\Upsilon(\Sigma,\bc):=
  {(2\pi)^{-|I|/2}|\Sigma_{I}|^{-1/2}\prod_{i\in I}h_i^{-1} }\,\times {\P(\bY_J\geq \bl_\infty)}.
\end{align*}
For $J=\emptyset$ we define $\P(\bY_J\geq \bl_\infty):=1$.
\end{proposition}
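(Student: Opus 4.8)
The plan is to reduce the $d$-dimensional orthant probability to a clean $|I|$-dimensional Gaussian tail in which every defining inequality is active, times a conditional tail probability that converges to $\P(\bY_J\geq\bl_\infty)$. Write $\bb(t):=u(t)\bc+\bz/u(t)+(\log(\mathcal{L}(t))/u(t))\bw+o(1/u(t))$ for the threshold vector. Since $\Sigma$ is positive definite so is $\Sigma_I$, and the conditional law of $\bZ_J$ given $\bZ_I=\bx_I$ is $\mathcal{N}\!\left(\Sigma_{JI}\Sigma_I^{-1}\bx_I,\,\Sigma_J-\Sigma_{JI}\Sigma_I^{-1}\Sigma_{IJ}\right)$, i.e. that of $\Sigma_{JI}\Sigma_I^{-1}\bx_I+\bY_J$. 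Hence, writing $\varphi_{\Sigma_I}$ for the $\mathcal{N}(\bzero_I,\Sigma_I)$ density and reading the inner probability as $1$ when $J=\emptyset$,
\[
\P(\bZ>\bb(t))=\int_{\{\bx_I>\bb_I(t)\}}\varphi_{\Sigma_I}(\bx_I)\,\P\!\left(\bY_J>\bb_J(t)-\Sigma_{JI}\Sigma_I^{-1}\bx_I\right)d\bx_I .
\]
For the subproblem $\mathcal{P}_{\Sigma_I^{-1},\bc_I}$ every constraint is active: by \Cref{lem:qp}, $\Sigma_I^{-1}\bc_I=\bh_I>\bzero$, so the gradient of $\bx_I\mapsto\bx_I^\top\Sigma_I^{-1}\bx_I$ at $\bc_I$ equals $2\bh_I>\bzero$, whence $\bc_I$ is the constrained minimiser, with minimum value $\bc_I^\top\Sigma_I^{-1}\bc_I=\gamma$.

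Next I would shift and rescale, setting $\bx_I=u(t)\bc_I+\boldsymbol{s}/u(t)$; the Jacobian contributes the factor $u(t)^{-|I|}$. Using $\bx_I^\top\Sigma_I^{-1}\bx_I=u(t)^2\gamma+2\boldsymbol{s}^\top\bh_I+u(t)^{-2}\boldsymbol{s}^\top\Sigma_I^{-1}\boldsymbol{s}$ (by $\Sigma_I^{-1}\bc_I=\bh_I$ and \Cref{lem:qp}(ii)), the density becomes
\[
\varphi_{\Sigma_I}\!\left(u(t)\bc_I+\tfrac{\boldsymbol{s}}{u(t)}\right)=(2\pi)^{-|I|/2}|\Sigma_I|^{-1/2}\exp\!\left(-\tfrac{\gamma}{2}u(t)^2-\boldsymbol{s}^\top\bh_I-\tfrac{1}{2u(t)^2}\boldsymbol{s}^\top\Sigma_I^{-1}\boldsymbol{s}\right),
\]
while the constraint $\bx_I>\bb_I(t)$ becomes $\boldsymbol{s}>\bL(t):=\bz_I+(\log(\mathcal{L}(t)))\bw_I+o(1)$. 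Factoring $e^{-\bL(t)^\top\bh_I}$ out of the integral and changing variables to $\boldsymbol{r}=\boldsymbol{s}-\bL(t)$ renders the domain the fixed positive orthant $\{\boldsymbol{r}>\bzero\}$.

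It then remains to pass to the limit inside the $\boldsymbol{r}$-integral. Because $\log(\mathcal{L}(t))/u(t)\to0$ we have $\|\bL(t)\|=o(u(t))$, so for each fixed $\boldsymbol{r}>\bzero$ the correction $\tfrac{1}{2u(t)^2}(\boldsymbol{r}+\bL(t))^\top\Sigma_I^{-1}(\boldsymbol{r}+\bL(t))\to0$, and the argument of the conditional probability,
\[
\bb_J(t)-\Sigma_{JI}\Sigma_I^{-1}\!\left(u(t)\bc_I+\tfrac{\boldsymbol{r}+\bL(t)}{u(t)}\right)=u(t)(\bc_J-\bkappa_J)+o(1),
\]
converges coordinatewise to $\bl_\infty$ (each entry $0$ or $-\infty$ according to whether $\kappa_j=c_j$ or $\kappa_j>c_j$, by \Cref{lem:qp}), so, $\bY_J$ being a nondegenerate Gaussian, the conditional probability tends to $\P(\bY_J>\bl_\infty)=\P(\bY_J\geq\bl_\infty)$. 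The integrand is dominated by $e^{-\boldsymbol{r}^\top\bh_I}$, which is integrable over $\{\boldsymbol{r}>\bzero\}$ since $\bh_I>\bzero$, with $\int_{\{\boldsymbol{r}>\bzero\}}e^{-\boldsymbol{r}^\top\bh_I}\,d\boldsymbol{r}=\prod_{i\in I}h_i^{-1}$; dominated convergence then gives that the $\boldsymbol{s}$-integral equals $(1+o(1))\,e^{-\bL(t)^\top\bh_I}\,\P(\bY_J\geq\bl_\infty)\prod_{i\in I}h_i^{-1}$. Finally $e^{-\bL(t)^\top\bh_I}=(1+o(1))\exp(-\bz_I^\top\Sigma_I^{-1}\bc_I)\,\mathcal{L}(t)^{-\bw_I^\top\Sigma_I^{-1}\bc_I}$, since $\bL(t)^\top\bh_I=\bz_I^\top\bh_I+\log(\mathcal{L}(t))\,\bw_I^\top\bh_I+o(1)$ and $\bh_I=\Sigma_I^{-1}\bc_I$; multiplying by the prefactor $(2\pi)^{-|I|/2}|\Sigma_I|^{-1/2}u(t)^{-|I|}e^{-\gamma u(t)^2/2}$ reproduces the asserted expression, with $\Upsilon=(2\pi)^{-|I|/2}|\Sigma_I|^{-1/2}\prod_{i\in I}h_i^{-1}\cdot\P(\bY_J\geq\bl_\infty)$.

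The step I expect to be the main obstacle is the dominated-convergence argument, because the lower endpoint $\bL(t)$ is not fixed and diverges to $-\infty$ in the coordinates with $w_i<0$; the hypothesis $\log(\mathcal{L}(t))/u(t)\to0$ is precisely what keeps $\|\bL(t)\|=o(u(t))$, so that the Gaussian quadratic correction $u(t)^{-2}\boldsymbol{s}^\top\Sigma_I^{-1}\boldsymbol{s}$ stays asymptotically negligible on the relevant range of integration and the uniform bound $e^{-\boldsymbol{r}^\top\bh_I}$ is legitimate.
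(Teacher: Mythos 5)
Your proof is correct, but it takes a different route from the paper's. The paper proves this proposition by quoting Corollary~3.3 of \citet{hashorva:2005} for the fixed-shift probability $\P(\bZ>u(t)\bc+\bx)$ and then asserting, ``following arguments analogous to'' Hashorva's proof, that one may substitute the vanishing $t$-dependent shift $\bx^{(t)}=\bz/u(t)+(\log\mathcal{L}(t)/u(t))\bw+o(1/u(t))$ and replace the $J$-block probability by its limit $\P(\bY_J\ge\bl_\infty)$. You instead give a self-contained Laplace-type derivation: condition on $\bZ_I$, observe that all constraints of the reduced program $\mathcal{P}_{\Sigma_I^{-1},\bc_I}$ are active (so $\Sigma_I^{-1}\bc_I=\bh_I>\bzero$ and the quadratic form expands exactly as $u(t)^2\gamma+2\bs^\top\bh_I+u(t)^{-2}\bs^\top\Sigma_I^{-1}\bs$), rescale, and pass to the limit by dominated convergence with the uniform majorant $e^{-\br^\top\bh_I}$. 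The key identities you use ($\bc_I^\top\Sigma_I^{-1}\bc_I=\gamma$, $\Sigma_{JI}\Sigma_I^{-1}\bc_I=\bkappa_J$, $h_i>0$) are all supplied by \Cref{lem:qp}, and your handling of the moving lower limit $\bL(t)=\bz_I+\log(\mathcal{L}(t))\bw_I+o(1)$ — where $\log\mathcal{L}(t)/u(t)\to0$ guarantees both that the quadratic correction is $o(1)$ pointwise and that the $J$-block threshold converges to $\bl_\infty$ — is exactly the stability statement the paper leaves implicit. What the paper's approach buys is brevity; what yours buys is a complete verification of the perturbation step, which is arguably the only nontrivial content of the proposition beyond Hashorva's original result.
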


\begin{proof}
Using \citet[Corollary~3.3]{hashorva:2005} 
we know that for  $\bx\in \R^d$, as $t\to\infty$,
\begin{align*}
        \P(\bZ & >u(t)\bc + \bx) \\
        & =(1+o(1)) {(2\pi)^{-|I|/2}|\Sigma_{I}|^{-1/2}\prod_{i\in I}h_i^{-1} }\times{\P(\bY_J>\bl_\infty+\bx_J-\Sigma_{JI}\Sigma_I^{-1}\bx_I)} \\
        &\qquad\qquad\qquad \times u(t)^{-|I|}\exp\left(-\gamma \frac{u(t)^2}2 - u(t)\bx^\top\Sigma^{-1}\bkappa-\frac12\bx^{\top}\Sigma^{-1}\bx\right).      
    \end{align*}
    Define $\bx^{(t)}:=\frac{\bz}{u(t)}  + \frac{\log(\mathcal{L}(t))}{u(t)}\bw +o\left(\frac{1}{u(t)}\right)$. Hence, $\lim_{t\to\infty}\bx^{(t)}=\bzero$ and we have $\bkappa_J\geq\bc_J$ by \Cref{lem:qp}.
    Thus,
    \beao
    \lim_{t\to\infty}\P\left(\bY_J\ge  u(t)(\bc_J-\bkappa_J)+\bx_J^{(t)}-\Sigma_{JI}\Sigma_I^{-1}\bx_I^{(t)}\right)=\P(\bY_J\ge \bl_\infty).
\eeao
Following arguments  analogous to the proof of Theorem~3.1 and Corollary~3.3 
in \citet{hashorva:2005}, we are allowed to replace  $\bx$ by $\bx^{(t)} $ and
$\P(\bY_J>\bl_\infty+\bx_J-\Sigma_{JI}\Sigma_I^{-1}\bx_I)$ by $\P(\bY_J\ge \bl_\infty)$
to   obtain
\begin{align*}
         \P\Bigg(\bZ > u(t)\bc & + \frac{\bz}{u(t)}  + \frac{\log(\mathcal{L}(t))}{u(t)}\bw +o\left(\frac{1}{u(t)}\right) \Bigg)\\
         &= \P\left(\bZ > u(t)\bc + \bx^{(t)}\right)\\ 
        &=(1+o(1))\Upsilon u(t)^{-|I|}  
        \exp\left(-\gamma \frac{u(t)^2}2 - u(t)\bx^{(t)\top}\Sigma^{-1}\bkappa-\frac12\bx^{(t)\,\top}\Sigma^{-1}\bx^{(t)}\right) 
\intertext{and since $\bx^{(t)}\to \bzero$ and $u(t)\bx^{(t)}\sim \bz  +\log(\mathcal{L}(t))\bw$ as $t\to\infty$ we have the above to be} 
        &=(1+o(1))\Upsilon u(t)^{-|I|}  
        \exp\left(-\gamma \frac{u(t)^2}2 - \bz^\top\Sigma^{-1}\bkappa- \log{\mathcal{L}(t)}\cdot\bw^{\top}\Sigma^{-1}\bkappa \right) \\  
       & =(1+o(1))\Upsilon u(t)^{-|I|} (\mathcal{L}(t))^{-\bw_I\Sigma_{I}^{-1}\bc_I} \exp\left(-\gamma \frac{u(t)^2}2 - \bz_I^\top\Sigma_I^{-1}\bc_I\right),
    \end{align*}
  where in the last step using \Cref{lem:qp} we have $\bw^{\top}\Sigma^{-1}\bkappa= \bw_I\Sigma_{I}^{-1}\bc_I$ and  $\bz^{\top}\Sigma^{-1}\bkappa= \bz_I\Sigma_{I}^{-1}\bc_I$ .  
 \end{proof}

 \begin{proof}[Proof of \Cref{thm:mainrect}] 
Let $\bX\sim F$ so that $X^i\sim F_i$ with $\ov{F}_i(t) \sim \theta_it^{-\alpha_i}, t\to\infty$, for $i\in \mathbb{I}$.  First using \Cref{cor:paretotogaussquantile}, we have
\begin{align*}
     \P(X_i > tx_i, \forall i \in \mathbb{I}) = \P\left(Z_i > z_{x_i}(t), \forall i \in \mathbb{I} \right),
\end{align*}
where $z_{x_i}(t)=\ov \Phi^{-1}(\ov F_{i}(t x_i))$, $i\in \mathbb{I}$, and $\bZ \sim \Phi_{\Sigma}$. 
Defining $\bz := \left(\frac{1}{\sqrt{\alpha_i}}\log\left(\frac{x_i^{\alpha_i}}{2\theta_i\sqrt{\pi\alpha_i}}\right)\right)_{i\in \mathbb{I}}$ and applying \Cref{cor:paretotogaussquantile} we get 
\begin{align} 
    &  \P(X_i > tx_i, \forall i \in \mathbb{I}) \nonumber \\
    & = \P\left(\bZ> \sqrt{2\log t} \sqrt{\balpha} + \frac{\bz}{\sqrt{2\log t}} - \frac{\log\log t}{\sqrt{\log t}}\frac{\sqrt{\balpha^{-1}}}{2}  + o\left(\frac{1}{\sqrt{\log t}}\right)\right) \nonumber\\
                  & = \P\left(\bZ> u(t)  \sqrt{\balpha} + \frac{\bz}{u(t)} - \frac{\log\mathcal{L}(t)}{u(t)}\bw + o\left(\frac{1}{u(t)}\right)\right), \label{eq4}
\end{align}
where $u(t)=\sqrt{2\log t}, \mathcal{L}(t)=\log t$ and $\bw = -\frac{\sqrt{\balpha^{-1}}}{2}$. Since we have $$\lim_{t\to\infty}\log(\mathcal{L}(t))/u(t)= \lim_{t\to\infty}\log(\log t)/\sqrt{\log t} = 0, $$  using \Cref{prop:gausstail}
and denoting $h_i= e_i^{\top}\Sigma_{I}^{-1}\sqrt{\alpha_I}, \Delta= (\sqrt{\balpha_{I}}^{-1})^{\top}\Sigma_{I}^{-1}\sqrt{\balpha_{I}}$,  as $t\to \infty$, we have
\begin{align*}
    \P(X_i > tx_i, \forall i \in \mathbb{I})  & = (1+o(1)) \Upsilon(\sqrt{2\log t})^{-|I|} (\log t)^{\frac{\Delta}{2}}\exp\left(-\gamma \log t- \bz_I^{\top}\Sigma_I^{-1}\sqrt{\balpha_I}\right)\\
     & = (1+o(1))\Upsilon 2^{-\frac{|I|}{2}} (\log t)^{\frac{\Delta-|I|}{2}} t^{-\gamma} \prod_{i\in I} \exp\left(-\frac{h_i}{\sqrt{\alpha_i}}\log\left(\frac{x_i^{\alpha_i}}{2\theta_i\sqrt{\pi\alpha_i}}\right)\right)\\
             & = (1+o(1))\Upsilon 2^{-\frac{|I|}{2}} (2\sqrt{\pi})^{\Delta} t^{-\gamma} (\log t)^{\frac{\Delta-|I|}{2}} \prod_{i\in I} (\theta_i\sqrt{\alpha_i})^{\frac{h_i}{\sqrt{\alpha_i}}} \prod_{i\in I} x_i^{-\sqrt{\alpha_i}h_i}\\
             & = (1+o(1))\Psi t^{-\gamma} (\log t)^{\frac{\Delta-|I|}{2}}\prod_{i\in I} x_i^{-\sqrt{\alpha_i}h_i}
\end{align*}
where $\Psi$ is as defined in \eqref{def:Psis}.
\end{proof}

\subsection{Proof of Proposition \ref{prop:condR}}\label{subsec:proofcondR}

\begin{proof}
For $i=1,2$, we have $\ov{F}_i(t)\sim \theta_it^{-\alpha_i}$ as $t\to\infty$ which we can check is equivalent to  $\ov{F}_i^{\leftarrow}(w/t)\sim (C/w)^{1/\alpha_i} t^{1/\alpha_i}$ as $t\to\infty$ for $w>0$. Let $\bZ= (Z^1, Z^2) \sim \Phi_{\Sigma}$.
    \begin{enumerate}[(i)]
        \item For computing $R_{1,2}(w_1,w_2)$ observe that,
        \begin{align*}
           & t\, \P\left(\ov{F}_1(X^1)\le \frac {w_1}t, \ov{F}_2(X^2)\le \frac {w_2}t\right) \\
           & =  t\, \P\left(X^1 \ge \ov{F}^{\leftarrow}_1\left(\frac {w_1}t\right),X^2 \ge \ov{F}^{\leftarrow}_2\left(\frac {w_2}t\right)\right)\\
           & \sim t\, \P\left(X^1 \ge (\theta_1/w_1)^{1/\alpha_1} t^{1/\alpha_1} ,X^2 \ge (\theta_2/w_2)^{1/\alpha_2} t^{1/\alpha_2} \right)\\
      & = t\, \P\left(Z^i> \sqrt{2\log t} - \frac{1}{\sqrt{2\log t}} \log\left(\frac{1}{2x\sqrt{\pi}}\right) - \frac{\log\log t}{2\sqrt{2\log t}}  + o\left(\frac{1}{\sqrt{\log t}}\right), i=1,2\right) \nonumber\\  
     & = t \times (1+o(1))t^{-\frac{2}{1+\rho}} (\log t)^{-\frac{1}{1+\rho}}  \eta_{w_1,w_2} 
        \end{align*}
   for some $\eta_{w_1,w_2}>0$ using \Cref{prop:gausstail}.  The approximation ``$\sim$" above in the second step can be formalized by bounding the quantity with $\theta_j\pm \epsilon$.  Now since $\rho\in (-1,1)$ we immediately obtain have $R(w_1,w_2)$=0. 
        \item 
        From the proof of \Cref{thm:2dimconv}, for $\rho<\min(\sqrt{\alpha_1/\alpha_2},\sqrt{\alpha_1/\alpha_2})$,  we have $$\ov{G}(t)\sim \Psi t^{-\gamma} (\log t)^{-\delta}, \quad  t\to\infty$$ where $\gamma=\frac{\alpha_1+\alpha_2-2\rho\sqrt{\alpha_1\alpha_2}}{(1-\rho^2)}$ and $\delta = \frac{\rho\gamma}{\sqrt{\alpha_1\alpha_2}}$. This we can check is equivalent to
        \begin{align}
            \ov{G}^{\leftarrow}(w/t)\sim (\Psi\gamma^{\delta}/w)^{1/\gamma} t^{1/\gamma} (\log t)^{\delta}, \quad t\to\infty. \label{eq:ovGinv}
        \end{align}
       Hence for computing $R_{1,12}(w_1,w_2)$ observe that,
        \begin{align}
           & t\, \P\left(\ov{F}_1(X^1)\le \frac {w_1}t, \ov{G}(Y)\le \frac {w_2}t\right) \nonumber \\
           & =  t\, \P\left(X^1 \ge \ov{F}^{\leftarrow}_1\left(\frac {w_1}t\right),\min(X^1, X^2) \ge \ov{G}^{\leftarrow}\left(\frac {w_2}t\right)\right) \nonumber \\
           & \sim t\, \P\Big(X^1 \ge \max\left(t^{1/\alpha_1} (\theta_1/w_1)^{1/\alpha_1}, (\Psi\gamma^{\delta}/w_2)^{1/\gamma} t^{1/\gamma} (\log t)^{\delta}\right), \nonumber \\
           &\qquad\qquad\qquad\qquad\qquad\qquad\qquad\qquad\qquad X^2 \ge (\Psi\gamma^{\delta}/w_2)^{1/\gamma} t^{1/\gamma} (\log t)^{\delta}\Big) \nonumber 
 \intertext{using \eqref{eq:ovGinv} and since $\gamma>\alpha_1$, we have $(\theta_1/x)^{1/\alpha_1}>(\Psi\gamma^{\delta}/w)^{1/\gamma} t^{1/\gamma} (\log t)^{\delta}$ eventually and hence the above}
      & = t\, \P\left(X^1 \ge (\theta_1/w_1)^{1/\alpha_1}t^{1/\alpha_1} ,X^2 \ge (\Psi\gamma^{\delta}/w_2)^{1/\gamma} t^{1/\gamma} (\log t)^{\delta}\right). \label{eq:RX1Y1mid}
      \end{align}
   \emph{Case 1:} Let $\rho<\sqrt{\frac{\alpha_2}{\gamma}}$. Now from  \eqref{eq:RX1Y1mid} we have
      \begin{align*}
        & t\, \P\left(\ov{F}_1(X^1)\le \frac {w_1}t, \ov{G}(Y)\le \frac {w_2}t\right) \\
      & < t\, \P\left(X^1 \ge (\theta_1/w_1)^{1/\alpha_1}t^{1/\alpha_1} ,X^2 \ge (\Psi\gamma^{\delta}/w_2)^{1/\gamma} t^{1/\gamma}\right)\\
       & = t\, \P\Bigg(Z_1> \sqrt{2\log t} - \frac{1}{\sqrt{2\log t}} \log\left(\frac{1}{2w_1\sqrt{\pi}}\right) - \frac{\log\log t}{2\sqrt{2\log t}}  + o\left(\frac{1}{\sqrt{\log t}}\right),  \nonumber\\
        & \qquad    Z_2> \sqrt{\frac{\alpha_2}{\gamma}}\sqrt{2\log t} - \frac{\sqrt{\gamma/\alpha_2}}{\sqrt{2\log t}} \log\left(\frac{\left(\psi\gamma^{\delta}/w_2\right)^{\alpha_2/\gamma}}{2\theta_2\sqrt{\pi\alpha_2/\gamma}}\right) - \frac{\log\log t}{2\sqrt{2\log t}}  + o\left(\frac{1}{\sqrt{\log t}}\right) \Bigg)\\
     & = t \times (1+o(1))t^{-\frac{\alpha_2+ \gamma-2\rho\sqrt{\alpha_2\gamma}}{\gamma(1-\rho^2)}} (\log t)^{-\frac{\rho(\alpha_2+ \gamma)-2\rho^2\sqrt{\alpha_2\gamma}}{2\sqrt{\alpha_2\gamma}(1-\rho^2)}}  \eta^*_{w_1,w_2} 
        \end{align*}
   for some $\eta^*_{w_1,w_2}>0$ using \Cref{prop:gausstail} since $\rho<\sqrt{\frac{\alpha_2}{\gamma}}$. We can check that  $\frac{\alpha_2+ \gamma-2\rho\sqrt{\alpha_2\gamma}}{\gamma(1-\rho^2)}>1$, and hence by taking limits above we get $R_{1,12}=0$.\\
\noindent \emph{Case 2:} Let $\rho=\sqrt{\frac{\alpha_2}{\gamma}}$. Using \Cref{prop:gausstail} and solving the quadratic program $\mathcal{P}_{\Sigma^{-1}, (1,\sqrt{\alpha_2/\gamma})}$ we can check by following similar arguments as in the proof of \Cref{cor:2dim} that
 \[R_{1,12} = w_1/2.\]
 \noindent \emph{Case 3:} Let $\rho>\sqrt{\frac{\alpha_2}{\gamma}}$. The result follows again in a similar manner as above.
   \item The limits of $R_{2,12}$ can be found by following similar arguments as in proving \eqref{eq:RX1Y}.
   
    \end{enumerate}
\end{proof}

\begin{remark}
    The conclusion in \eqref{eq:RX1X2} can also be shown  using asymptotic independence of the Gaussian copula, cf. \cite{hua:joe:2011, das:fasen:2018}. The other two conclusions require subtler computation as exhibited above.
\end{remark}

\section{Auxiliary result for Section \ref{sec:parmeterest}}\label{sec:aux}

\begin{lemma}\label{lem:solverho}
    Let $\hat{\alpha}_1>0, \hat{\alpha}_2>0, \hg>0$. If $\rho\in (-1,1)$ and  $\rho<\min\left(\sqrt{{\hat{\alpha}_2}/{\hat{\alpha}_1}}, \sqrt{{\hat{\alpha}_1}/{\hat{\alpha}_2}}\right)$, then the unique solution to 
    \begin{align}\label{eq:quad}
      g(\rho)= \hg\rho^2-2\sqrt{\hat{\alpha}_1\hat{\alpha}_2}\rho+(\hat{\alpha}_1+\hat{\alpha}_2-\hg)=0
     \end{align}
    is given by
    \[\hat{\rho} = \left({\sqrt{\hat{\alpha}_1\hat{\alpha}_2}-\sqrt{\hat{\alpha}_1\hat{\alpha}_2+\hg^2-\hg(\hat{\alpha}_1+\hat{\alpha}_2)}}\right)\Big/{\hg}.\]
\end{lemma}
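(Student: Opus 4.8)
The plan is to solve the quadratic $g(\rho)=0$ in closed form and then decide which of its two roots falls in the feasibility window $(-1,\rho^{*})$, where $\rho^{*}:=\min\bigl(\sqrt{\hat{\alpha}_2/\hat{\alpha}_1},\sqrt{\hat{\alpha}_1/\hat{\alpha}_2}\bigr)\in(0,1]$. Since the leading coefficient $\hg$ is positive, $g$ is a genuine quadratic, and I would first record its discriminant,
\[
\bigl(2\sqrt{\hat{\alpha}_1\hat{\alpha}_2}\bigr)^{2}-4\hg\bigl(\hat{\alpha}_1+\hat{\alpha}_2-\hg\bigr)=4(\hg-\hat{\alpha}_1)(\hg-\hat{\alpha}_2)=4\bigl(\hat{\alpha}_1\hat{\alpha}_2+\hg^{2}-\hg(\hat{\alpha}_1+\hat{\alpha}_2)\bigr),
\]
so that the two roots are $\rho_{\pm}=\frac{\sqrt{\hat{\alpha}_1\hat{\alpha}_2}\,\pm\,\sqrt{\hat{\alpha}_1\hat{\alpha}_2+\hg^{2}-\hg(\hat{\alpha}_1+\hat{\alpha}_2)}}{\hg}$, and the smaller one, $\rho_{-}$, is exactly the claimed $\hat{\rho}$. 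Reality of the roots is equivalent to $\hg\ge\max(\hat{\alpha}_1,\hat{\alpha}_2)$; in the application of \Cref{thm:2dimconv} this is automatic because $\hg$ approximates $\gamma=(\alpha_1+\alpha_2-2\rho\sqrt{\alpha_1\alpha_2})/(1-\rho^{2})>\max(\alpha_1,\alpha_2)$ for feasible $\rho$, cf. \Cref{rem:2dimprobext}.

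Next I would locate the roots by evaluating $g$ at the endpoints of the window. Taking without loss of generality $\hat{\alpha}_1\le\hat{\alpha}_2$, so $\rho^{*}=\sqrt{\hat{\alpha}_1/\hat{\alpha}_2}$, a short computation gives $g(-1)=(\sqrt{\hat{\alpha}_1}+\sqrt{\hat{\alpha}_2})^{2}>0$ and $g(\rho^{*})=(\hat{\alpha}_2-\hat{\alpha}_1)(\hat{\alpha}_2-\hg)/\hat{\alpha}_2\le 0$, the latter because $\hg\ge\hat{\alpha}_2$. Since the parabola opens upward, it has exactly one zero in $(-1,\rho^{*}]$, necessarily the smaller root $\rho_{-}$, while the larger root satisfies $\rho_{+}\ge\rho^{*}$ and hence lies outside the feasibility window; this identifies $\hat{\rho}=\rho_{-}$ as the unique feasible solution. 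I would also remark that uniqueness can be seen directly from the identity $g(\rho)=(1-\rho^{2})\bigl(\varphi(\rho)-\hg\bigr)$ valid on $(-1,1)$, where $\varphi(\rho):=(\hat{\alpha}_1+\hat{\alpha}_2-2\rho\sqrt{\hat{\alpha}_1\hat{\alpha}_2})/(1-\rho^{2})$: the numerator of $\varphi'(\rho)$ factors as $-2(\sqrt{\hat{\alpha}_2}-\rho\sqrt{\hat{\alpha}_1})(\sqrt{\hat{\alpha}_1}-\rho\sqrt{\hat{\alpha}_2})$, both factors being strictly positive on $(-1,\rho^{*})$, so $\varphi$ is strictly decreasing there and can hit the value $\hg$ at most once.

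The computations involved — the discriminant factorisation, the two endpoint evaluations of $g$, and (for the alternative argument) the factorisation of the numerator of $\varphi'$ — are elementary, so there is no genuine conceptual obstacle; the only point that needs care is the degenerate case $\hg=\max(\hat{\alpha}_1,\hat{\alpha}_2)$, where $\rho^{*}$ becomes a double root of $g$ and the open window $(-1,\rho^{*})$ contains no solution. This is consistent with the strict inequality $\rho<\rho^{*}$ in the hypothesis and is irrelevant for the estimation context, where $\hg$ approximates a value strictly larger than $\max(\alpha_1,\alpha_2)$.
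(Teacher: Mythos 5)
Your proof is correct, and it takes a genuinely different route from the paper's. The paper writes down both roots $\hat{\rho}$ and $\tilde{\rho}$, verifies nonnegativity of the discriminant by substituting $\hg=(\hat{\alpha}_1+\hat{\alpha}_2-2\rho\sqrt{\hat{\alpha}_1\hat{\alpha}_2})/(1-\rho^2)$ (i.e.\ by implicitly assuming a feasible solution exists), and then eliminates the larger root $\tilde{\rho}$ by contradiction: it notes $\tilde{\rho}>0$, and for $0\le\rho<\sqrt{\hat{\alpha}_2/\hat{\alpha}_1}$ derives the bound $\hg<2\sqrt{\hat{\alpha}_1\hat{\alpha}_2}\big/\bigl(\sqrt{\hat{\alpha}_2/\hat{\alpha}_1}+\rho\bigr)$, from which $\rho=\tilde{\rho}\ge\sqrt{\hat{\alpha}_1\hat{\alpha}_2}/\hg$ would force $\rho>\sqrt{\hat{\alpha}_2/\hat{\alpha}_1}$. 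You instead locate the roots by a sign analysis of the upward-opening parabola at the endpoints $-1$ and $\rho^{*}$ of the feasibility window, which pins the smaller root inside and the larger root outside; your alternative argument via the factorisation of $\varphi'$ and the strict monotonicity of $\varphi=\gamma(\cdot)$ on $(-1,\rho^{*})$ is cleaner still, since it gives uniqueness without invoking the root formula at all. A benefit of your version is that it makes explicit the condition $\hg\ge\max(\hat{\alpha}_1,\hat{\alpha}_2)$ that the paper's discriminant computation leaves implicit. One small imprecision: reality of the roots is equivalent to $\hg\notin\bigl(\min(\hat{\alpha}_1,\hat{\alpha}_2),\max(\hat{\alpha}_1,\hat{\alpha}_2)\bigr)$, not only to $\hg\ge\max(\hat{\alpha}_1,\hat{\alpha}_2)$; in the excluded branch $\hg\le\min(\hat{\alpha}_1,\hat{\alpha}_2)$ one checks there is no root in the window, so the lemma is vacuous there and your restriction is harmless, especially since, as you note, the downstream application always has $\hg$ approximating a $\gamma$ strictly exceeding $\max(\alpha_1,\alpha_2)$ by \Cref{rem:2dimprobext}.
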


\begin{proof}
The quadratic equation \eqref{eq:quad} has two solutions:
\begin{align*}
\hat{\rho} & = \left({\sqrt{\hat{\alpha}_1\hat{\alpha}_2}-\sqrt{\hat{\alpha}_1\hat{\alpha}_2+\hg^2-\hg(\hat{\alpha}_1+\hat{\alpha}_2)}}\right)\Big/{\hg},\\
\tilde{\rho} &= \left({\sqrt{\hat{\alpha}_1\hat{\alpha}_2}+\sqrt{\hat{\alpha}_1\hat{\alpha}_2+\hg^2-\hg(\hat{\alpha}_1+\hat{\alpha}_2)}}\right)\Big/{\hg}.
\end{align*}
W.l.o.g. assume $\hat{\alpha}_1\ge\hat{\alpha}_2>0$. Both solutions $\hat{\rho}$ and $\tilde{\rho}$ are real-valued as 
\begin{align*}
  D(g) & = \hat{\alpha}_1\hat{\alpha}_2+\hg^2-\hg(\hat{\alpha}_1+\hat{\alpha}_2) = \hat{\alpha}_1\hat{\alpha}_2+\hg(\hg-\hat{\alpha}_1-\hat{\alpha}_2)\\
      & = \hat{\alpha}_1\hat{\alpha}_2+ \left(\hat{\alpha}_1+ {\left(\sqrt{\hat{\alpha}_2}-\rho\sqrt{\hat{\alpha}_1}\right)^2}\Big/({1-\rho^2})\right)\left({\left(\sqrt{\hat{\alpha}_2}-\rho\sqrt{\hat{\alpha}_1}\right)^2}\Big/({1-\rho^2}) -\hat{\alpha}_2\right)\\
      & = \frac{\left(\sqrt{\hat{\alpha}_2}-\rho\sqrt{\hat{\alpha}_1}\right)^4}{(1-\rho^2)^2} + (\alpha_1-\alpha_2)\frac{\left(\sqrt{\hat{\alpha}_2}-\rho\sqrt{\hat{\alpha}_1}\right)^2}{(1-\rho^2)} \ge 0 \quad (\text{since  w.l.o.g. $\hat{\alpha}_1\ge\hat{\alpha}_2$})
\end{align*}
 Now, we need to show that  $\tilde{\rho}$ cannot be a valid solution to $g(\rho)=0$.  By our conditions, $\rho$ must satisfy $-1<\rho<\sqrt{{\hat{\alpha}_2}/{\hat{\alpha}_1}}$. Since $\tilde{\rho}>0$ always, it precludes solutions for which $-1<\rho<0$. Now if $0\le \rho<\sqrt{{\hat{\alpha}_2}/{\hat{\alpha}_1}}$, then rewriting $g(\rho)=0$ we have
\begin{align}
    \hg  = \frac{\hat{\alpha}_1+\hat{\alpha}_2-2\rho\sqrt{\hat{\alpha}_1\hat{\alpha}_2}}{1-\rho^2}  & = \hat{\alpha}_1\left[1+{\left(\sqrt{\frac{\hat{\alpha}_2}{\hat{\alpha}_1}}-\rho\right)^2}\Bigg/({1-\rho^2})\right] \nonumber\\ 
        & < \hat{\alpha}_1\left[1+{\left(\sqrt{\frac{\hat{\alpha}_2}{\hat{\alpha}_1}}-\rho\right)^2}\Bigg/\left({\frac{\hat{\alpha}_2}{\hat{\alpha}_1}-\rho^2}\right)\right]  = \frac{2\sqrt{\hat{\alpha}_1\hat{\alpha}_2}}{\sqrt{\frac{\hat{\alpha}_2}{\hat{\alpha}_1}}+\rho}. \label{ineq:forgamma}
\end{align}
The inequality above holds since $\hat{\alpha}_2<\hat{\alpha}_1$  and $\rho<\sqrt{{\hat{\alpha}_2}/{\hat{\alpha}_1}}$. Hence, if $\tilde{\rho}$ is a valid solution to \eqref{eq:quad}, then using \eqref{ineq:forgamma} we have
\begin{align*}
    \rho & \ge \frac{\sqrt{\hat{\alpha}_1\hat{\alpha}_2}}{\hg}> \frac12\left(\sqrt{\frac{\hat{\alpha}_2}{\hat{\alpha}_1}}+\rho\right)   
\end{align*}
which implies $\rho>\sqrt{{\hat{\alpha}_2}/{\hat{\alpha}_1}}$, a contradiction to our assumption. Hence $\hat{\rho}$ is the unique valid solution to \eqref{eq:quad}.
\end{proof}

\end{document}